\newtheorem{Defn}{Definition}
\newtheorem{Prop}{Proposition}
\def\BibTeX{{\rm B\kern-.05em{\sc i\kern-.025em b}\kern-.08em
		T\kern-.1667em\lower.7ex\hbox{E}\kern-.125emX}}
\begin{document}
\newcommand\semiHuge{\fontsize{21.7}{31.38}\selectfont}
	\title{\semiHuge Future Resource Bank for ISAC: Achieving Fast and Stable Win-Win Matching for Both Individuals and Coalitions	}
\author{Houyi Qi, Minghui Liwang, \IEEEmembership{Senior Member}, \IEEEmembership{IEEE}, Seyyedali Hosseinalipour, \IEEEmembership{Senior Member}, \IEEEmembership{IEEE}, \\Liqun Fu, \IEEEmembership{Senior Member}, \IEEEmembership{IEEE}, Sai Zou, \IEEEmembership{Senior Member}, \IEEEmembership{IEEE}, and Wei Ni, \IEEEmembership{Fellow}, \IEEEmembership{IEEE}
	\thanks{H. Qi (houyiqi@tongji.edu.cn) and M. Liwang (minghuiliwang@tongji.edu.cn) are with State Key Laboratory of Autonomous Intelligent Unmanned Systems, Frontiers Science Center for Intelligent Autonomous Systems, Ministry of Education, Shanghai Key Laboratory of Intelligent Autonomous Systems, Tongji University, Shanghai, China. L. Fu (liqun@xmu.edu.cn) is with School of Informatics, Xiamen University, Xiamen, China. S. Hosseinalipour (alipour@buffalo.edu) is with Department of Electrical Engineering, University at Buffalo--SUNY, NY, USA. S. Zou (dr-zousai@foxmail.com) is with College of Big Data and Information Engineering, Guizhou University, Guiyang, China. W. Ni (Wei.Ni@ieee.org) is with Data61, CSIRO, Sydney, Australia. 
    
    Corresponding author: Minghui Liwang (minghuiliwang@tongji.edu.cn)}}

	\IEEEtitleabstractindextext{
		\begin{abstract}
			\justifying
Future wireless networks must support emerging applications where environmental awareness is as critical as data transmission. Integrated Sensing and Communication (ISAC) enables this vision by allowing base stations (BSs) to allocate bandwidth and power to mobile users (MUs) for communications and cooperative sensing. However, this resource allocation is highly challenging due to: \textit{(i)} dynamic resource demands from MUs and resource supply from BSs, and \textit{(ii)} the selfishness of MUs and BSs. To address these challenges, existing solutions rely on either real-time (online) resource trading, which incurs high overhead and failures, or static long-term (offline) resource contracts, which lack flexibility. To overcome these limitations, we propose the \textit{Future Resource Bank for ISAC}, a hybrid trading framework that integrates offline and online resource allocation through a level-wise client model, where MUs and their coalitions negotiate with BSs. We introduce two mechanisms: \textit{(i)} Offline Role-Friendly Win-Win Matching (offRFW$^2$M), leveraging overbooking to establish risk-aware, stable contracts, and \textit{(ii)} Online Effective Backup Win-Win Matching (onEBW$^2$M), which dynamically reallocates unmet demand and surplus supply. We theoretically prove stability, individual rationality, and weak Pareto optimality of these mechanisms. Through comprehensive experiments, we show that our framework improves social welfare, latency, and energy efficiency compared to existing methods.
		\end{abstract}
		
		\begin{IEEEkeywords}
			Integrated Sensing and Communication (ISAC), Resource Trading, Stable Matching, Collaborative Sensing.
		\end{IEEEkeywords}
}
	
\maketitle
 \IEEEdisplaynontitleabstractindextext

\IEEEpeerreviewmaketitle

\section{Introduction}
\IEEEPARstart{T}{he} rapid advancement of wireless communications and artificial intelligence (AI) is driving innovations in autonomous driving, smart manufacturing, and remote healthcare, increasing the demand for real-time environmental awareness\cite{SURVEY 1}. Integrated Sensing and Communication (ISAC) has emerged as a key enabler of these innovations, seamlessly integrating sensing, positioning, and communication to support these complex applications\cite{SURVEY 2, SURVEY 3}.
Despite its transformative potential, ISAC faces critical resource allocation challenges. A key obstacle is the need to meet diverse and stringent requirements of various applications despite limited resource availability. Also, the inherently dynamic nature of ISAC networks, driven by mobile devices and evolving service demands, calls for frequent interactions between base stations (BSs), acting as resource providers, and mobile users (MUs), serving as requestors. These continuous interactions impose substantial overhead,  complicating the delivery of communication and sensing services. 
Moreover, the selfish behavior of BSs and MUs complicates coordination, as both parties prioritize individual gains over overall network efficiency\cite{SURVEY 1}. To address these challenges, integration of a \textit{resource trading market} within ISAC networks is promising. By incorporating economic incentives, such as pricing and trading mechanisms\cite{price1,price2,price3}, this trading-based approach can foster a mutually beneficial ecosystem, where providers (sellers) maximize their revenues while requestors (buyers) optimize their service-related profits.

\subsection{Challenges and Motivations}
While market-driven ISAC networks hold promise, their practical design present several unresolved challenges. To address them, we identify key research questions that shape our approach and define the core motivations driving this work.

\noindent
$\bullet$ \textit{Research Question 1: What is Online Resource Trading in A Market-Driven Network, 	and How Can We Overcome Its Limitations?}
Online resource trading, widely adopted in existing studies \cite{Spot 1,Spot 2,Spot 3}, refers to a real-time procedure where buyers and sellers establish agreements in real-time based on current resource supply/demand and network conditions. Despite being flexible and adaptive, this approach has notable drawbacks in dynamic ISAC environments.
A key limitation is its high operational overhead \cite{future 1,future 2}, as negotiating resource quantities, pricing, and agreements consumes time and resources, diverting attention from actual service delivery. Also, fluctuating resource availability increases the risk of buyers failing to secure services despite prolonged negotiations \cite{future 1}.

To mitigate these limitations, a promising alternative is offline trading, which leverages historical/predictive data to establish long-term contracts, reducing the overhead of real-time decisions\cite{RW Matching3}. These contracts improve time efficiency by enabling instant execution when needed. However, a key challenge remains: ensuring their applicability in time-varying ISAC networks, leading to our next research question.

\noindent
$\bullet$ \textit{Research Question 2: Can Offline Trading Fit into Dynamic ISAC Networks? If Not, How Can Its Advantages Be Integrated into a Broader Framework for Greater Adaptability?}
Offline trading involves pre-signing long-term contracts ahead of actual resource allocation. However, ISAC networks are inherently dynamic and subject to uncertainties arising from mobile device movement, time-varying channel conditions, and fluctuating resource supply and demand. These factors can limit the effectiveness of pre-established/long-term contracts\cite{future 1}.
To address this, one solution is to assess key sources of uncertainty in ISAC networks, and then employ advanced risk management techniques to ensure that these contracts remain viable. 
Additionally, overbooking\footnote{Overbooking is a widely used economic strategy in airlines \cite{overbook1}, hotels \cite{overbook2}, and telecommunications \cite{overbook3} industries that addresses the limitations of booking methods in handling fluctuating resource demand and supply.}, a widely used strategy in other industries\cite{RW Matching3}, represents a promising tool in resource allocation. It allows sellers to allocate more resources than their theoretical supply, thereby improving resource availability and mitigating fluctuations in supply and demand. Subsequently, \textit{combining online and risk-controlled, overbooking-empowered offline resource allocation in ISAC networks can form a hybrid market with significant advantages.}

Nevertheless, when offline and online modes coexist, a significant challenge arises: the overlapping service demands across buyers can complicate resource allocation and create redundancies, e.g., multiple buyers may target the same sensing objective (e.g., air quality monitoring or hazard detection). Therefore, efficient coordination, whether through competition or collaboration, is crucial for optimizing resource utilization. Further complicating this issue is the lack of well-defined revenue distribution mechanisms in collaborative settings. In particular, if buyers form sensing coalitions to share resources, ensuring fair cost allocation and incentive structures becomes a critical yet complex issue. These considerations make designing a framework that balances cooperation, fairness, and efficiency while adapting to dynamic ISAC conditions an open research problem, leading to our next research question.

\noindent
$\bullet$ \textit{Research Question 3: In a Dynamic ISAC Network, Should Resource Buyers Compete for Resources Individually or Collaborate for Mutual Benefits?}
In a resource trading market over ISAC networks, maximizing utility for both buyers and sellers while ensuring efficient resource utilization is essential. Existing efforts primarily focus on individual MUs acquiring resources (e.g., bandwidth and power) from BSs to fulfill their communication and sensing demands \cite{ISAC}. However, in many scenarios, multiple MUs share the same sensing objectives, where performing separate sensing tasks results in redundancy and inefficient resource utilization. To this end, in this work, we propose \textit{sensing coalitions} allowing for MUs with identical sensing targets to form cooperative groups: instead of competing for resources individually, a coalition collectively requests a sensing service from BSs. The sensing results can then be shared among all coalition members. This cooperation benefits both MUs and BSs: MUs receive higher-quality sensing services, while BSs optimize resource utilization by handling aggregated requests rather than redundant demands, establishing a win-win paradigm for profit enhancement for both parties.

Addressing the above three key research questions serves as the core motivation of this work. In a nutshell, we study the resource trading problem in dynamic ISAC networks, where multiple BSs act as resource sellers and multiple MUs serve as buyers, each requiring two types of services, communication and sensing, both demanding bandwidth and power resources. To optimize resource allocation and service efficiency, we propose the \textit{Future Resource Bank for ISAC Networks}, a novel framework that \textit{interprets resource trading as a structured financial system with both offline and online modes}. In this model, BSs function as bank presidents, while MUs are classified into two levels of clients. For communication services, each MU acts as an individual client, directly requesting resources from the appropriate BS. For sensing services, MUs with a shared sensing objective can form a coalition, functioning as a team client that collectively requests resources. 

\subsection{Related Works}
In the following, we provide a review of related work from three interrelated domains: \textit{(i)} resource allocation in conventional wireless communication and sensor networks, \textit{(ii)}  resource allocation in ISAC networks, and \textit{(iii)} matching-based resource provisioning. We also highlight the novelty of our proposed framework and compare it with the existing literature in these domains.

\begin{table*}[b!] 
\vspace{-5mm}
	{\footnotesize
		\caption{A summary of related studies (Sta.: Stability, Ind.: Individual rationality, Fai.: Fairness, NonW.: Non-wastefulness)}  
		\begin{center}\label{table_RW}
				\begin{tabular}{|c|c|c|c|c|p{2.8cm}|c|c|c|c|}
					\hline
					\multirow{2}{*}{\textbf{Reference}} & \multicolumn{2}{c|}{\textbf{Network service type}} & \multicolumn{2}{c|}{\textbf{Trading mode}}&\multirow{2}{*}{\textbf{Resource type}}&\multirow{2}{*}{\textbf{Matching mechanism}}&\multicolumn{3}{c|}{\textbf{Key property}} \\  \cline{2-5}\cline{8-10} 
					&\makecell[c]{Commication}&\makecell[c]{Sensing}&\makecell[c]{Online}&\makecell[c]{Offline}&&&Sta.&\makecell[c]{Ind.}&Fai./NonW.\\ \hline
					\makecell[l]{\cite{TWN1}} &$\surd$& & &$\surd$&Power, time& & &$\surd$&\\ \hline
					\makecell[l]{\cite{TWN2}} &$\surd$& &$\surd$& &Power& & &$\surd$&\\ \hline
					\makecell[l]{\cite{TWN3}} &$\surd$& &$\surd$& &Spectrum, energy, encoding rate& & & &\\ \hline
					\makecell[l]{\cite{TWSN1}} & &$\surd$& &$\surd$&Transmission power, duration& & & &\\ \hline
				\makecell[l]{\cite{TWSN2}} & &$\surd$& &$\surd$&Antenna, beam direction, transmission power& & & &\\ \hline
				\makecell[l]{\cite{RW ISAC1}} &$\surd$&$\surd$&$\surd$&&Power, bandwidth& & &$\surd$&$\surd$\\ \hline
				\makecell[l]{\cite{SURVEY 3}} &$\surd$&$\surd$&$\surd$&$\surd$&Power, time, antenna& & & &\\ \hline
				\makecell[l]{\cite{RW ISAC3}} &$\surd$ &$\surd$&$\surd$ &&Computing resources, bandwidth& && $\surd$&\\ \hline
				\makecell[l]{\cite{RW ISAC4}} &$\surd$&$\surd$&$\surd$ && Power& & & &\\ \hline
				\makecell[l]{\cite{RW Matching1}} & $\surd$&& $\surd$&&Resource block&$\surd$&$\surd$& &\\ \hline
				\makecell[l]{\cite{RW Matching2}} &$\surd$&&$\surd$& &Travel resources, prices, service quality&$\surd$&$\surd$&$\surd$&$\surd$\\ \hline
				\makecell[l]{\cite{RW Matching3}} &$\surd$& &$\surd$&$\surd$&Computing resources&$\surd$&$\surd$&$\surd$&$\surd$\\ \hline
				\makecell[l]{\cite{RW Matching4}, \cite{RW Matching5}} & $\surd$&& $\surd$&&Computing resources&$\surd$&$\surd$& $\surd$&\\ \hline
				\makecell[l]{Our work} &$\surd$&$\surd$&$\surd$&$\surd$&Power, bandwidth&$\surd$ &$\surd$&$\surd$&$\surd$\\ \hline
			\end{tabular}
	\end{center}}
\end{table*}           
{\subsubsection{Resource Allocation in Conventional Wireless Communication and Sensor Networks}

Resource allocation has been a cornerstone in both traditional wireless communication systems and wireless sensor networks, serving as a foundational pillar for optimizing communication and sensing services under diverse network operational constraints. \textit{Focusing on wireless communication systems}, existing studies have primarily focused on throughput enhancement, interference mitigation, and energy/time efficiency improvement. For example, \textit{Yao et al.} \cite{TWN1} proposed a heterogeneous composite fading channel model tailored to 5G-drone emergency wireless communications, and derived closed-form solutions for power allocation and bandwidth optimization to enhance the network throughput and energy efficiency in post-disaster scenarios. \textit{Lee et al.} \cite{TWN2} developed a deep learning-based resource allocation framework for device-to-device communication in multi-channel cellular networks, achieving near-optimal spectral efficiency via joint channel assignment and power control. \textit{Zhang et al.} \cite{TWN3} explored a semantic-aware resource allocation scheme for energy harvesting cognitive radio networks, employing deep reinforcement learning to jointly optimize transmit power, time slot allocation, and semantic compression, thereby maximizing long-term quality of experience in task-oriented semantic communications.
\textit{Focusing on wireless sensor networks}, the research emphasis has been on energy-efficient sensing, prolonged network operation, and high-precision target tracking. For instance, \textit{Huang et al.} \cite{TWSN1} proposed a joint polarization mode, beamforming, and power allocation strategy for radar systems, integrating radar cross-section prediction with posterior Cramér-Rao lower bound (PCRLB)-based optimization to improve multi-target tracking accuracy.  \textit{Yang et al.} \cite{TWSN2} tackled joint antenna selection and transmit–receive beamforming optimization under signal interference, designing a block coordinate descent algorithm to minimize antenna usage while maintaining tracking performance based on PCRLB.

Although the above efforts have contributed valuable insights into resource allocation/management for either \textit{communication} or \textit{sensing}, they generally treat these two functionalities in isolation and adopt decoupled optimization strategies. Nevertheless, ISAC networks necessitate the joint allocation of resources such as bandwidth and power to simultaneously support communication and sensing functionalities/services. This dual-service integration results in tightly coupled resource demands, more complex performance requirements, and novel resource management challenges that have led to a new wave of research dedicated to resource allocation in ISAC networks.
}

\subsubsection{Resource Allocation in ISAC Networks} {To accommodate the unique features of ISAC networks, recent studies have explored resource allocation strategies that concurrently address the requirements of both communication and sensing services \cite{RW ISAC1,SURVEY 3,RW ISAC3,RW ISAC4}}. For instance,  \textit{Li et al.} \cite{RW ISAC1} introduced a value-of-service-oriented resource allocation scheme for multi-MU collaborative ISAC networks, facilitating concurrent heterogeneous service provisioning. \textit{Dong et al.} \cite{SURVEY 3} developed an ISAC resource allocation framework, integrating sensing quality of service (QoS) to optimize resources for diverse applications. \textit{Hu et al.}\cite{RW ISAC3} proposed an ISAC-aided edge computing framework, addressing the rapid proliferation of vehicles and the growing demand for integrating vehicular communications within computing services. \textit{Wang et al.}\cite{RW ISAC4} explored a resource-optimized ISAC architecture for green ad-hoc networks, examining the intricate interplay between environmental sensing and data transmission performance.

Despite offering valuable contributions, the above efforts have primarily relied on onsite (i.e., online) decision-making approaches for resource allocation, which often lead to extended delays and potential service disruptions. Subsequently, our work introduces a hybrid resource trading framework that integrates \textit{offline} and \textit{online} resource trading modes, leveraging \textit{(i)} long-term contracts to reduce real-time decision-making overhead and \textit{(ii)} online trading as a backup mechanism to handle dynamic fluctuations in resource demand and supply. 

\subsubsection{Matching-Based Resource Provisioning}
This work develops one of the first efficient matching/assignment mechanisms between clients (MUs and coalitions) and BSs to ensure responsive and cost-effective resource provisioning in ISAC networks. Similar methods have gained attention in other domains such as edge computing and Internet of Things (IoT), where matching-driven resource allocation is used for balancing diverse demands with available resource supplies \cite{RW Matching1,RW Matching2,RW Matching3,RW Matching4,RW Matching5}. \textit{Ye et al.} \cite{RW Matching1} combined deep reinforcement learning with stable matching to enable adaptive resource allocation in mobile crowdsensing. \textit{Xu et al.} \cite{RW Matching2} introduced a three-sided stable matching with an optimal pricing scheme for distributed vehicular networks. \textit{Qi et al.} \cite{RW Matching3} explored cross-layer pre-matching mechanisms to achieve cost-effective resource trading in cloud-aided edge networks. \textit{Sharghivand et al.} \cite{RW Matching4} considered QoS in terms of service response time and proposed a matching model between cloudlets and IoT applications. \textit{Du et al.} \cite{RW Matching5} developed a matching-based approach for computing resource management in small-cell networks, optimizing resource allocation and service pricing.

{Although the above prior works have made notable advancements, they  are designed to handle isolated resources (e.g., bandwidth, computation, or service-specific capabilities) across participants in a one-to-one fashion. Nevertheless, in emerging ISAC networks, communication and sensing services coexist and compete for the limited physical resources, which fundamentally changes the matching landscape as follows. \textit{(i) Heterogeneous resource coupling:} Achieving both reliable data transmission and high-accuracy sensing requires the joint allocation of bandwidth and power. \textit{(ii) Bidirectional incentives:} For BSs, allocating extra power to improve sensing may sacrifice spectral efficiency and reduce communication performance/revenue; for MUs (or their coalitions), higher sensing fidelity often implies paying more for bandwidth–power bundles. These cross-coupled incentives make it difficult to find a stable matching. \textit{(iii) Stringent trade-offs:} The dual-service provisioning in ISAC networks  forces tighter latency and reliability constraints than in conventional communication and sensor networks: any feasible matching must balance multi-objective utilities while accounting for multi-dimensional risks faced by both MUs and BSs.}
{These intertwined factors complicate the resource-to-user matching problem in ISAC networks, rendering traditional single-resource matching algorithms inadequate and motivating the coordinated and risk-aware matching mechanisms developed in this paper.} To further clarify our contributions,  we provide a comparison between this study and a representative set of  existing studies in Table~\ref{table_RW}.

{In summary, by incorporating \textit{(i)} joint optimization of communication and sensing services, \textit{(ii)} stable demand-resource matching for coupled resources across these two services, and \textit{(iii)} a hybrid offline-online resource-trading mechanism, we pioneer a low-overhead and economically viable resource allocation framework for ISAC networks.}

\subsection{Overview and Summary of Contributions}

The core principle of our proposed approach in this work is to develop one of the first \textit{time-efficient} and \textit{mutually beneficial} matching mechanisms between BSs and both individual and coalition clients in dynamic ISAC networks. In this regard, our key contributions can be summarized as follows:

\noindent
~$\bullet$ To enable efficient service delivery in dynamic ISAC networks, we introduce the \textit{Future Resource Bank for ISAC}, a novel framework that integrates both offline and online trading modes to balance long-term stability with real-time adaptability. 
Further, unlike existing approaches that treat MUs as independent requestors, our framework incorporates \textit{both} individual and coalition-based resource allocation.

\noindent
~$\bullet$ Our proposed offline trading mode enables BSs and clients to \textit{pre-sign risk-aware long-term contracts} for both communication and sensing services, reducing real-time decision-making overhead. Further, in this mode, we introduce the concept of \textit{overbooking} to the ISAC literature, designed to address the uncertainty in MU resource demands. This concept allows BSs to allocate more resources than their theoretical supply, effectively mitigating fluctuations in resource demand.

\noindent
~$\bullet$
To establish feasible and mutually beneficial long-term  contracts between BSs and clients, we propose the \textit{Role-Friendly Win-Win Matching for Offline Trading} (offRFW$^2$M). We further prove that it satisfies key properties of \textit{stability}, \textit{individual rationality}, and \textit{weak Pareto optimality}, ensuring sustainability, incentivization, and efficiency in ISAC networks.

\noindent
~$\bullet$ Noting that the practical implementation of long-term contracts may suffer from performance degradations due to the intermittent participation of MUs and unpredictable resource demands, we introduce an online trading mode as a \textit{complementary backup mechanism to ensure resource allocation adaptability to the real-time ISAC network conditions}. 

\noindent
~$\bullet$ For the online trading mode, we first 
design a greedy-based algorithm that selects \textit{volunteer clients} willing to forgo their services in exchange for compensation, balancing excessive real-time demand without disrupting overall network performance.
We then enable individual MUs or coalitions with unmet demands to compete for available resources at BSs with surplus supply in real-time through proposing the Effective Backup Win-Win Matching for Online Trading (onEBW$^2$M).
We theoretically prove that onEBW$^2$M retains key properties of stability, individual rationality, and weak Pareto optimality.

\noindent
~$\bullet$ We conduct extensive experiments using both synthetic simulations and real-world datasets. The results demonstrate the superior performance of our method across key evaluation metrics, including social welfare, time efficiency, and energy efficiency compared to the existing baselines.

\noindent\textbf{\textit{Notations:}} $(\cdot)^\mathsf{(c)}$ and $(\cdot)^\mathsf{(s)}$ represent super-scripts related to communication services and sensing services, respectively;
$(\cdot)^\top$ and $(\cdot)^\mathsf{H}$ describe transpose and Hermitian, respectively. $j = \sqrt{-1}$; $\|\mathbf{a}\|$ is the Euclidean 2-norm of vector $\mathbf{a}$; $\{\cdot\}^\text{Re}$ is the real component of complex variable. $\operatorname{tr}(\mathbf{A})$ denotes the trace of matrix $\mathbf{A}$ and $\operatorname{diag}\{\mathbf{a}\}$ indicates the diagonal matrix built upon the elements of vector $\mathbf{a}$.

\section{Overview and Key System Components}
\noindent In this section, we first offer an overview of our methodology in Sec.~\ref{sec:over}. Then, we present the modeling of MUs and BSs in
Sec.~\ref{sec:baseModel}, and finally introduce the ISAC model in
Sec.~\ref{sec:modelISAC}.

\begin{figure*}[]
	\centering
	\setlength{\abovecaptionskip}{-0.5 mm}
	\includegraphics[width=2\columnwidth]{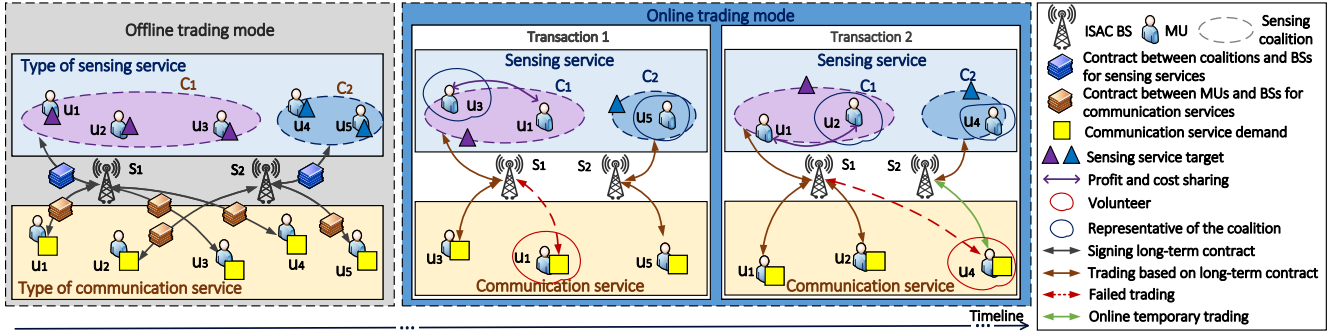}
	\caption{Schematic of our future resource bank for ISAC networks, illustrating two types of services (communication and sensing) and two levels of clients (individual MUs and sensing coalitions).}
	\label{SYSTEM MODEL}
	\vspace{-0.3cm}
\end{figure*}
\subsection{Overview of Our Methodology}\label{sec:over}
We consider a dynamic ISAC network, where resource provisioning involves two key parties: \textit{(i)} multiple MUs, denoted by $\bm{\mathcal{U}} = \{u_1, \dots, u_{\bm{|\mathcal{U}|}}\}$, each requiring both sensing and communication services, and \textit{(ii)} multiple BSs, denoted by $\bm{\mathcal{S}} = \{s_1, \dots, s_{\bm{|\mathcal{S}|}}\}$, which provide bandwidth and power resources to MUs.
We further define $\bm{\mathcal{Q}}$ as the set of sensing targets/objectives. When multiple MUs share the same sensing objective, they can form coalitions for cooperative sensing, represented as $\bm{\mathcal{C}} = \{\bm{c}_1, \dots, \bm{c}_{\bm{|\mathcal{C}|}}\}$ (each sensing coalition may contain one MU or more). Our \textit{future resource bank for ISAC} classifies clients into two levels: for communication services, each MU acts as an \textit{individual client}, directly requesting resources from BSs; for sensing services, MUs with common sensing targets form a coalition, functioning as a \textit{team client} that collectively requests resources, reducing redundancy. {We presume that each sensing coalition is served by a \textit{single} BS. This assumption has also been adopted in existing ISAC and distributed sensing literature (e.g., \cite{SURVEY 3,RW ISAC1}), as it enables tractable problem formulations and facilitates the design of resource allocation mechanisms\footnote{While cooperative sensing involving multiple BSs, such as through joint beamforming or signal fusion, has the potential to improve sensing accuracy and robustness, it also increases system complexity in terms of coordination overhead, utility modeling, and contract formation. Integrating such cooperative paradigms into our Future Resource Bank framework represents a promising direction, which we plan to explore in future work.}.}

Our future resource bank for ISAC further integrates both \textit{offline} and \textit{online} trading modes\footnote{In this work, we employ a two-stage, hybrid resource allocation framework that is \textit{distributed} in nature (i.e., there is no need for a centralized coordinator/controller as the decisions are made locally at the BSs and MUs). Thus, compared to centralized control architectures, the proposed decentralized approach offers several critical advantages:
\textit{(i)} Reduced signaling overhead and latency, as it obviates the need for continuous and real-time collection of global system states; \textit{(ii)} Superior scalability, by avoiding the exponential computational burden inherent in solving centralized NP-hard optimization problems, particularly those involving discrete decision variables; \textit{(iii)} Improved system robustness, as the elimination of a single point of failure ensures that faults are contained locally, without cascading effects on the global system; and
\textit{(iv)} Enhanced privacy preservation, since sensitive data are processed and retained locally at the network edge, thereby minimizing exposure to centralized entities.}. The offline mode allows individual MUs and coalitions to establish long-term contracts with BSs for future communication and sensing services, securing resource commitments in advance of actual demand requests. In particular, for communication services, a contract between an individual MU $u_i\in\bm{\mathcal{U}}$ and a BS $s_j\in\bm{\mathcal{S}}$ is represented as {\footnotesize $\mathbb{C}^\mathsf{(c)}_{i,j} = \{\mathbbm{c}^\mathsf{(c),B}_{i,j}, \mathbbm{c}^\mathsf{(c),Pow}_{i,j}, \mathbbm{c}^\mathsf{(c),Pay}_{i,j}, \mathbbm{c}^\mathsf{(c),PelU}_{i,j}, \mathbbm{c}^\mathsf{(c),PelS}_{i,j}\}$}, where {\footnotesize$\mathbbm{c}^\mathsf{(c),B}_{i,j}$} and {\footnotesize$\mathbbm{c}^\mathsf{(c),Pow}_{i,j}$} denote the allocated bandwidth and power, {\footnotesize$\mathbbm{c}^\mathsf{(c),Pay}_{i,j}$} represents the unit price, and {\footnotesize$\mathbbm{c}^\mathsf{(c),PelU}_{i,j}$} and {\footnotesize$\mathbbm{c}^\mathsf{(c),PelS}_{i,j}$} are clauses specifying penalties if the MU or BS breaches the contract.
Similarly, for sensing services, a long-term contract between a sensing coalition $\bm{c}_k\in \bm{\mathcal{C}}$ and a BS $s_j$ as {\footnotesize$\mathbb{C}^\mathsf{(s)}_{k,j} = \{\mathbbm{c}^\mathsf{(s),B}_{k,j}, \mathbbm{c}^\mathsf{(s),Pow}_{k,j}, \mathbbm{c}^\mathsf{(s),Pay}_{k,j}, \mathbbm{c}^\mathsf{(s),PelU}_{k,j}, \mathbbm{c}^\mathsf{(s),PelS}_{k,j}\}$}, where {\footnotesize$\mathbbm{c}^\mathsf{(s),B}_{k,j}$} and {\footnotesize$\mathbbm{c}^\mathsf{(s),Pow}_{k,j}$} specify the bandwidth and power demands, {\footnotesize$\mathbbm{c}^\mathsf{(s),Pay}_{k,j}$} represents the unit price, and {\footnotesize$\mathbbm{c}^\mathsf{(s),PelU}_{k,j}$} and {\footnotesize$\mathbbm{c}^\mathsf{(s),PelS}_{k,j}$} define penalty clauses for contract violations. We presume that MUs within a coalition 
{\footnotesize$\bm{c}_k$} share the same pricing scheme, where the cost per MU is determined as: {\footnotesize$p^\mathsf{(s)}_{i,j} = \mathbbm{c}^\mathsf{(s),Pay}_{k,j} / |\bm{c}_k|$}, where {\footnotesize$|\bm{c}_k|$} is the number of MUs within the coalition. {To simplify intra-coalition coordination, we assume that only one representative MU will executes the sensing task on behalf of the coalition, and the obtained sensing utility can be shared across all coalition members, where the payment is evenly divided. Such a consideration provides a practical abstraction of cooperative sensing behavior and ensures low coordination overhead, as supported by existing works such as \cite{RW ISAC1}.} These long-term contracts are directly executed during practical resource transactions\footnote{A practical transaction is a resource trading event between clients and BSs.}.
 {To capture the real-world decision-making behavior of participants in ISAC networks, we assume that both clients and BSs behave \textit{selfishly} (i.e., each client/BS only accepts to be matched with a BS/client if that improves its own utility). For example, a BS rejects contracts that offer low payments although such contracts may lead to a better collective utility for the BSs. This mutual self-interest underpins our win-win matching design, ensuring that every contract is individually rational and voluntarily accepted by both parties.}

Since long-term contracts may experience performance degradation during execution, we then introduce an online trading mode as a backup mechanism. This mode addresses two key scenarios: when resource demand at a BS exceeds its available supply (due to overbooking as we will  formalize in Sec.~\ref{sec:probForm}), certain clients (called \textit{volunteers}) voluntarily relinquish their commitments in exchange for compensation. Additionally, for MUs with unmet demands, online trading enables temporary contracts with BSs that have surplus resources.
{In summary, to enable scalable and role-aware association between BSs and clients with diverse service needs, we design a two-stage matching mechanism.  In the offline stage, a proactive offRFW$^2$M is employed to assign both individual MUs and sensing coalitions to BSs based on long-term role-specific utility.  In the online stage, the onEBW$^2$M ensures responsive resource reallocation for residual demands and overbooked commitments.  This design achieves seamless and fast service delivery in dynamic ISAC networks.}

Fig. \ref{SYSTEM MODEL} illustrates the timeline of our methodology. The timeline is divided into two segments: \textit{(i)} offline trading, where long-term contracts are established before practical demands arise, and \textit{(ii)} online trading, which manages practical transactions. Also, MUs are categorized into two client levels: \textit{(i)} individual MUs request communication services directly from BSs, and \textit{(ii)} sensing coalitions, formed by MUs with shared sensing objectives, request sensing services collectively. {Note that a single MU may act as a dual-role participant, simultaneously requesting communication services as an individual, while also participating in sensing coalitions.}
In the offline trading mode (gray box in Fig. \ref{SYSTEM MODEL}), long-term contracts are signed between individual MUs, sensing coalitions, and BSs. During online trading (blue box), these contracts are executed, while MUs with unmet demands negotiate temporary contracts with BSs that have surplus resources.
The figure illustrates a scenario with two BSs (\(s_1, s_2\)) and five MUs (\(u_1\) to \(u_5\)). Initially, MUs form sensing coalitions based on shared sensing objectives, where \(\bm{c}_1 = \{u_1, u_2, u_3\}\) and \(\bm{c}_2 = \{u_4, u_5\}\). Each coalition participates in the resource trading market as a single entity, while individual MUs also negotiate contracts for communication services. Using our later offline matching strategy, called offRFW$^2$M, both individual MUs and coalitions establish long-term contracts with BSs, securing bandwidth and power allocations in advance. 
During practical Transaction 1, \(s_1\) selects \(u_1\) as a volunteer, relinquishing its reserved service to balance demand and supply. In practical Transaction 2, \(s_1\) selects \(u_4\) as a volunteer, since total demand exceeds its available resources. This highlights the effectiveness of overbooking, which enables BSs to handle resource demand fluctuations efficiently. For instance, in Transaction 1, when \(u_2\) and \(u_4\) are absent, \(s_1\) still fulfills contracts with \(u_3\) and coalition \(\bm{c}_1\), while \(s_2\) serves \(u_5\) and \(\bm{c}_2\), preventing resource waste. 
Nevertheless, dynamics of client participation can still lead to surplus resources. For example, in Transaction 2, \(s_1\) is unable to fulfill its contract with \(u_4\) due to an imbalance in supply and demand. To resolve this, our later online matching strategy, called onEBW$^2$M, efficiently reassigns resources, ensuring optimal resource utilization. For example, in Transaction 2, \(s_2\) reallocates communication resources to serve \(u_4\).

\subsection{Modeling of MUs and BSs}\label{sec:baseModel}
\subsubsection{Modeling of MUs} We model each MU $u_i \in \bm{\mathcal{U}}$ as a 6-tuple $\{Q_i, R_i^\mathsf{req}, S_i^\mathsf{req}, N_i^\mathsf{R}, l_i^\mathsf{U},\alpha_i\}$, where $Q_i\in\mathcal{Q}$ represents its sensing target, $R_i^\mathsf{req}$ and $S_i^\mathsf{req}$ denote its requirement on data rate (bits/s) and sensing accuracy, respectively, and $N_i^\mathsf{R}$ represents the number of receiving antennas on $u_i$, while $l_i^\mathsf{U} = [x_i^\mathsf{U}, y_i^\mathsf{U}]^\top$ denotes the $(x,y)$ location of $u_i$. To reflect the dynamic nature of ISAC networks, we model MU participation uncertainty using a Bernoulli random variable $\alpha_i$\footnote{In this work, the dynamic nature of resource demand is modeled through the uncertainty of user participation. This uncertainty reflects whether an MU is willing/able to engage in a given sensing or communication service at practical transactions. It encompasses a range of real-world factors, including device-level constraints (e.g., battery depletion or high processor load), network instability, and behavioral preferences (e.g., low perceived utility). }, $\alpha_i \sim {\bf{B}} \Big((1, 0), (\mathbbm{a}_i, 1-\mathbbm{a}_i)\Big)$, where $\alpha_i=1$ which occurs with probability $\mathbbm{a}_i$ implies that the MU joins the market during a practical transaction; otherwise $\alpha_i=0$. MUs are formally categorized as follows.
\begin{Defn}(Individual MU) An individual MU  participates independently in resource trading for communication services.
\end{Defn}
\begin{Defn}(Sensing Coalition) MUs with a shared sensing target form a coalition, where each coalition acts as a team client in resource trading. A sensing coalition is defined through a two-way mapping/matching function $\mu$ between the MU set $ \bm{\mathcal{U}} $ and coalition set $ \bm{\mathcal{C}} $, which satisfies the following properties:
	
	\noindent
	$\bullet$ For each MU $ u_i \in \bm{\mathcal{U}}, \mu(u_i) \subseteq \bm{\mathcal{C}} $, and $|\mu(u_i)|=1$, meaning each MU belongs to exactly one coalition.
	
\noindent 
$\bullet$ For each coalition $\bm{c}_k \in \bm{\mathcal{C}}$, its assigned MUs satisfy $\mu(\bm{c}_k) \subseteq \bm{\mathcal{U}}$, and every MU $u_i \in \bm{c}_k$ shares the same sensing target $Q_i$.
	
	\noindent
	$\bullet$ An MU \( u_i \) belongs to coalition \( \bm{c}_k \) if and only if \( \bm{c}_k \) includes \( u_i \), i.e., \( u_i \in \mu(\bm{c}_k) \) if and only if \( \bm{c}_k \in \mu(u_i) \).
\end{Defn}

\subsubsection{Modeling of BSs} We model each BS $s_j \in \bm{\mathcal{S}}$ as a 4-tuple $\{B_j, P_j, l_j^\mathsf{S}, N_j^\mathsf{T}\}$, where $B_j$ and $P_j$ refer to the total bandwidth and power resources of $s_j$, respectively, $l_i^\mathsf{S} = [x_i^\mathsf{S}, y_i^\mathsf{S}]^\top$ represents the geographic location of $s_j$, and $N_j^\mathsf{T}$ indicates the number of transmitting antennas of BS $s_j$. We assume that both transmitting (Tx) and receiving (Rx) antennas are implemented using Uniform Linear Arrays (ULA) \cite{RW ISAC1}.

\subsection{Modeling of ISAC}\label{sec:modelISAC}
\subsubsection{Communication Model}
We assume that the total bandwidth of BS \( s_j \) is divided into \( N^\mathsf{B}_j \) sub-channels, represented as $\mathcal{B}_j = \{1, \dots, N^\mathsf{B}_j\}$, where each sub-channel has an equal bandwidth of \( B_0 \). In the communication process, when the \( n \)-th sub-channel (\( 1 \leq n \leq N^\mathsf{B}_j \)) of BS \( s_j \) is allocated to MU \( u_i \), the downlink transmitted symbols are denoted as \( s_{i,j,n} = [s_{i,j,n,1}, s_{i,j,n,2}, \dots, s_{i,j,n,N^\star}]^\top \in \mathbb{C}^{N^\star \times 1} \) where \( N^\star \) represents the number of transmitted symbols. Subsequently, the beamforming matrix is  {\footnotesize \( F_{i,j,n} = F^\mathsf{RF}_{i,j,n} F^\mathsf{BB}_{i,j,n} \in \mathbb{C}^{N_j^\mathsf{T} \times N^\star} \)} where {\footnotesize\( F^\mathsf{RF}_{i,j,n} \in \mathbb{C}^{N_j^\mathsf{T} \times N^\mathsf{RF}_i} \)} is the analog precoding matrix, and {\footnotesize\( F^\mathsf{BB}_{i,j,n} \in \mathbb{C}^{N^\mathsf{RF}_j \times N^\star} \)} is the digital beamforming matrix. Here, {\footnotesize\( N^\mathsf{RF}_j \)} denotes the number of radio frequency (RF) chains at the transmitter of BS \( s_j \). {We assume that inaccurate location information of users are available at BS, which can be used for beamforming schemes\cite{beamforming 1,beamforming 2,beamforming 3}.}
For the communication channel, we assume \( L \) distinct propagation paths between BSs and MUs, indexed by $\ell \in\{0, 1, 2, \dots, L - 1\} $ where \( \ell = 0 \) represents the Line-of-Sight (LoS) path, and the remaining paths correspond to Non-LoS (NLoS) components, originating from single-bounce reflections\cite{NLoS}. Consequently, the MU-BS channel matrix \( \mathbf{H}_{i,j,n} \) of size \( N_i^\mathsf{R} \times N_j^\mathsf{T} \) can be expressed as 
\begin{equation}\label{equ.1}
	\mathbf{H}_{i,j,n} = \mathbf{A}_{i,j}^\mathsf{r} \boldsymbol{\Gamma}_{i,j,n} \left(\mathbf{A}_{i,j}^\mathsf{t}\right)^\mathsf{H},
\end{equation}
where
\begin{equation}\label{equ.2}
	\begin{aligned}
		&\boldsymbol{\Gamma}_{i,j,n} = \sqrt{N_i^\mathsf{R} N_j^\mathsf{T} p_{i,j,n}} \, \text{diag} \Bigg\{ \frac{h_{i,j,n,0}}{\sqrt{\rho_{i,j,n,0}}} e^{-j \frac{2\pi n \tau_{i,j,0}}{N^\mathsf{B}_j T_s}}, \dots,\\& \frac{h_{i,j,n,\ell}}{\sqrt{\rho_{i,j,n,\ell}}} e^{-j \frac{2\pi n \tau_{i,j,\ell}}{N^\mathsf{B}_j T_s}}, \dots,\frac{h_{i,j,n,L-1}}{\sqrt{\rho_{i,j,n,L-1}}} e^{-j \frac{2\pi n \tau_{i,j,L-1}}{N^\mathsf{B}_j T_s}} \Bigg\}.
	\end{aligned}
\end{equation}
Here, $p_{i,j,n}$ is the power allocated to MU \( u_i \) on sub-channel \( n \) of BS \( s_j \), while \( h_{i,j,n,\ell} \) and \( \rho_{i,j,n,\ell} \) denote the channel gain and path loss for the \( \ell \)-th path, respectively. The sampling period is represented by \( T_s \), and \(\tau_{i,j,\ell} \) is the time delay of the \( \ell \)-th path. Also, the steering vector \( \mathbf{A}^\mathsf{t}_{i,j} \) and response vector \( \mathbf{A}^\mathsf{r}_{i,j} \) for a ULA are given by:
\begin{equation}\label{AOD}
	     \hspace{-8mm}
\resizebox{0.465\textwidth}{!}{$ \mathbf{A}_{i,j}^\mathsf{t}(\theta_{i,j}) {=} \frac{1}{\sqrt{N_j^\mathsf{T}}} 
	\begin{bmatrix}
		e^{-j \frac{N_j^\mathsf{T} - 1}{2} \frac{2\pi}{\lambda_n} d \sin(\theta_{i,j})} ,\cdots,
		e^{j \frac{N_j^\mathsf{T} - 1}{2} \frac{2\pi}{\lambda_n} d \sin(\theta_{i,j})}
	\end{bmatrix}^\top\hspace{-3mm},
    $}\hspace{-5mm}
\end{equation}
\begin{equation}\label{AOA}
\hspace{-8mm}
\resizebox{0.465\textwidth}{!}{$ 
	 \mathbf{A}_{i,j}^\mathsf{r}(\theta_{i,j}) {=} \frac{1}{\sqrt{N_i^\mathsf{R}}} 
	\begin{bmatrix}
		e^{-j \frac{N_i^\mathsf{R} - 1}{2} \frac{2\pi}{\lambda_n} d \sin(\theta_{i,j})} ,\cdots,
		e^{j \frac{N_i^\mathsf{R} - 1}{2} \frac{2\pi}{\lambda_n} d \sin(\theta_{i,j})}
	\end{bmatrix}^\top\hspace{-3mm},
    $}\hspace{-5mm}
\end{equation}
where \( \theta_{i,j} \) is the angle of departure (AoD), \( d \) is the distance between antenna elements (\( d = \lambda_n / 2 \)), and \( \lambda_n \) is the signal wavelength. Assuming BS \( s_j \) and MU \( u_i \) are synchronized, they can compensate for the Doppler effect and time delay. Subsequently, the received communication signal is given by
\begin{equation}
	\mathbf{y}_{i,j,n}^\mathsf{(c)} = \mathbf{W}_{i,j,n}^\mathsf{H} \mathbf{H}_{i,j,n} \mathbf{F}_{i,j,n} \mathbf{s}_n + \mathbf{W}_{i,j,n}^\mathsf{H} \mathbf{n}_{i,j,n},
\end{equation}
where \(\mathbf{W}_{i,j,n} \) is the receiver combiner, and \( \mathbf{n}_{i,j,n} \) is the zero-mean additive white Gaussian noise (AWGN) with power \( \sigma_c^2 \). 

Using Shannon’s capacity formula, the achievable data rate for MU \( u_i \) on sub-channel \( n \) of BS \( s_j \) is thus given by
\begin{equation}
	R_{i,j,n} = B_0 \log_2 \left( 1 + \text{SNR}_{i,j,n} \right),
\end{equation}
{which is obtained  under the assumption of orthogonal (i.e., non-overlapping) sub-channel allocation, where each sub-channel is assigned to at most a single MU or coalition for either communication or sensing,  avoiding both inter-user and inter-functionality interference\cite{SURVEY 3,RW ISAC1}.} Correspondingly, the signal-to-noise ratio (SNR) is defined as 
	\begin{equation}
	    \text{SNR}_{i,j,n} = p_{i,j,n} \xi_{i,j,n}
	\end{equation}
with the effective channel gain-to-noise ratio\footnote{We assume that each BS is equipped with a hybrid analog-digital beamforming architecture. Instead of explicitly optimizing the precoding matrices, we follow prior ISAC literature~\cite{RW ISAC1} by leveraging coarse user location information (e.g., estimated angle-of-departure) to construct analog steering vectors $\mathbf{A}_{i,j}^\mathsf{t}$ and $\mathbf{A}_{i,j}^\mathsf{r}$. The resulting beamforming gain is abstracted into the effective channel-to-noise ratio term $\xi_{i,j}$, which directly influences the communication data rate. This abstraction enables compact modeling while aligning with practical ISAC assumptions.}
 of
\begin{equation}
	\xi_{i,j,n} = {\left| \mathbf{W}_{i,j,n}^\mathsf{H} \mathbf{A}_{i,j}^\mathsf{r} \boldsymbol{\Gamma}_{i,j,n} \left(\mathbf{A}_{i,j}^\mathsf{t}\right)^\mathsf{H} \mathbf{F}_{i,j,n} \right|^2}\Big/{\sigma_c^2}.
\end{equation}
Subsequently, the overall achievable data rate between MU \( u_i \) and BS \( s_j \) is given by
\begin{equation}\label{equ. 9}
	R_{i,j}^\mathsf{(c)} = \sum_{n=1}^{N^\mathsf{B}_j} a_{i,j,n} R_{i,j,n},
\end{equation}
{where \( a_{i,j,n} \in \{0,1\} \) is a binary variable indicating whether sub-channel \( n \) of BS \( s_j \) is assigned to MU \( u_i \). In particular, \( a_{i,j,n} = 1 \) means the assignment is made, whereas \( a_{i,j,n} = 0 \) implies otherwise. To avoid inter-user interference, we enforce the following constraint to ensure that each sub-channel is assigned to at most one MU: $\sum_{u_i \in \mathcal{U}} a_{i,j,n} \leq 1, \quad \forall s_j \in \bm{\mathcal{S}},\ \forall n \in \{1, \dots, N^\mathsf{B}_j\}$.}
Let \( \mathcal{N}^\mathsf{(c)}_{i,j}=\{n| a_{i,j,n}=1 \} \) with size ${N}^\mathsf{(c)}_{i,j}=|\mathcal{N}^\mathsf{(c)}_{i,j}|$ denote the set of allocated subchannels to MU \( u_i \) from BS \( s_j \). Also, let \(P^\mathsf{(c)}_{i,j} \)
denote the
total power allocated by BS \( s_j \) to MU \( u_i \), with per-sub-channel power allocation of \( {P^\mathsf{(c)}_{i,j}}/{N^\mathsf{(c)}_{i,j}} \). Assuming identical channel gains and noise powers across sub-channels \cite{RW ISAC1}, the achievable data rate in (\ref{equ. 9}) simplifies to
\begin{equation}\label{equ. 10}
	R_{i,j}^\mathsf{(c)} = N_{i,j}^\mathsf{(c)} B_0 \log_2 \left( 1 + {P_{i,j}^\mathsf{(c)} \xi_{i,j}}\Big/{N_{i,j}^\mathsf{(c)}} \right),
\end{equation}
where \( \xi_{i,j}=\xi_{i,j,n}~\forall n, \) is the channel gain-to-noise power ratio across all sub-channels allocated to MU \( u_i \). Defining \( B^\mathsf{(c)}_{i,j} = N^\mathsf{(c)}_{i,j} B_0 \), the achievable data rate can be rewritten as
\begin{equation}\label{equ. 11}
	R_{i,j}^\mathsf{(c)} = B_{i,j}^\mathsf{(c)} \log_2 \left( 1 + {B_0 P_{i,j}^\mathsf{(c)} \xi_{i,j}}\Big/{B_{i,j}^\mathsf{(c)}} \right).
\end{equation}

Finally, we define the \textit{value} that MU \( u_i \) obtains from utilizing the communication service provided by BS \( s_j \) as
\begin{equation}\label{comm value}
	V_{i,j}^\mathsf{(c)} =\omega_1 R_{i,j}^\mathsf{(c)}= \omega_1 B_{i,j}^\mathsf{(c)} \log_2 \left( 1 + {B_0 P_{i,j}^\mathsf{(c)} \xi_{i,j}}\Big/{B_{i,j}^\mathsf{(c)}} \right),
\end{equation}
where \( \omega_1>0 \) is a weighting coefficient, scaling the contribution of data rate \(R^\mathsf{(c)}_{i,j} \) to the perceived service value.
\begin{figure*}
\begin{equation}\label{equ. 21}
 \hspace{-3mm}
\resizebox{0.975\textwidth}{!}{$
\begin{aligned}
    &f(\mathbf{y}_{i,j}^\mathsf{\mathsf{(s)}} | \boldsymbol{\eta}_{i,j}) = \frac{1}{(2\pi \sigma_s^2)^{N^\mathsf{(s)}_{i,j}}} \exp \left\{ \sum_{n=\mathbbm{l}_{i,j}^\mathsf{\mathsf{(s)}}}^{\mathbbm{l}_{i,j}^\mathsf{\mathsf{(s)}}+N^\mathsf{(s)}_{i,j}} \left[ \frac{1}{2\sigma_s^2} \left( \mathbf{H}_{i,j,n}^\mathsf{(s)} \mathbf{F}_{i,j,n}\mathbf{s}_\text{ref} \right)^\mathsf{H} \mathbf{y}_{i,j}^\mathsf{\mathsf{(s)}} - \frac{1}{\sigma_s^2} \left| \mathbf{H}_{i,j,n}^\mathsf{(s)} \mathbf{F}_{i,j,n}\mathbf{s}_\text{ref} \right|^2 \right]^\mathsf{Re} \right\},~
\mathbf{H}_{i,j,n}^\mathsf{(s)} = \sqrt{N_j^\mathsf{T} N_i^\mathsf{R} p_{i,j,n}} \frac{h_{i,j,n}}{\sqrt{\rho_{i,j,n}}}
		e^{-j \frac{2 \pi n \tau_{i,j}}{N_j^\mathsf{B} T_s}} \mathbf{A}_{i,j}^\mathsf{r} \phi_{i,j,k}^\mathsf{(s)} 	\left(\mathbf{A}_{i,j}^\mathsf{t}\right)^\mathsf{H} \theta_{i,j}^\mathsf{(s)}
\end{aligned}
$}
\hspace{-3mm}
\vspace{-0.4mm}
\end{equation}
\vspace{-.5mm}
\hrulefill
\end{figure*}

\subsubsection{Sensing Model}
Sensing services encompass various applications, including detection, localization, and tracking, with this paper primarily focusing on the localization service as a representative example. In the localization service, in a semi-static sensing environment\footnote{We assume that the sensing target moves slowly, ensuring that the channel environment experienced by the signal remains approximately constant\cite{RW ISAC1}.}, each BS transmits sensing reference signals to the target, while MUs decode the target's position by processing the reflected signals. 
We assume that the approximate positions of targets within the region of interest are known \cite{known}.
We also presume that sensing services operate in two modes. In \textit{device-free sensing}, the targets are objects or entities that do not carry electronic devices \cite{RW ISAC1, device sensing}. In contrast, \textit{device-based sensing} refers to scenarios where the sensing targets are MUs equipped with electronic devices \cite{RW ISAC1}. 

\noindent~\textit{(a) Device-free sensing:}
Suppose that MU \( u_i \) has a device-free sensing task and is assigned to subchannel \( n \) of BS \( s_j \). Accordingly, the signal received by \( u_i \) from \( s_j \), after being \textit{reflected} by the target \( Q_i \), can be expressed as
\begin{equation}
	\begin{aligned}\label{sensing sign}
		&\mathbf{y}_{i,j,n}^\mathsf{(s)} =
		\sqrt{N_j^\mathsf{T} N_i^\mathsf{R} p_{i,j,n}} \frac{h_{i,j,n}}{\sqrt{\rho_{i,j,n}}}
		e^{-j \frac{2 \pi n \tau_{i,j}}{N_j^\mathsf{B} T_s}}
		\\&~~~~\times \mathbf{W}_{i,j,n}^\mathsf{H} \mathbf{A}_{i,j}^\mathsf{r} \phi_{i,j,k}^\mathsf{(s)} 	\left(\mathbf{A}_{i,j}^\mathsf{t}\right)^\mathsf{H} \theta_{i,j}^\mathsf{(s)} \times\mathbf{F}_{i,j,n}\mathbf{s}_\text{ref}
		+ \mathbf{W}_{i,j,n}^\mathsf{H} \mathbf{w}_{i,j,n},
	\end{aligned}
    \hspace{-4mm}
\end{equation}
where $p_{i,j,n}$, $h_{i,j,n}$, $\rho_{i,j,n}$ are the transmit power, complex channel gain\footnote{Following existing works~\cite{RW ISAC1,ISAC}, we do not explicitly introduce the Radar Cross Section (RCS) as a separate variable. Instead, its physical impact (e.g., target reflectivity) is implicitly embedded in the composite sensing channel gain $h_{i,j,n}$, which combines path loss, beamforming gain, and object reflection characteristics. 
}, and path loss, respectively. Also, $\mathbf{w}_{i,j,n}$ denotes the AWGN with power $\sigma_s^2$, $\mathbf{s}_\text{ref}$ is the sensing reference signal, and $\tau_{i,j}$ is the transmission time from BS $s_j$ to MU $u_i$, expressed as
\begin{equation}\label{equ. transmission time}
	\tau_{i,j} = ({d_{i}^\mathsf{Q} + d_{i,j}^\mathsf{Q}})\big/{c},
\end{equation}
where $c$ is the speed of light, $d_{i}^\mathsf{Q}$ and $d_{i,j}^\mathsf{Q}$ are the distances between target $Q_i$ and MU $u_i$ and BS $s_j$, which are given by
\begin{equation}\label{equ. 15}
	d_{i}^\mathsf{Q} = \sqrt{(x_i^\mathsf{U} - x_i^\mathsf{Q})^2 + (y_i^\mathsf{U} - y_i^\mathsf{Q})^2},
\end{equation}
\begin{equation}\label{equ. 16}
	d_{i,j}^\mathsf{Q} = \sqrt{(x_{i}^\mathsf{Q} - x_{j}^\mathsf{S})^2 + (y_i^\mathsf{Q} - y_{j}^\mathsf{S})^2}.
\end{equation}
Here, the position of MU $u_i$, BS $s_j$, and $u_i$'s sensing target $Q_i$ are denoted by $l_i^\mathsf{U} = [x_i^\mathsf{U}, y_i^\mathsf{U}]^\top$, $l^\mathsf{S}_{j} = [x^\mathsf{S}_{j}, y^\mathsf{S}_{j}]^\top$, $l_i^\mathsf{Q} = [x_i^\mathsf{Q}, y_i^\mathsf{Q}]^\top$, respectively. Moreover, the steering and response vector functions in (\ref{sensing sign}) follow the same structures as (\ref{AOD}) and (\ref{AOA}), where \( \theta^\mathsf{(s),Q}_{i,j} \) represents the AoD from BS \( s_j \) to \( Q_i \), and \( \phi^\mathsf{(s),Q}_{i} \) denotes the AoA from \( Q_i \) to BS \( s_j \), which are given by
\begin{equation}
	\theta_{i}^\mathsf{\mathsf{(s),Q}} = \arccos \left( ({x_i^\mathsf{U} - x_{i}^\mathsf{Q}})\big/{\| l_{i}^\mathsf{Q} - l^\mathsf{U}_i \|_2} \right),
\end{equation}
\begin{equation}\label{equ. 18}
	\phi_{i,j}^\mathsf{\mathsf{(s),Q}} = \pi - \arccos \left( {(x_i^\mathsf{Q} - x_{j}^\mathsf{S})}\big/{\| l_i^\mathsf{Q} - l_{j}^\mathsf{S} \|_2} \right).
\end{equation}

\noindent~\textit{(b) Device-based sensing:}
When considering MU \( u_i \) as the sensing target \( Q_i \), the expression of the received signal at MU \( u_i \) follows a similar form to (\ref{sensing sign}), with the angles \( \theta^\mathsf{(s),Q}_i \) and \( \phi^\mathsf{(s),Q}_{i,j} \) replaced by \( \theta_{i,j} \), which represents the AoD and AoA from BS \( s_j \) to MU \( u_i \). 

To quantify the quality of sensing in both device-free and device-based sensing scenarios, we use a common  sensing metric, called Position Error Bound (PEB), which is derived from the Fisher Information Matrix (FIM, \( \mathbf{J}(l^\mathsf{Q}_i) \)) as follows:
\begin{equation}
	\mathbb{E}\left[ \| \hat{l}_{i}^\mathsf{Q} - l_{i}^\mathsf{Q} \|^2 \right] \geq \text{tr} \{ \mathbf{J}^\mathsf{-1}(l_{i}^\mathsf{Q}) \},
\end{equation}
where $\hat{l}_{i}^\mathsf{Q}$ and $ l_{i}^\mathsf{Q}$ are the estimation and true location of sensing target  $Q_i \in\bm{\mathcal{Q}}$, while $\mathbb{E}[\cdot]$ denotes the expectation operator.
Instead of deriving the PEB directly, a more practical approach is to compute the Cramér-Rao Lower Bound (CRLB) for \( \tau_{i,j} \), \( \theta^\mathsf{(s),Q}_i \), and \( \phi^\mathsf{(s)}_{i,j} \) in device-free sensing (or \(\tau_{i,j} \), \( \theta_{i,j} \) in device-based sensing)\cite{RW ISAC1}. We define \( {\boldsymbol{\eta}_{i,j}} = [\tau_{i,j}, \theta^\mathsf{(s),Q}_i, \phi^\mathsf{(s)}_{i,j}]^\mathsf{T} \), and the CRLB\footnote{Note that the CRLB expression inherently depends on the true values of the angular parameters (e.g., $\theta^\mathsf{(s),Q}_i$, $\phi^\mathsf{(s)}_{i,j}$), which are typically unavailable during real-time operation. To address this, we follow a standard practice~\cite{RW ISAC1}, where the CRLB is treated as an offline service quality indicator. Specifically, these angular parameters are configured using representative deployment scenarios, prior knowledge, or worst-case estimates. The resulting bounds are abstracted into a coefficient $\zeta_{i,j}$ later in~\eqref{eq:PEBmain}, enabling tractable modeling of sensing utility without requiring real-time access to ground-truth parameters.} (i.e., the variance of the estimate: \( \text{var}(\bm{\hat{\eta}}_{i,j}) \), \( {\boldsymbol{\hat{\eta}}_{i,j}} = [\hat{\tau}_{i,j}, \hat{\theta}^\mathsf{(s),Q}_i, \hat{\phi}^\mathsf{(s)}_{i,j}]^\mathsf{T} \)) serves as a lower bound for the FIM of \( {\boldsymbol{\eta}_{i,j}} \). Specifically, we have \( \text{var}(\bm{\hat{\eta}}_{i,j}) \leq \mathbf{J}_{\boldsymbol{\eta}_{i,j}} \), where $\mathbf{J}_{\boldsymbol{\eta}_{i,j}}$ denotes the FIM of \( {\boldsymbol{\eta}_{i,j}} \),  given by
\begin{equation}
	\mathbf{J}_{\boldsymbol{\eta}_{i,j}} = \mathbb{E}_{\boldsymbol{\eta}_{i,j}} \left[ \nicefrac{- \partial^2 \ln f(\mathbf{y}_{i,j}^\mathsf{\mathsf{(s)}} | \boldsymbol{\eta}_{i,j})}{\partial \boldsymbol{\eta}_{i,j} \partial (\boldsymbol{\eta}_{i,j})^\mathsf{T}} \right],
\end{equation}
where $f(\mathbf{y}_{i,j}^\mathsf{\mathsf{(s)}} | \boldsymbol{\eta}_{i,j})$ is the likelihood function of $\mathbf{y}_{i,j }^\mathsf{\mathsf{(s)}}$ with respect to $\boldsymbol{\eta}_{i,j}$, shown in (\ref{equ. 21}).
{Let $\mathcal{N}_{i,j}^\mathsf{\mathsf{(s)}} = \{\mathbbm{l}_{i,j}^\mathsf{\mathsf{(s)}}, \dots, \mathbbm{l}_{i,j}^\mathsf{\mathsf{(s)}} + N_{i,j}^\mathsf{\mathsf{(s)}} \}$ contain the index of subchannels allocated by BS \( s_j \) to MU \( u_i \) for sensing target \( Q_i \), where \( \mathbbm{l}^\mathsf{(s)}_{i,j} \) is the initial subchannel index\cite{RW ISAC1} and $N_{i,j}$ is the number of assigned subchannels (i.e., $B_{i,j}^\mathsf{(s)} = N_{i,j}^\mathsf{(s)} B_0$ is the total bandwidth of sensing).}
The derivation of \(\mathbf{J}_{\boldsymbol{\eta}_{i,j}} \) with respect to \( B^\mathsf{(s)}_{i,j} \) and the  power allocated for sensing \( P^\mathsf{(s)}_{i,j} \), is provided in Appendix A, where the PEB for the MU-BS pair \( i,j \) can finally be approximated as
\begin{equation}\label{eq:PEBmain}
	\text{PEB}_{i,j}=\sqrt{ \text{tr} \left\{ \mathbf{J}_{\boldsymbol{\eta}_{i,j}}^{-1} \right\}} \approx {\zeta_{i,j}}\Big/{\sqrt{N_{i,j}^\mathsf{(s)} P_{i,j}^\mathsf{(s)}}},
\end{equation}
where \( \zeta_{i,j} \) is influenced by multiple factors, including bandwidth, antennas configuration, channel gain, noise power, and the relative positions of the sensing target and the MU (detailed derivations are provided in Appendix A). We further introduce \( \omega_2,\omega_3 \in[0, 1] \) as the power and bandwidth coefficients to simply model the PEB as follows:
\begin{equation}
	\frac{1}{\text{PEB}_{i,j}} \approx \kappa_{i,j} \left( P_{i,j}^\mathsf{(s)} \right)^\mathsf{\omega_2} \left( B_{i,j}^\mathsf{(s)} \right)^\mathsf{\omega_3},
\end{equation}
{where \( \kappa_{i,j} = {\vartheta_{i,j}}/{\zeta_{i,j}} \)  (\( \vartheta_{i,j}\in [0,1] \)) represents the normalized relative sensing capability between BS $s_j$ and MU $u_i$, reflecting heterogeneous sensing quality due to spatial and device-level variations\cite{RW ISAC1}. }The coefficient \( \kappa_{i,j} \) is influenced by system configurations, imperfections in beamforming gain, filtering gain, and the signal processing algorithm\cite{sensing factor}.

Finally, we define the \textit{value} of sensing for each MU \( u_i \) as
\begin{equation}\label{sensing value}
	V_{i,j}^\mathsf{(s)}=\omega_4\frac{1}{\text{PEB}_{i,j}}\approx\omega_4\kappa_{i,j} \left( P_{i,j}^\mathsf{(s)} \right)^\mathsf{\omega_2} \left( B_{i,j}^\mathsf{(s)} \right)^\mathsf{\omega_3},
\end{equation}
where \( \omega_4 >0\) is a weighting coefficient, determining the relative importance of sensing accuracy in the utility function.

\section{Role-Friendly Win-Win Matching for Offline Trading (offRFW$^2$M)}
\noindent In this section, we introduce offRFW$^2$M, which achieves mutually beneficial and risk-aware long-term contracts for both communication and sensing services. 


\subsection{Key Definitions}
We first introduce the key definitions of our many-to-many (M2M) matching framework between MUs and BSs, designed for the offline trading mode. Unlike conventional matching mechanisms, this framework is crafted to mitigate potential risks, ensuring the robustness of long-term contracts. 

\begin{Defn}(M2M Matching for Communication Services in offRFW$^2$M)
	An M2M matching \( \varphi^\mathsf{(c)} \)  for communication services in offRFW$^2$M constitutes a two-way function/mapping between BSs \( \bm{\mathcal{S}} \) and MUs \(\bm{\mathcal{U}} \), satisfying the following properties:
	
	\noindent
	$\bullet$ For each BS $ s_{j} \in \bm{\mathcal{S}},\varphi^\mathsf{(c)}\left( s_j \right) \subseteq \bm{\mathcal{U}} $, meaning that a BS can potentially provide communication services to multiple MUs.
	
	\noindent
	$\bullet$ For each MU $ u_i \in \bm{\mathcal{U}}, \varphi^\mathsf{(c)}\left( u_i \right) \subseteq \bm{\mathcal{S}} $, and $|\varphi^\mathsf{(c)}\left( u_i \right)|=1$, ensuring that each MU is assigned to exactly one BS.
	
	\noindent
	$\bullet$ For each BS $ s_j $ and MU $ u_i $, $ s_j\in\varphi^\mathsf{(c)}(u_i)$ if and only if $ u_i\in\varphi^\mathsf{(c)}\left(s_j\right) $, indicating that a valid matching occurs only when both the MU and the BS mutually accept the contract.
\end{Defn}

\begin{Defn}(M2M Matching for Sensing Services in offRFW$^2$M)
	An M2M matching \( \varphi^\mathsf{(s)} \)  for sensing services in offRFW$^2$M constitutes a two-way function/mapping between BSs  \( \bm{\mathcal{S}} \) and coalitions \( \bm{\mathcal{C}} \), satisfying the following properties:
	
	\noindent
	$\bullet$ For each BS $ s_{j} \in \bm{\mathcal{S}},\varphi^\mathsf{(s)}\left( s_j \right) \subseteq \bm{\mathcal{C}} $, meaning that a BS can provide sensing resources to multiple sensing coalitions.
	
	\noindent
	$\bullet$ For each MUs' coalition $ \bm{c}_k \in \bm{\mathcal{C}}, \varphi^\mathsf{(s)}\left( \bm{c}_k \right) \subseteq \bm{\mathcal{S}} $ and $|\varphi^\mathsf{(s)}\left( \bm{c}_k \right)|=1$, ensuring that each coalition is assigned to one or more BSs, allowing cooperative resource provisioning.
	
	\noindent
	$\bullet$ For each $ s_j $ and $ \bm{c}_k $, $ s_j\in\varphi^\mathsf{(s)}(\bm{c}_k)$ if and only if $ \bm{c}_k\in\varphi^\mathsf{(s)}\left(s_j\right) $, ensuring that a matching occurs only when both parties mutually accept the 
    resource-sharing agreement.
\end{Defn}

\subsection{Utility, Expected Utility, and Risk of Clients}\label{sec:UtilityClients}
\subsubsection{Utility of Individual MUs for Communication Services}
We obtain the utility of communication services for an MU \( u_i \) through three key components: \textit{(i)} The valuation derived from utilizing communication services provided by BS \( s_j \), subtracting the payment made by \( u_i \) to \( s_j \). \textit{(ii)} The penalty incurred when \( u_i \) breaks the contract (e.g., when \( \alpha_i = 0 \) in a practical transaction). \textit{(iii)} The compensation received if \( u_i \) is selected as a volunteer. We formulate the utility function as
\begin{equation}\label{equ. comm utility}
	\begin{aligned}
		u^\mathsf{(c),U}\left(u_i,\varphi^\mathsf{(c)}(u_i),\mathbb{C}^\mathsf{(c)}_{i,j}\right)&=\alpha_i(1-\mathbbm{v}^\mathsf{(c)}_{i,j})\left(V_{i,j}^\mathsf{(c)}-\mathbbm{c}^\mathsf{(c),Pay}_{i,j}\right)\\[-.1em]&\hspace{-4mm}-(1-\alpha_i)\mathbbm{c}^\mathsf{(c),PelU}_{i,j}+\alpha_i\mathbbm{v}^\mathsf{(c)}_{i,j}\mathbbm{c}^\mathsf{(c),PelS}_{i,j},
	\end{aligned}
\end{equation}
where \( \mathbbm{v}^\mathsf{(c)}_{i,j} = 1 \) indicates that \( u_i \) is selected by \( s_j \) as a volunteer in a practical transaction, and \(\mathbbm{v}^\mathsf{(c)}_{i,j} = 0 \) otherwise. In~\eqref{equ. comm utility}, \( V_{i,j}^\mathsf{(c)} \), defined in (\ref{comm value}), is related to the long-term contract \( \mathbb{C}^\mathsf{(c)}_{i,j} \)  between MU $u_i$ and BS \( s_j\in\varphi^\mathsf{(c)}(u_i) \), where the  bandwidth \( \mathbbm{c}^\mathsf{(c),B}_{i,j} \) and power \( \mathbbm{c}^\mathsf{(c),Pow}_{i,j} \) traded are specified (see Sec.~\ref{sec:over}).

\subsubsection{Utility of Coalitions for Sensing Services} We obtain the utility that a sensing coalition \( \bm{c}_k \) obtains through three key components: \textit{(i)} The valuation of sensing services provided by BS \( s_j \), subtracting the payment made by \( \bm{c}_k \) to \( s_j \). \textit{(ii)} The penalty incurred when \( \bm{c}_k \) breaks the contract. \textit{(iii)} The compensation received if \( \bm{c}_k \) is selected as a volunteer. We thus formulate the utility function for sensing coalitions as
\begin{equation}\label{equ. sensing utility}
	\hspace{-3mm}\begin{aligned}
 &u^\mathsf{(s),U}\left(\bm{c}_k,\varphi^\mathsf{(s)}(\bm{c}_k),\mathbb{C}^\mathsf{(s)}_{k,j}\right)=\beta_k\mathbbm{v}^\mathsf{(s)}_{k,j}\mathbbm{c}^\mathsf{(s),PelS}_{k,j}
        \\[-.2em]
        &-(1-\beta_k)\mathbbm{c}^\mathsf{(s),PelU}_{k,j}+(1-\mathbbm{v}^\mathsf{(s)}_{k,j})\beta_k\left(|\bm{c}_k|V^\mathsf{(s),max}_{k,j}-\mathbbm{c}^\mathsf{(s),Pay}_{k,j}\right),
	\end{aligned}
    \hspace{-3mm}
\end{equation}
where $\mathbbm{v}^\mathsf{(s)}_{k,j} = 1$ implies that  $\bm{c}_k$ is selected by $s_j$ as a volunteer in a practical transaction, and $\mathbbm{v}^\mathsf{(s)}_{k,j} = 0$ otherwise.
In \eqref{equ. sensing utility}, \( V^\mathsf{(s),max }_{k,j} \) is the maximum sensing value of MUs in coalition \( \bm{c}_k \), defined as $ V^\mathsf{(s),max }_{k,j}=\max_{u_i\in \bm{c}_k}\{ V^\mathsf{(s)}_{i,j}\}$\footnote{In each coalition $\bm{c}_k$, the MU with the highest sensing value is selected as the representative to interact with BSs. The resulting sensing outcome (e.g., localization result) and its associated sensing value are equally shared among all coalition members. For modeling simplicity, we assume the total value obtained by the coalition is given by $|\bm{c}_k|V^\mathsf{(s),max}_{k,j}$, where $V^\mathsf{(s),max}_{k,j}$ denotes the representative’s sensing value. All coalition members jointly commit to the contract and evenly split the total payment required for sensing resource usage (i.e., $\mathbbm{c}^\mathsf{(s),Pay}_{k,j}/|\bm{c}_k|$). Similarly, when a coalition is selected as a volunteer, the compensation offered by the BS is also evenly divided among its members (i.e., $\mathbbm{c}^\mathsf{(s),PelS}_{k,j}/|\bm{c}_k|$). These rules ensure incentive compatibility and fairness within the coalition.}, and $V^\mathsf{(s)}_{i,j}$, defined in (\ref{sensing value}), is related to the traded bandwidth \( \mathbbm{c}^\mathsf{(s),B}_{k,j} \) and power \( \mathbbm{c}^\mathsf{(s),Pow}_{k,j} \)  through the long-term contract $\mathbb{C}^\mathsf{(s)}_{k,j}$ (see Sec.~\ref{sec:over}). Also, in \eqref{equ. sensing utility}, \( \beta_k = 1 \) captures that \( \bm{c}_k \) participates in a practical transaction (i.e.,  $\sum_{u_i\in \bm{c}_k}\alpha_i > 0$), and \( \beta_k = 0 \) otherwise\footnote{A coalition $\bm{c}_k$ participates in a practical transaction if and only if any $u_i$ in $\bm{c}_k$ participates: $\bm{c}_k$ will be absent from the transaction if $\mu(\bm{c}_k)=\emptyset$.}. Since uncertainties introduce challenges in directly maximizing the values of (\ref{equ. comm utility}) and (\ref{equ. sensing utility}) in the offline trading mode, we use their expected values over the uncertainties as follows:\footnote{{The expectations are taken with respect to the joint distribution of random variables $\alpha_i$, $\beta_k$, $\mathbbm{v}^\mathsf{(c)}_{i,j}$, and $\mathbbm{v}^\mathsf{(s)}_{k,j}$.} It is also assumed that $\alpha_i$ and $\mathbbm{v}^\mathsf{(c)}_{i,j}$, and $\beta_k$ and $\mathbbm{v}^\mathsf{(s)}_{k,j}$ are independent.} 

\begin{equation}\label{equ. expected comm utility}\hspace{-3mm}
\resizebox{0.405\textwidth}{!}{$
	\begin{aligned}
		&\mathbb{E}\left [u^\mathsf{(c),U}(u_i,\varphi^\mathsf{(c)}(u_i),\mathbb{C}^\mathsf{(c)}_{i,j})\right ]=\mathbb{E}[\alpha_i]\mathbb{E}[\mathbbm{v}^\mathsf{(c)}_{i,j}]\mathbbm{c}^\mathsf{(c),PelS}_{i,j}-\\[-.2em]
        &(1-\mathbb{E}[\alpha_i])\mathbbm{c}^\mathsf{(c),PelU}_{i,j}+\mathbb{E}[\alpha_i](1-\mathbb{E}[\mathbbm{v}^\mathsf{(c)}_{i,j}])(V_{i,j}^\mathsf{(c)}-\mathbbm{c}^\mathsf{(c),Pay}_{i,j}),
	\end{aligned}
    $}
\end{equation}
\begin{equation}\label{equ. expected sensing utility}
 \hspace{-3mm}
\resizebox{0.46\textwidth}{!}{$
	\begin{aligned}
		&\mathbb{E}\left [u^\mathsf{(s),U}(\bm{c}_k,\varphi^\mathsf{(s)}(\bm{c}_k),\mathbb{C}^\mathsf{(s)}_{k,j})\right ]=-(1-\mathbb{E}[\beta_k])\mathbbm{c}^\mathsf{(s),PelU}_{k,j}\\[-.2em]
        &+(1-\mathbb{E}[\mathbbm{v}^\mathsf{(s)}_{k,j}])\mathbb{E}[\beta_k](\mathbb{E}[V^\mathsf{(s),max }_{k,j}]-\mathbbm{c}^\mathsf{(s),Pay}_{k,j})+\mathbb{E}[\beta_k]\mathbb{E}[\mathbbm{v}^\mathsf{(s)}_{k,j}]\mathbbm{c}^\mathsf{(c),PelS}_{i,j},
	\end{aligned}
    $}
    \hspace{-3.5mm}
\end{equation}

\noindent {where variables $\alpha_i$ and $\beta_k$ denote the participation status of MU $u_i$ and coalition $c_k$ in the practical transaction, respectively. 
Similarly, $\mathbbm{v}^\mathsf{(c)}_{i,j}$ and $\mathbbm{v}^\mathsf{(s)}_{k,j}$ denote whether MU $u_i$ and coalition $c_k$ are selected by BS $s_j$ as volunteers, i.e., they voluntarily relinquish their participation in the transaction to balance resource allocation of BS $s_j$. The expectation values of these variables, used in (\ref{equ. expected comm utility})--(\ref{equ. expected sensing utility}), are formally obtained in Appendix B.}

Carefully investigating offline trading unveils potential risks due to the uncertainties involved. To evaluate these risks on the client side, we define two key risk assessment aspects:

\noindent~1) An individual MU \( u_i \) faces the risk of obtaining an undesired utility. We quantify this risk as the probability that the utility of \( u_i \) falls below a threshold \(u^\mathsf{(c)}_{\min} \), formulated as
\begin{equation}
\hspace{-3mm}
\resizebox{0.47\textwidth}{!}{$
	\begin{aligned}
			&R_1^\mathsf{U}\big(u_i,\varphi^\mathsf{(c)}(u_i),\mathbb{C}^\mathsf{(c)}_{i,j}\big)=\Pr\left(u^\mathsf{(c),U}\left(u_i,\varphi^\mathsf{(c)}(u_i),\mathbb{C}^\mathsf{(c)}_{i,j}\right)\leq u^\mathsf{(c)}_\mathsf{\min}\right).
	\end{aligned}
    $}\hspace{-5mm}
\end{equation}

\noindent~2) Similarly, a sensing coalition \( \bm{c}_k \) faces the risk of an unsatisfactory utility. We formulate this risk as the probability that the utility of \( \bm{c}_k \) falls below a tolerable threshold \( u^\mathsf{(s)}_{\min} \):
\begin{equation}
\hspace{-3mm}
\resizebox{0.47\textwidth}{!}{$
	\begin{aligned}			&R_2^\mathsf{U}(\bm{c}_k,\varphi^\mathsf{(s)}(\bm{c}_k),\mathbb{C}^\mathsf{(s)}_{k,j}))=\Pr\left(u^\mathsf{(s),U}\left(\bm{c}_k,\varphi^\mathsf{(s)}(\bm{c}_k),\mathbb{C}^\mathsf{(s)}_{k,j})\right)\leq u^\mathsf{(s)}_\mathsf{\min}\right).
	\end{aligned}
    $}\hspace{-5mm}
\end{equation}

The above risks should be managed when designing long-term contracts. Otherwise, clients may prefer online trading, opting out of long-term agreements with any BS.

\subsection{Utility, Expected Utility, and Risk of BSs}
\subsubsection{Utility of BSs for Communication Services} We obtain the utility of a BS for providing communication services through three components: \textit{(i)} payments received from MUs, {\textit{(ii)} power cost incurred for serving communication demands (i.e., $\mathbbm{c}^\mathsf{(c),Pow}_{i,j} = \omega_5 {P}_{i,j}^\mathsf{(c)}$, where $\omega_5$ denotes the unit power cost),} {\textit{(iii)} compensation received from MUs who break their contracts, and \textit{(iv)} compensation paid to MUs selected as volunteers.} Accordingly, the utility function of BS \( s_j \) is formulated by
\begin{equation}\label{equ. comm BS U}
	\hspace{-4mm}\begin{aligned}
		u^\mathsf{(c),S}\left (s_j,\varphi^\mathsf{(c)}(s_j),\mathbb{C}^\mathsf{(c)}_{i,j}\right )&=\hspace{-5mm}\sum_{u_i\in\varphi^\mathsf{(c)}(s_j)}\hspace{-4mm}(1-\mathbbm{v}^\mathsf{(c)}_{i,j})\alpha_i(\mathbbm{c}^\mathsf{(c),Pay}_{i,j}\hspace{-1mm}-\mathbbm{c}^\mathsf{(c),Pow}_{i,j})\\[-.1em]
        &\hspace{-4mm}+(1-\alpha_i)\mathbbm{c}^\mathsf{(c),PelU}_{i,j}-\alpha_i\mathbbm{v}^\mathsf{(c)}_{i,j}\mathbbm{c}^\mathsf{(c),PelS}_{i,j}.
	\end{aligned}
    \hspace{-4mm}
\end{equation}

\subsubsection{Utility of BSs for Sensing Services} Similarly, the utility of a BS for providing sensing services comprises: 
\textit{(i)} payments received from sensing coalitions, {\textit{(ii)} power cost incurred for serving sensing demands (i.e., $\mathbbm{c}^\mathsf{(s),Pow}_{k,j} = \omega_5 {P}_{k,j}^\mathsf{(s)}$, where $\omega_5$ denotes the unit power cost),} {\textit{(iii)} compensation received from coalitions that break their contracts, and \textit{(iv)} compensation paid to coalitions who are determined as volunteers}\footnote{The compensations $\mathbbm{c}_{i,j}^{\mathsf{(c),PelS}}$ and $\mathbbm{c}_{k,j}^{\mathsf{(s),PelS}}$ are considered to be fixed as determined by the contract, rather than being dynamically computed from user-specific utility losses. This design ensures implementability under incomplete information and mitigates the risk of strategic manipulation of compensations.}. Mathematically, we have 
\begin{equation}\label{equ. sensing BS U}
	 \hspace{-4mm}\begin{aligned}
	u^\mathsf{(s),S}\left(s_j,\varphi^\mathsf{(s)}(s_j), \mathbb{C}^\mathsf{(s)}_{k,j}\right)&=\hspace{-4mm}\sum_{\bm{c}_k\in\varphi^\mathsf{(c)}(s_j)}\hspace{-4mm}(1-\mathbbm{v}^\mathsf{(s)}_{k,j})\beta_k(\mathbbm{c}^\mathsf{(s),Pay}_{k,j}-\mathbbm{c}^\mathsf{(s),Pow}_{k,j})\\[-.1em]
    &\hspace{-4mm}+(1-\beta_k)\mathbbm{c}^\mathsf{(s),PelU}_{k,j}-\beta_k\mathbbm{v}^\mathsf{(s)}_{k,j}\mathbbm{c}^\mathsf{(s),PelS}_{k,j}.
	\end{aligned}
    \hspace{-6mm}
\end{equation}
Due to uncertainties that prevent the direct computation of values in (\ref{equ. comm BS U}) and (\ref{equ. sensing BS U}), we consider their expected values:
\begin{equation}
 \hspace{-9mm}
	\begin{aligned}
		&\mathbb{E}\left [u^\mathsf{(c),S}(s_j,\varphi^\mathsf{(c)}(s_j),\mathbb{C}^\mathsf{(c)}_{i,j})\right ]\\
        &~~~~~~~~~~~~=\hspace{-3mm}\sum_{u_i\in\varphi^\mathsf{(c)}(s_j)}(1-\mathbb{E}[\mathbbm{v}^\mathsf{(c)}_{i,j}])\mathbb{E}[\alpha_i](\mathbbm{c}^\mathsf{(c),Pay}_{i,j}-\mathbbm{c}^\mathsf{(c),Pow}_{i,j})\\[-.1em]
        &~~~~~~~~~~~~~~+(1-\mathbb{E}[\alpha_i])\mathbbm{c}^\mathsf{(c),PelU}_{i,j}-\mathbb{E}[\alpha_i]\mathbb{E}[\mathbbm{v}^\mathsf{(c)}_{i,j}]\mathbbm{c}^\mathsf{(c),PelS}_{i,j},
	\end{aligned}
     \hspace{-4mm}
\end{equation}
\begin{equation} \hspace{-4mm}
\hspace{-9mm}
	\begin{aligned}
		&\mathbb{E}\left [u^\mathsf{(s),S}(s_j,\varphi^\mathsf{(s)}(s_j),\mathbb{C}^\mathsf{(s)}_{k,j})\right ]\\&~~~~~~~~~~~~=\hspace{-3mm}\sum_{\bm{c}_k\in\varphi^\mathsf{(c)}(s_j)}(1-\mathbb{E}[\mathbbm{v}^\mathsf{(s)}_{k,j}])\mathbb{E}[\beta_k](\mathbbm{c}^\mathsf{(s),Pay}_{k,j}-\mathbbm{c}^\mathsf{(s),Pow}_{k,j})\\[-.1em]&~~~~~~~~~~~~~~~+(1-\mathbb{E}[\beta_k])\mathbbm{c}^\mathsf{(s),PelU}_{k,j}-\mathbb{E}[\beta_k]\mathbb{E}[\mathbbm{v}^\mathsf{(s)}_{k,j}]\mathbbm{c}^\mathsf{(s),PelS}_{k,j},
	\end{aligned}
     \hspace{-4mm}
\end{equation}
where derivations on how we obtain the expected values are detailed by Appendix B. Similar to Sec.~\ref{sec:UtilityClients}, we define two key risk assessment aspects for BSs as follows:

\noindent~1) A BS \( s_j \) risks not being able to provide the committed bandwidth during actual transactions, quantified as
\begin{equation}
	\begin{aligned}		&R_1^\mathsf{S}\big(s_j,\varphi^\mathsf{(c)}(s_j),\mathbb{C}^\mathsf{(c)}_{i,j},\varphi^\mathsf{(s)}(s_j),\mathbb{C}^\mathsf{(s)}_{k,j}\big)=\\[-.1em]
    &~~~~\Pr\Bigg(\sum_{u_i\in\varphi^\mathsf{(c)}(s_j)}\alpha_i\mathbbm{c}^\mathsf{(c),B}_{i,j}+\sum_{\bm{c}_k\in\varphi^\mathsf{(s)}(s_j)}\beta_k\mathbbm{c}^\mathsf{(s),B}_{k,j}> B_j\Bigg).
	\end{aligned}
\end{equation}


\noindent~2) A BS \( s_j \) risks failing to supply the stipulated power resources during actual transactions, quantified as
\begin{equation}
	\begin{aligned}		&R_2^\mathsf{S}(s_j,\varphi^\mathsf{(c)}(s_j),\mathbb{C}^\mathsf{(c)}_{i,j},\varphi^\mathsf{(s)}(s_j),\mathbb{C}^\mathsf{(s)}_{k,j})=\\[-.1em]
    &~~~~\Pr\Bigg(\sum_{u_i\in\varphi^\mathsf{(c)}(s_j)}\alpha_i\mathbbm{c}^\mathsf{(c),Pow}_{i,j}+\sum_{\bm{c}_k\in\varphi^\mathsf{(s)}(s_j)}\beta_k\mathbbm{c}^\mathsf{(s),Pow}_{k,j}> P_j\Bigg).
	\end{aligned}
\end{equation}

Managing these risks is crucial; otherwise, a BS may prefer online trading over committing to long-term contracts.

\subsection{Problem Formulation of Offline Trading Mode}\label{sec:probForm}
We formulate the resource trading in the offline trading mode to optimize the M2M matching and long-term contracts between clients and BSs. Accordingly, the objective of each BS \( s_j \in \bm{\mathcal{S}} \) is to \textit{maximize its overall utility} as follows:
\begin{subequations}
	\begin{align}
\hspace{-3mm}\bm{\mathcal{F}^\mathsf{S}} \hspace{-0.1mm}{:}&\hspace{-1.5mm}\underset{{\mathbb{C}^\mathsf{(c)}_{i,j},\mathbb{C}^\mathsf{(s)}_{k,j}}}{\max}\hspace{-1.25mm}\mathbb{E}\hspace{-0.75mm}\left[\hspace{-0.4mm}u^\mathsf{(c),S}(s_j,\hspace{-0.25mm}\varphi^\mathsf{(c)}(s_j),\hspace{-0.25mm}\mathbb{C}^\mathsf{(c)}_{i,j}){+}u^\mathsf{(s),S}(s_j,\hspace{-0.25mm}\varphi^\mathsf{(s)}(s_j),\hspace{-0.25mm}\mathbb{C}^\mathsf{(s)}_{k,j})\hspace{-0.45mm}\right] \label{equ. PF BS} \tag{37}\hspace{-3mm}\\
&\hspace{-4mm}\text{\textbf{s.t.}}~~~\hspace{-0mm}\varphi^\mathsf{(c)}\left(s_j\right)\subseteq\bm{\mathcal{U}},\varphi^\mathsf{(s)}\left(s_j\right)\subseteq\bm{\mathcal{C}}, \mu\left(\bm{c}_k\right)\subseteq\bm{\mathcal{U}}, \tag{37a}\label{equ. PF BS C1}\\
		&\hspace{-4mm}u_i\in \varphi^\mathsf{(c)}(s_j), \bm{c}_k\in \varphi^\mathsf{(s)}(s_j), u_i\in\mu(\bm{c}_k), \tag{37b}\label{equ. PF BS C2}\\
		&\hspace{-4mm}\sum_{u_i\in\varphi^\mathsf{(c)}(s_j)}B_{i,j}^\mathsf{(c)}+\sum_{\bm{c}_k\in\varphi^\mathsf{(s)}(s_j)}\mathbbm{c}^\mathsf{(s),B}_{k,j}\leq (1+O_j^\mathsf{B})B_j, \tag{37c}\label{equ. PF BS C3}\\
		&\hspace{-4mm}\sum_{u_i\in\varphi^\mathsf{(c)}(s_j)}P_{i,j}^\mathsf{(c)}+\sum_{\bm{c}_k\in\varphi^\mathsf{(s)}(s_j)}\mathbbm{c}^\mathsf{(s),Pow}_{k,j}\leq (1+O_j^\mathsf{Pow})P_j, \tag{37d}\label{equ. PF BS C4}\\
		&\hspace{-4mm}R_1^\mathsf{S}(s_j,\varphi^\mathsf{(c)}(s_j),\mathbb{C}^\mathsf{(c)}_{i,j},\varphi^\mathsf{(s)}(s_j),\mathbb{C}^\mathsf{(s)}_{k,j})\leq \rho_1,\label{equ. PF BS C5}\tag{37e}
		\\&\hspace{-4mm}R_2^\mathsf{S}(s_j,\varphi^\mathsf{(c)}(s_j),\mathbb{C}^\mathsf{(c)}_{i,j},\varphi^\mathsf{(s)}(s_j),\mathbb{C}^\mathsf{(s)}_{k,j})\leq \rho_2,\label{equ. PF BS C6}\tag{37f}
	\end{align}
\end{subequations}

\noindent where $ \rho_1 $ and $\rho_2 $ are risk thresholds falling in interval $ (0, 1] $, and $O_j^\mathsf{B}$ and $O_j^\mathsf{Pow}$ represent the overbooking rate of BS $s_j$ for bandwidth and power resource, respectively. In $ \bm{\mathcal{F}^\mathsf{S}} $, constraints (\ref{equ. PF BS C1}) and (\ref{equ. PF BS C2}) enforce the feasibility of communication and sensing M2M matchings  $\varphi^\mathsf{(c)}$ and $\varphi^\mathsf{(s)}$. 
 Constraints (\ref{equ. PF BS C3}) and (\ref{equ. PF BS C4}) ensure that the bandwidth and power resources sold by BS $s_j$ do not exceed its supply $(1+O_j^\mathsf{B})B_j$ and $(1+O_j^\mathsf{Pow})P_j$ after overbooking. Constraints (\ref{equ. PF BS C5}) and (\ref{equ. PF BS C6}) specify the risks of BSs with their tractable forms obtained in Appendix B through probabilistic analysis.
Further, each client (i.e., $u_i $ or $\bm{c}_k$) aims \textit{to maximize its utility} as follows:
\begin{subequations}
	\begin{align}
	\bm{\mathcal{F}^\mathsf{U}}:~&
	\left\{ \begin{matrix}
		\underset{{\mathbb{C}^\mathsf{(c)}_{i,j}}}{\max}~\mathbb{E}\left[u^\mathsf{(c),U}(u_i,\varphi^\mathsf{(s)}(u_i),\mathbb{C}^\mathsf{(c)}_{i,j})\right] \\[-.2em]
		\underset{{\mathbb{C}^\mathsf{(s)}_{k,j}}}{\max}~\mathbb{E}\left[u^\mathsf{(s),U}(\bm{c}_k,\varphi^\mathsf{(s)}(\bm{c}_k),\mathbb{C}^\mathsf{(s)}_{k,j})\right]
	\end{matrix}\right\}, \tag{38}\label{equ. PF MU}\\[-.1em]
		\text{\textbf{s.t.}}~~~
		&\varphi^\mathsf{(c)}\left(u_i\right)\subseteq\bm{\mathcal{S}}, \mu\left(u_i\right)\subseteq\bm{\mathcal{C}},\varphi^\mathsf{(s)}\left(\bm{c}_k\right)\subseteq\bm{\mathcal{S}} ,\label{equ. PF MU C1}\tag{38a}\\[-.2em]
		&s_j\in \varphi^\mathsf{(c)}(u_i), \bm{c}_k\in \mu(u_i), s_j\in\varphi^\mathsf{(s)}(\bm{c}_k), \tag{38b}\label{equ. PF MU C2}\\[-.2em]
		&V^\mathsf{(c)}_{i,j}\ge \mathbbm{c}^\mathsf{(c),Pay}_{i,j},V^\mathsf{(s)}_{i,j}=V^\mathsf{(s),max }_{k,j}\ge \mathbbm{c}^\mathsf{(s),Pay}_{i,j}, \tag{38c}\label{equ. PF MU C3}\\[-.2em]
		&V^\mathsf{(c)}_{i,j}\geq R^\mathsf{req},\tag{38d}\label{equ. PF MU C4}\\[-.2em]
		&V^\mathsf{(s),max }_{k,j}\geq S^\mathsf{req},\tag{38e}\label{equ. PF MU C5}\\[-.2em]
            & B_{\min} \leq B^\mathsf{(c)}_{i,j}, B^\mathsf{(s)}_{k,j}\leq B_{\max},\tag{38f}\label{equ. PF MU C8}\\[-.2em]
		&P_{\min} \leq P^\mathsf{(c)}_{i,j}, P^\mathsf{(s)}_{k,j}\leq P_{\max},\tag{38g}\label{equ. PF MU C9}\\[-.2em] 
		&R_1^\mathsf{U}(u_i,\varphi^\mathsf{(c)}(u_i),\mathbb{C}^\mathsf{(c)}_{i,j})\leq \rho_3,\tag{38h}\label{equ. PF MU C6}\\[-.2em]
		&R_2^\mathsf{U}(\bm{c}_k,\varphi^\mathsf{(s)}(\bm{c}_k),\mathbb{C}^\mathsf{(s)}_{k,j})\leq \rho_4,\tag{38i}\label{equ. PF MU C7}
	\end{align}
\end{subequations}

\noindent where \( B_{\min} \) and \( B_{\max} \) represent the lower and upper bounds of bandwidth resources that each MU can request, while \( P_{\min} \) and \( P_{\max} \) denote the minimum and maximum power resources that can be allocated to each MU. $ \rho_3,\rho_4\in (0, 1] $ are risk thresholds. In $ \bm{\mathcal{F}^\mathsf{U}} $, constraints (\ref{equ. PF MU C1}) and (\ref{equ. PF MU C2}) guarantee the feasibility of communication and sensing M2M matchings  $\varphi^\mathsf{(c)}$ and $\varphi^\mathsf{(s)}$. Constraint (\ref{equ. PF MU C3}) ensures that the obtained valuation of $u_i$ exceeds its payments, while constraints (\ref{equ. PF MU C4}) and (\ref{equ. PF MU C5}) ensure that the communication and sensing service quality of each MU or sensing coalition meets the corresponding requirements. Constraints (\ref{equ. PF MU C8}) and (\ref{equ. PF MU C9}) guarantee the bandwidth and power resources requested by each client for services are constrained within a certain range. Also, constraints (\ref{equ. PF MU C6}) and (\ref{equ. PF MU C7}) specify the risks of each MU with their tractable forms obtained in Appendix B.

The offline trading mode thus presents a \textit{multi-objective optimization (MOO) problem} involving both $\bm{\mathcal{F}^\mathsf{S}}$ and $\bm{\mathcal{F}^\mathsf{U}}$, where the conflicting utilities of different parties make designing a win-win solution a complex task. Furthermore, the probabilistic nature of risks further complicates the problem. To address this MOO problem, we propose offRFW$^2$M, which obtains long-term contracts while achieving mutually beneficial expected utilities and managing risks for both parties. 

\subsection{Solution Design: Structure of offRFW$^2$M}

In a nutshell, offRFW$^2$M enables BSs and clients to negotiate the quantity and pricing of bandwidth and power resources for two distinct service types: \textit{(i)} individual MUs engage in resource trading for communication services; \textit{(ii)} coalitions participate in resource trading for sensing services. Given the interdependencies between resource demands and pricing strategies, offRFW$^2$M is an iterative method that we describe below, with its details outlined in Alg.~\ref{Alg:1}.
\begin{algorithm}[t!] 
	{\scriptsize \setstretch{0.4}\caption{{Proposed Role-Friendly Win-Win Matching for Offline Trading (offRFW$^2$M)}\label{Alg:1}}
		\LinesNumbered 
		\textbf{Initialization:} $ \mathcal{X} \leftarrow 1 $, $ \mathbbm{c}^\mathsf{(c),Pay}_{i,j}\left\langle 1 \right\rangle \leftarrow p^\mathsf{\min}_{i,j}$, $ \mathbbm{c}^\mathsf{(s),Pay}_{k,j}\left\langle 1 \right\rangle \leftarrow p^\mathsf{\min}_{k,j}$, $\mathbbm{c}^\mathsf{(c),PelU}_{i,j}$, $\mathbbm{c}^\mathsf{(c),PelS}_{i,j}$, $\mathbbm{c}^\mathsf{(s),PelU}_{k,j}$, $\mathbbm{c}^\mathsf{(s),PelS}_{k,j}$, ${flag}_{j} \leftarrow 1 $, $\mathbb{Y}^\mathsf{(c)}\left( u_i \right)\leftarrow \varnothing$, $\mathbb{Y}^\mathsf{(c)}\left( s_{j} \right)\leftarrow \varnothing$, $\mathbb{Y}^\mathsf{(s)}\left( \bm{c}_k \right)\leftarrow \varnothing$, $\mathbb{Y}^\mathsf{(s)}\left( s_{j} \right)\leftarrow \varnothing$\ 
		
		\For{$\forall u_i\in\bm{\mathcal{U}}$}{
		$\bm{c}_k\leftarrow u_i$ forms coalitions based on shared sensing target, where $\bm{c}_k\in\bm{\mathcal{C}}$
		}
		\While{$ \sum_{u_i\in\bm{\mathcal{U}}}{flag}_{i} $ and $\sum_{\bm{c}_k\in\bm{\mathcal{C}}}{flag}_{k}$}{
			\textbf{$ {flag}_{i} \leftarrow {\bf False} $, $ {flag}_{k} \leftarrow {\bf False} $}
			
			\textbf{Calculate:} $\overrightarrow{L^\mathsf{(c)}_i}$ and $\overrightarrow{L^\mathsf{(s)}_k}$ under constraints (\ref{equ. PF MU C4}) - (\ref{equ. PF MU C7})
            
            $ F^\mathsf{(c),\star}_i\left\langle \mathcal{X} \right\rangle\leftarrow \overrightarrow{L^\mathsf{(c)}_i}$, $ F^\mathsf{(s),\star}_k\left\langle \mathcal{X} \right\rangle\leftarrow \overrightarrow{L^\mathsf{(s)}_k}$      
			 $\mathbb{Y}^\mathsf{(c)}\left( u_i \right), B_{i,j}^\mathsf{(c)}\left\langle \mathcal{X} \right\rangle, P_{i,j}^\mathsf{(c)}\left\langle \mathcal{X} \right\rangle, \mathbbm{c}^\mathsf{(c),Pay}_{i,j}\left\langle \mathcal{X} \right\rangle  \leftarrow F^\mathsf{(c),\star}_i\left\langle \mathcal{X} \right\rangle $, $ \mathbb{Y}^\mathsf{(s)}\left( \bm{c}_k \right), B_{k,j}^\mathsf{(s)}\left\langle \mathcal{X} \right\rangle, P_{k,j}^\mathsf{(s)}\left\langle \mathcal{X} \right\rangle, \mathbbm{c}^\mathsf{(s),Pay}_{k,j}\left\langle \mathcal{X} \right\rangle \} \leftarrow F^\mathsf{(s),\star}_k\left\langle \mathcal{X} \right\rangle $

			\If{$ \forall\mathbb{Y}^\mathsf{(c)}\left( u_i \right) \neq \varnothing $ or $ \forall\mathbb{Y}^\mathsf{(s)}\left( \bm{c}_k \right) \neq \varnothing $}{
				\For{$\forall u_i \in \bm{\mathcal{U}}$ }{
				$ u_i $ sends a proposal to $ s_j $, where $s_j\in\mathbb{Y}^\mathsf{(c)}\left( u_i \right)$}
				\For{$\forall \bm{c}_k \in \bm{\mathcal{C}}$ }{
				$ \bm{c}_k $ sends a proposal  to $ s_j $, where $s_j\in\mathbb{Y}^\mathsf{(s)}\left( \bm{c}_k \right)$}
				
				\While{
					$ \Sigma_{u_i\in \bm{\mathcal{U}}}{flag}_{i} > 0 $}{
					$ {\widetilde{\mathbb{Y}}}\left(s_j\right) \leftarrow$ collect proposals from clients
					
					$ \mathbb{Y}^\mathsf{(c)}(s_j)$, $\mathbb{Y}^\mathsf{(s)}(s_j) \leftarrow $ choose MUs from $ {\widetilde{\mathbb{Y}}}\left(s_j\right) $ that can achieve the maximization of the expected utility of BS $s_j$ (i.e., (\ref{equ. PF BS})) by using DP under (\ref{equ. PF BS C3}), (\ref{equ. PF BS C4}), (\ref{equ. PF BS C5}), and (\ref{equ. PF BS C6})
					
					$ s_j $ temporally accepts the clients in $ \mathbb{Y}^\mathsf{(c)}(s_j) $ and $ \mathbb{Y}^\mathsf{(s)}(s_j) $, and rejects the others
				}
				
				\For{
					$ \forall u_i \in \mathbb{Y}^\mathsf{(c)}\left( s_j \right) $
				}{
					\If{$ u_i $ is rejected by $ s_j $, $V^\mathsf{(c)}_{i,j}\ge \mathbbm{c}^\mathsf{(c),Pay}_{i,j}$ and constraints (\ref{equ. PF MU C4}) and (\ref{equ. PF MU C6}) are met}{
						$ \mathbbm{c}^\mathsf{(c),Pay}_{i,j}\left\langle {\mathcal{X} + 1} \right\rangle \leftarrow \min\left\{ \mathbbm{c}^\mathsf{(c),Pay}_{i,j}\left\langle \mathcal{X} \right\rangle + \mathrm{\Delta}p~,{ V}^\mathsf{(c)}_{i,j} \right\} $}
					\Else{$ \mathbbm{c}^\mathsf{(c),Pay}_{i,j}\left\langle {\mathcal{X} + 1} \right\rangle \leftarrow \mathbbm{c}^\mathsf{(c),Pay}_{i,j}\left\langle \mathcal{X} \right\rangle $}
				}
				
				\For{
					$ \forall \bm{c}_k \in \mathbb{Y}^\mathsf{(s)}\left( s_j \right) $
				}{
					\If{$ u_i $ is rejected by $ s_j $, $V^\mathsf{(s)}_{k,j}\ge \mathbbm{c}^\mathsf{(s),Pay}_{k,j}$ and constraints (\ref{equ. PF MU C5}) and (\ref{equ. PF MU C7}) are met}{
						$ \mathbbm{c}^\mathsf{(s),Pay}_{k,j}\left\langle {\mathcal{X} + 1} \right\rangle \leftarrow \min\left\{ \mathbbm{c}^\mathsf{(s),Pay}_{k,j}\left\langle \mathcal{X} \right\rangle + \mathrm{\Delta}p~,{ V}^\mathsf{(s)}_{k,j} \right\} $}
					\Else{$ \mathbbm{c}^\mathsf{(s),Pay}_{k,j}\left\langle {\mathcal{X} + 1} \right\rangle \leftarrow \mathbbm{c}^\mathsf{(s),Pay}_{k,j}\left\langle \mathcal{X} \right\rangle $}
				}
                $ p_{i,\mathbbm{n}}^\mathsf{(c)}\leftarrow \mathbbm{c}^\mathsf{(c),Pay}_{i,j}\left\langle \mathcal{X}+1 \right\rangle, p_{i,\mathbbm{n}}^\mathsf{(c)} \in F^\mathsf{(c),\star}_{i}\left\langle \mathcal{X} \right\rangle$, $
                    p_{k,\mathbbm{m}}^\mathsf{(s)}\leftarrow \mathbbm{c}^\mathsf{(c),Pay}_{i,j}\left\langle \mathcal{X}+1 \right\rangle, p_{i,\mathbbm{n}}^\mathsf{(c)} \in F^\mathsf{(c),\star}_{i}\left\langle \mathcal{X} \right\rangle$
                    
					\If{$\mathcal{X}\le2$ and there exists $F^\mathsf{(c),\star}_{i}\left\langle \mathcal{X}-1 \right\rangle \neq F^\mathsf{(c),\star}_{i}\left\langle \mathcal{X} \right\rangle $ or $F^\mathsf{(s),\star}_{k}\left\langle \mathcal{X}-1 \right\rangle \neq F^\mathsf{(s),\star}_{k}\left\langle \mathcal{X} \right\rangle $}{
					$ {flag}_{i} \leftarrow {\bf True} $, $ {flag}_{k} \leftarrow {\bf True} $	}	\      
				$ \mathcal{X}\leftarrow \mathcal{X}+1 $
			}

		}

		$\varphi^\mathsf{(c)}(s_j)\leftarrow\mathbb{Y}^\mathsf{(c)}(s_j)$, $\varphi^\mathsf{(c)}(u_i)\leftarrow \mathbb{Y}^\mathsf{(c)}(u_i)$,
		$\varphi^\mathsf{(s)}(s_j)\leftarrow\mathbb{Y}^\mathsf{(s)}(s_j)$, $\varphi^\mathsf{(s)}(\bm{c}_k)\leftarrow \mathbb{Y}^\mathsf{(s)}(\bm{c}_k)$ , $\mathcal{X} \leftarrow \mathcal{X}-1$

		\textbf{Return:} $\mathbb{C}_{i,j}^\mathsf{(c)} =\{ B_{i,j}^\mathsf{(c)}\left\langle \mathcal{X} \right\rangle, P_{i,j}^\mathsf{(c)}\left\langle \mathcal{X} \right\rangle, \mathbbm{c}^\mathsf{(c),Pay}_{i,j}\left\langle \mathcal{X} \right\rangle, 		\mathbbm{c}^\mathsf{(c),PelU}_{i,j}, \mathbbm{c}^\mathsf{(c),PelS}_{i,j} \}$, $\mathbb{C}_{k,j}^\mathsf{(s)} =\{ B_{k,j}^\mathsf{(s)}\left\langle \mathcal{X} \right\rangle, P_{k,j}^\mathsf{(s)}\left\langle \mathcal{X} \right\rangle, \mathbbm{s}^\mathsf{(s),Pay}_{k,j}\left\langle \mathcal{X} \right\rangle, \mathbbm{c}^\mathsf{(s),PelU}_{k,j}, \mathbbm{c}^\mathsf{(s),PelS}_{k,j} \} $}
\end{algorithm}

\noindent~\textbf{Step 1. Initialization} (line 1, Alg.~\ref{Alg:1}): The negotiation process involves multiple rounds, indexed by $\mathcal{X}$. The initial (i.e., $\mathcal{X}=1$) payment of communication services by an individual MU $u_i$ is set as $p^\mathsf{(c)}_{i,j}\left\langle 1 \right\rangle = p^\mathsf{(c),\min}_{i,j}$, while the initial payment for sensing services by each coalition $\bm{c}_k$ is set as $p^\mathsf{(s)}_{k,j}\left\langle 1 \right\rangle = p^\mathsf{(s),\min}_{k,j}$ (line 1). We also define the set $\mathbb{Y}^\mathsf{(c)}(u_i)$ to include the BSs that $u_i$ is interested in for communication service, and $\mathbb{Y}^\mathsf{(c)}(s_j)$ to capture the MUs temporarily selected by $s_j$ for communication service. Similarly, set $\mathbb{Y}^\mathsf{(s)}(\bm{c}_k)$ comprises the BSs that $\bm{c}_k$ is interested in for sensing service, while $\mathbb{Y}^\mathsf{(s)}(s_j)$ covers the sensing coalitions temporarily selected by $s_j$.

\noindent~\textbf{Step 2. Establishment of MU coalitions and preference lists} (lines 2-6, Alg.~\ref{Alg:1}): Before the matching process, MUs form sensing coalitions based on their sensing targets. At the beginning of each round $\mathcal{X}$, each individual MU announces its resource requests for communication service to BSs according to its preference list, defined in Definition \ref{def 5}.

\begin{Defn}(Preference List of each MU in offRFW$^2$M)\label{def 5}
Consider the set of feasible long-term contract solutions for MU $ u_i$ that satisfy (\ref{equ. PF MU C8}), (\ref{equ. PF MU C9}), and (\ref{equ. PF MU C4}) as {\footnotesize \( \bm{\mathcal{C}^\mathsf{(c),F}}_{i} = \{ \bm{F^{(c)}}_{i,1}, \dots, \bm{F^{(c)}}_{i,\mathbbm{n}}, \dots, \bm{F^{(c)}}_{i,|\bm{\mathcal{C}^\mathsf{(c),F}_{i}}|} \} \)}, where {\footnotesize\( \bm{F^{(c)}}_{i,\mathbbm{n}} = \{s^\mathsf{(c),F}_j, \mathbb{C}^\mathsf{(c),F}_{i,\mathbbm{n}} \} \)} represents a feasible solution, which consists of: (i) the BS {\footnotesize\( s^\mathsf{(c),F}_j \in \bm{\mathcal{S}} \)} to which the request should be sent, (ii) the feasible five-tuple contract item {\footnotesize\( \mathbb{C}^\mathsf{(c),F}_{i,n}= \{ B_{i,\mathbbm{n}}^\mathsf{(c)}, P_{i,\mathbbm{n}}^\mathsf{(c)}, p^\mathsf{(c)}_{i,\mathbbm{n}}, \mathbbm{c}^\mathsf{(c),PelS}_{i,j}, \mathbbm{c}^\mathsf{(c),PelU}_{i,j} \} \)}, where {\footnotesize\( B_{i,\mathbbm{n}}^\mathsf{(c)} \)} and {\footnotesize\( P_{i,\mathbbm{n}}^\mathsf{(c)} \)} represent the required bandwidth and power resources, and {\footnotesize\( p^\mathsf{(c)}_{i,\mathbbm{n}} \)} represents the bid for the current solution, initialized with {\footnotesize$p_{i,j}^\mathsf{(c),min}$}. The preference list \( \overrightarrow{L^\mathsf{(c)}_i} \) of an MU \( u_i \) regarding feasible solutions {\footnotesize\( \bm{F^{(c)}}_{i,\mathbbm{n}} \in \bm{\mathcal{C}^\mathsf{(c),F}}_{i} \)} is a vector of tuples, sorted in non-ascending order based on their expected utility:
\begin{equation}
\hspace{-3mm}
	\begin{aligned}
			\overrightarrow{L^\mathsf{(c)}_i} = [\bm{F^{(c)}}_{i,\mathbbm{n}}\in \bm{\mathcal{C}^\mathsf{(c),F}}_{i}, \text{sorted in non-ascending order  based on (\ref{equ. expected comm utility})}].
	\end{aligned}
    \hspace{-3mm}
\end{equation}
\end{Defn}
Meanwhile, each coalition announces its resource requests to BSs according to its preference list, defined in Definition \ref{def 6}.

\begin{Defn}(Preference List of Sensing Coalition in offRFW$^2$M)\label{def 6} Consider the set of feasible long-term contract solutions for coalition $ \bm{c}_k$ that satisfy  (\ref{equ. PF MU C8}), (\ref{equ. PF MU C9}), and (\ref{equ. PF MU C5})  as {\footnotesize\( \bm{\mathcal{C}^\mathsf{(s),F}}_{k} = \{ \bm{F^{(s)}}_{k,1}, \dots, \bm{F^{(s)}}_{k,\mathbbm{m}}, \dots, \bm{F^{(s)}}_{k,|\bm{\mathcal{C}^\mathsf{(s),F}_{k}}|} \} \)}, where {\footnotesize\( \bm{F^{(s)}}_{k,\mathbbm{m}} = \{s^\mathsf{(s),F}_j, \mathbb{C}^\mathsf{(s),F}_{k,\mathbbm{m}} \} \)} represents a feasible solution, which consists of: (i) the BS {\footnotesize\( s^\mathsf{(s),F}_j \in \bm{\mathcal{S}} \)} to which the request should be sent, 
(ii) the feasible five-tuple contract item {\footnotesize\( \mathbb{C}^\mathsf{(s),F}_{k,\mathbbm{m}} = \{ B_{k,\mathbbm{m}}^\mathsf{(s)}, P_{k,\mathbbm{m}}^\mathsf{(s)}, p^\mathsf{(s)}_{i,\mathbbm{m}}, \mathbbm{c}^\mathsf{(s),PelS}_{k,j}, \mathbbm{c}^\mathsf{(s),PelU}_{k,j} \} \)}, where {\footnotesize\( B_{k,\mathbbm{m}}^\mathsf{(s)} \)} and {\footnotesize\( P_{k,\mathbbm{m}}^\mathsf{(s)} \)} represent the required bandwidth and power resources, and {\footnotesize\( p^\mathsf{(s)}_{k,\mathbbm{m}} \)} represents the bid for the current solution, initialized with $p_{k,j}^\mathsf{(s),min}$. The preference list \( \overrightarrow{L^\mathsf{(s)}_k} \) of an coalition \( \bm{c}_k \) regarding feasible solutions {\footnotesize\( \bm{F^{(s)}}_{k,\mathbbm{m}} \in \bm{\mathcal{C}^\mathsf{(s),F}}_{k} \)} is a vector of tuples, sorted in non-ascending order based on their expected utility:
\begin{equation}
\hspace{-2.5mm}
	\begin{aligned}
			\overrightarrow{L^\mathsf{(s)}_k} = [\bm{F^{(s)}}_{k,\mathbbm{m}}\in \bm{\mathcal{C}^\mathsf{(s),F}}_{k},  \text{sorted in non-ascending order  based on (\ref{equ. expected sensing utility})}].
	\end{aligned}
    \hspace{-2mm}
\end{equation}
\end{Defn}
\noindent~\textbf{Step 3. Proposal of clients} (lines 7-12, Alg.~\ref{Alg:1}): At round \( \mathcal{X} \), each individual MU \( u_i \) and sensing coalition \( \bm{c}_k \) select their \textit{preferred solutions} {\footnotesize\( F^\mathsf{(c),\star}_{i}\left\langle \mathcal{X} \right\rangle \)} and {\footnotesize\( F^\mathsf{(s),\star}_k\left\langle \mathcal{X} \right\rangle \)} from the first elements of their preference lists {\footnotesize\( \overrightarrow{L_i^\mathsf{(c)}} \)} and {\footnotesize\( \overrightarrow{L_k^\mathsf{(s)}} \)}, respectively. They then record the selected BS \( s_j \) in {\footnotesize\( \mathbb{Y}^\mathsf{(c)}(u_i) \)} for communication and {\footnotesize\( \mathbb{Y}^\mathsf{(s)}(\bm{c}_k) \)} for sensing service, and obtain the request information regarding the required bandwidth, power resources, and payments (e.g., for communication service, {\footnotesize$B_{i,j}^\mathsf{(c)}\left\langle \mathcal{X} \right\rangle\leftarrow B_{i,\mathbbm{n}}^\mathsf{(c)}, P_{i,j}^\mathsf{(c)}\left\langle \mathcal{X} \right\rangle\leftarrow P_{i,\mathbbm{n}}^\mathsf{(c)}, \mathbbm{c}^\mathsf{(c),Pay}_{i,j}\left\langle \mathcal{X} \right\rangle\leftarrow p_{i,\mathbbm{n}}^\mathsf{(c)}$}). Each client then transmits its solution to the BSs in {\footnotesize\( \mathbb{Y}^\mathsf{(c)}(u_i)\)} and {\footnotesize\( \mathbb{Y}^\mathsf{(s)}(\bm{c}_k) \)}, initiating the resource negotiations.

\noindent~\textbf{Step 4. Client selection on BSs' side} (lines 13-16, Alg.~\ref{Alg:1}): After collecting the information of individual MUs and sensing coalitions in set ${\widetilde{\mathbb{Y}}}\left(s_j\right)$, each BS $s_j$ solves a two-dimensional 0-1 knapsack problem, which can be solved using dynamic programming (DP) \cite{MY tsc}, to determine a temporary selection of MUs denoted by {\footnotesize$\mathbb{Y}^\mathsf{(c)}(s_j)$} and sensing coalitions {\footnotesize$\mathbb{Y}^\mathsf{(s)}(s_j)$}, where {\footnotesize$\mathbb{Y}^\mathsf{(c)}(s_j)$} and {\footnotesize$\mathbb{Y}^\mathsf{(s)}(s_j)$} belong to {\footnotesize$ {\widetilde{\mathbb{Y}}}\left(s_j\right)$}, maximizing the expected utility of BS \( s_j \) while satisfying constraints (\ref{equ. PF BS C3}), (\ref{equ. PF BS C4}), (\ref{equ. PF BS C5}), and (\ref{equ. PF BS C6}). Then, each $s_j$ reports its decisions to the  MUs and sensing coalitions for the current round.

\noindent~\textbf{Step 5. Decision-making on clients' side} (lines 17-27, Alg.~\ref{Alg:1}): After receiving decisions from BSs, each MU {\footnotesize\( u_i \in \mathbb{Y}^\mathsf{(c)}(s_j) \)} and each coalition{\footnotesize \( \bm{c}_k \in \mathbb{Y}^\mathsf{(s)}(s_j) \)} evaluate their current solutions {\footnotesize$F^\mathsf{(c),\star}_i\left\langle \mathcal{X} \right\rangle$} and {\footnotesize$F^\mathsf{(s),\star}_k\left\langle \mathcal{X} \right\rangle$} . The payment for an individual MU \( u_i \) and coalition \( \bm{c}_k \) remains unchanged if any of the following conditions are met:
\textit{(i)} \( u_i \) or \( \bm{c}_k \) is accepted by \( s_j \); 
\textit{(ii)} the current payment {\footnotesize\( \mathbbm{c}^\mathsf{(c),Pay}_{i,j} \)} or {\footnotesize\( \mathbbm{c}^\mathsf{(s),Pay}_{k,j} \)} equals its valuation {\footnotesize\( V^\mathsf{(c)}_{i,j} \)}; 
\textit{(iii)} constraints (\ref{equ. PF MU C4}), (\ref{equ. PF MU C5}), (\ref{equ. PF MU C6}), and (\ref{equ. PF MU C7}) are not met. 
Otherwise, \( u_i \) or \( \bm{c}_k \) will increase its bid associated with the current solution {\footnotesize$F^\mathsf{(c),\star}_i\left\langle \mathcal{X} \right\rangle$ and $F^\mathsf{(s),\star}_k\left\langle \mathcal{X} \right\rangle$} for \( s_j \) in the next round by $\Delta p$ to enhance its competitiveness in the market.

\noindent~\textbf{Step 6. Repeat} (lines 4-30, Alg.~\ref{Alg:1}): If all the preferred solutions stay unchanged in two consecutive rounds, the matching process terminates. We use $ \sum_{u_i\in\bm{\mathcal{U}}}{flag}_{i}={\bf False} $ and $\sum_{\bm{c}_k\in\bm{\mathcal{C}}}{flag}_{k}={\bf False} $ to capture this (line 5). Otherwise, the next round starts, re-iterating the above steps (lines 4-30).

\noindent~\textbf{Computational complexity:} The computational complexity of our proposed offRFW$^2$M depends on the total number of rounds involved in Alg. 1 (denoted by \( \mathcal{X}^{\mathsf{max}} \)), the overbooked resources {\footnotesize\( (1 + O_j^\mathsf{B}) B_j \)} and {\footnotesize\( (1 + O_j^\mathsf{Pow}) P_j \)}, as well as the number of clients sending requests to BS \( s_j \) in the $ \mathcal{X}^\mathsf{\text{th}} $ round, denoted as {\footnotesize\( |{\widetilde{\mathbb{Y}}}\left(s_j\right)|_{\mathcal{X}} \)}. In particular, the overall complexity of offRFW$^2$M for each BS \( s_j \) is
{\footnotesize $\sum_{\mathcal{X}=1}^\mathsf{\mathcal{X}^{\mathsf{max}}} \mathcal{O}\left( |{\widetilde{\mathbb{Y}}}\left(s_j\right)|_{\mathcal{X}} \times (1 + O_j^\mathsf{B})B_j \times (1 + O_j^\mathsf{Pow}) P_j \right)$}.


\subsection{Solution Characteristics and Key Properties}
As offRFW$^2$M is deployed prior to practical transactions, our focus is on maximizing the utilities of clients and BSs while controlling potential risks. This differentiates our approach significantly from conventional matching mechanisms, which primarily emphasize immediate resource allocation without considering long-term contractual stability and risks. {We next define the concept of \textit{blocking pair}\footnote{A Blocking pair refers to any two entities/participants in a trading market that are not matched to each other but would prefer to be paired together, rather than remaining in their current matches. This concept is a cornerstone for evaluating the stability and efficiency of a given matching solution\cite{MY tsc}.}, representing a key factor that may lead to the instability/inefficiency of a matching.}

{
\begin{Defn}(Blocking Pairs for Communication and Sensing Services in offRFW$^2$M)
Under a given matching \( \varphi^\mathfrak{(X)} \), a client \( \mathsf{a} \) (representing either a MU \( u_i \) for communication when \( \mathfrak{(X)} = \mathsf{(c)} \) or a coalition \( \bm{c}_k \) for sensing when \( \mathfrak{(X)} = \mathsf{(s)} \)), a BS set \( \mathbb{S} \subseteq \bm{\mathcal{S}} \), and a contract set \( \mathbb{C}^\prime \), denoted by the triplet \( \left(\mathsf{a}; \mathbb{S}; \mathbb{C}^\prime\right) \), may form one of the following two types of blocking pairs.

\noindent \textbf{Type 1 blocking pair:} The pair satisfies the following two conditions:

    \noindent $\bullet$ The client $\mathsf{a}$ prefers the BS set $ \mathbb{S} $ over its currently matched set $ \varphi^\mathfrak{(X)}(\mathsf{a}) $:
    \begin{equation}\label{equ. 41}
    \resizebox{0.38\textwidth}{!}{$
        \mathbb{E}\left[ u^{\mathfrak{(X)},\mathsf{U}}(\mathsf{a},\mathbb{S},\mathbb{C}^\prime) \right] >
        \mathbb{E}\left[ u^{\mathfrak{(X)},\mathsf{U}}(\mathsf{a}, \varphi^\mathfrak{(X)}(\mathsf{a}), \mathbb{C}^\mathfrak{(X)}_{i,j}) \right]$}
    \end{equation}
    
    \noindent $\bullet$ Every BS $s_j \in \mathbb{S}$ prefers to reallocate its service from current matches to include $\mathsf{a}$, i.e., there exists a subset $\varphi^\mathfrak{(X)\prime}(s_j)$ of currently matched agents to be evicted, such that:
    \begin{equation}\label{equ. 42}
    \resizebox{0.44\textwidth}{!}{$
        \mathbb{E}\left[u^{\mathfrak{(X)},\mathsf{S}}(s_j, \{\varphi^\mathfrak{(X)}(s_j) \setminus \varphi^\mathfrak{(X)\prime}(s_j)\} \cup \{\mathsf{a}\}, \mathbb{C}^\prime)\right] >
        \mathbb{E}\left[u^{\mathfrak{(X)},\mathsf{S}}(s_j, \varphi^\mathfrak{(X)}(s_j), \mathbb{C}^\mathfrak{(X)}_{i,j})\right]$}
    \end{equation}

\noindent \textbf{Type 2 blocking pair:} The pair satisfies the following two conditions:

\noindent $\bullet$ The client $\mathsf{a}$ prefers the BS set $ \mathbb{S} $ over its currently matched set, as shown in (\ref{equ. 41}).
    
\noindent $\bullet$  Each BS $s_j \in \mathbb{S}$ prefers to additionally serve $\mathsf{a}$ while maintaining its current matches:
    \begin{equation}\label{equ. 44}\resizebox{0.44\textwidth}{!}{$
        \mathbb{E}\left[u^{\mathfrak{(X)},\mathsf{S}}(s_j, \varphi^\mathfrak{(X)}(s_j) \cup \{\mathsf{a}\}, \mathbb{C}^\prime)\right] >
        \mathbb{E}\left[u^{\mathfrak{(X)},\mathsf{S}}(s_j, \varphi^\mathfrak{(X)}(s_j), \mathbb{C}^{\mathfrak{(X)}}_{i,j})\right]$}
    \end{equation}

\end{Defn}
\vspace{0 mm}}

{ In essence, a Type 1 blocking pair compromises matching stability by incentivizing a BS to reallocate its resources to an alternative set of clients that offer higher utility. A Type 2 blocking pair similarly leads to instability, as the BS retains residual resources that could be used to serve additional clients and  enhance its utility. These two types of blocking pairs are the pillars of characterizing the key properties of offRFW$^2$M in the subsequent analysis.}

\begin{Prop}\label{Prop 1}(Individual Rationality of offRFW$^2$M) The proposed offRFW$^2$M mechanism ensures individual rationality for BSs, individual MUs, and sensing coalitions as follows:
	
	\noindent
	$\bullet$ For each BS: \textit{(i)} the bandwidth and power resources of BS $s_j$ booked to matched clients $\varphi^\mathsf{(c)}\left(s_j\right)$ and coalitions $\varphi^\mathsf{(s)}\left(s_j\right)$ does not exceed $(1+O_j^\mathsf{B})B_j$ and $(1+O_j^\mathsf{Pow})P_j$ after applying overbooking, i.e., constraints (\ref{equ. PF BS C3}) and (\ref{equ. PF BS C4}) are met; \textit{(ii)} the risks associated with each BS are maintained within a certain acceptable range, i.e., constraint (\ref{equ. PF BS C5}) and (\ref{equ. PF BS C6}) are satisfied.
	
	\noindent
	$\bullet$ For each client (i.e., each MU and each coalition): \textit{(i)} The value obtained by each client is at least equal to the payment it makes, ensuring that (\ref{equ. PF MU C3}) is met; \textit{(ii)} the risks associated with each client are  acceptable, satisfying  (\ref{equ. PF MU C4})-(\ref{equ. PF MU C7}).
\end{Prop}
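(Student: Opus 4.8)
The plan is to establish Proposition~\ref{Prop 1} as a structural (loop-invariant) property of offRFW$^2$M rather than through an optimization argument: I will show that at the end of every negotiation round $\mathcal{X}$ of Alg.~\ref{Alg:1}, the current temporary assignments $\mathbb{Y}^{\mathsf{(c)}}(s_j),\mathbb{Y}^{\mathsf{(s)}}(s_j)$ and their associated contracts $\mathbb{C}^{\mathsf{(c)}}_{i,j},\mathbb{C}^{\mathsf{(s)}}_{k,j}$ already satisfy the BS-side feasibility/risk constraints \eqref{equ. PF BS C3}--\eqref{equ. PF BS C6} and the client-side constraints \eqref{equ. PF MU C3}--\eqref{equ. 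PF MU C7}. Since the returned matching $\varphi^{\mathsf{(c)}},\varphi^{\mathsf{(s)}}$ and contracts are exactly those of the last executed round, the individual-rationality claim then follows at once. The argument decomposes cleanly into a client-side part and a BS-side part, mirroring the fact that in offRFW$^2$M the former constraints are enforced while clients build and update their preference lists and bids (Steps~2, 3 and 5), whereas the latter are enforced while each BS solves its selection subproblem (Step~4).

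\emph{Client side.} I would proceed by induction on the round index $\mathcal{X}$. In the base round, every payment is initialized to its minimum $p^{\min}_{i,j}$ (resp. $p^{\min}_{k,j}$), and by Definitions~\ref{def 5}--\ref{def 6} — whose feasible sets respect the resource-range constraints \eqref{equ. PF MU C8}--\eqref{equ. PF MU C9} and the QoS constraints \eqref{equ. PF MU C4}--\eqref{equ. PF MU C5} — together with the preference-list construction step in Alg.~\ref{Alg:1} that additionally imposes the client risk constraints \eqref{equ. PF MU C6}--\eqref{equ. PF MU C7}, the lists $\overrightarrow{L^{\mathsf{(c)}}_i},\overrightarrow{L^{\mathsf{(s)}}_k}$ contain only contract tuples satisfying \eqref{equ. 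PF MU C4}--\eqref{equ. PF MU C7}; moreover each tuple carries a bid bounded above by the corresponding valuation, so \eqref{equ. PF MU C3} holds for it. Hence the proposal a client submits (the head of its list) satisfies \eqref{equ. PF MU C3}--\eqref{equ. PF MU C7}. For the inductive step, the only operation that can change a client's contract between rounds is the bid-update rule of Step~5: the new bid equals $\min\{\cdot + \Delta p,\, V^{\mathsf{(c)}}_{i,j}\}$ (resp. $\min\{\cdot + \Delta p,\, V^{\mathsf{(s)}}_{k,j}\}$), which preserves \eqref{equ. PF MU C3}, and it is carried out only when \eqref{equ. PF MU C4} and \eqref{equ. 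PF MU C6} (resp. \eqref{equ. PF MU C5} and \eqref{equ. PF MU C7}) still hold; otherwise the payment is frozen (the \textbf{Else} branch), and in all cases the next round re-sorts the list over the feasible set only. The invariant is thus maintained, so at termination every matched client's contract satisfies \eqref{equ. PF MU C3}--\eqref{equ. PF MU C7}, while an unmatched client is individually rational trivially.

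\emph{BS side.} The crucial observation is Step~4: from the pool $\widetilde{\mathbb{Y}}(s_j)$ of received proposals, BS $s_j$ selects $\mathbb{Y}^{\mathsf{(c)}}(s_j)$ and $\mathbb{Y}^{\mathsf{(s)}}(s_j)$ by solving a two-dimensional $0$--$1$ knapsack via dynamic programming whose feasible region is, by construction, exactly the family of sub-collections obeying the overbooked bandwidth and power limits \eqref{equ. PF BS C3}--\eqref{equ. PF BS C4} and the risk constraints \eqref{equ. PF BS C5}--\eqref{equ. PF BS C6} (whose tractable closed forms are supplied in Appendix~B). Since the empty selection is always feasible, this subproblem is well posed, and every temporary BS selection — hence the returned $\varphi^{\mathsf{(c)}}(s_j),\varphi^{\mathsf{(s)}}(s_j)$ — respects \eqref{equ. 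PF BS C3}--\eqref{equ. PF BS C6}. Combined with the client-side conclusion, this yields both bullets of Proposition~\ref{Prop 1}.

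\emph{Main obstacle.} I expect the delicate point to lie in the inductive step on the client side, namely certifying that a bid increase cannot silently invalidate an already-satisfied risk constraint. Since raising $\mathbbm{c}^{\mathsf{(c),Pay}}_{i,j}$ lowers the realized utility in \eqref{equ. comm utility}, it can only \emph{raise} $R^{\mathsf{U}}_1$; one must therefore verify that the guard preceding the bid update is evaluated against the \emph{post-update} contract (equivalently, that the next-round re-construction of the preference list discards any tuple that has become infeasible), and carry the same reasoning through for coalitions, where the per-member splitting of both payment and volunteer compensation in \eqref{equ. sensing utility} must be propagated consistently into $R^{\mathsf{U}}_2$. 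Pinning down this monotonicity-and-guarding argument is the only step beyond routine bookkeeping; the remainder is a straightforward traversal of the control flow of Alg.~\ref{Alg:1}.
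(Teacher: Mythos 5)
Your proposal is correct and takes essentially the same route as the paper: the paper's proof likewise reads individual rationality directly off the control flow of Alg.~\ref{Alg:1}, citing the DP selection step (line 15) for the BS-side constraints \eqref{equ. PF BS C3}--\eqref{equ. PF BS C6} and the preference-list construction plus bid-update guards (lines 6, 17--26) for the client-side constraints \eqref{equ. PF MU C3}--\eqref{equ. PF MU C7}. Your loop-invariant/induction framing and the explicit flagging of the guard-versus-post-update-bid subtlety are more careful than the paper's two-paragraph sketch, but they formalize the same argument rather than replacing it.
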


\begin{Prop}(Fairness of offRFW$^2$M): The proposed offRFW$^2$M ensures fairness by preventing the formation of Type 1 blocking pairs, ensuring that clients are satisfied with their matched BSs and no BS is incentivized to reallocate its resources to a different set of clients.\end{Prop}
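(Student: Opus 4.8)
The plan is to argue by contradiction, exploiting the deferred-acceptance structure of Alg.~\ref{Alg:1}: in each round every client proposes the top (highest expected-utility) feasible solution of its preference list $\overrightarrow{L^\mathsf{(c)}_i}$ (resp. $\overrightarrow{L^\mathsf{(s)}_k}$); every BS $s_j$ retains, via the dynamic program (DP) of Step~4, the subset of received proposals that maximizes its expected utility subject to the capacity and risk constraints (\ref{equ. PF BS C3})--(\ref{equ. PF BS C6}); and a rejected client only ever \emph{raises} its bid (by $\Delta p$) before the next round. Suppose, toward a contradiction, that the matching $\varphi^{\mathfrak{(X)}}$ returned by offRFW$^2$M admits a Type~1 blocking pair $(\mathsf{a};\mathbb{S};\mathbb{C}^\prime)$. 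Since each client is matched to a single BS in the M2M matching definitions, it suffices to take $\mathbb{S}=\{s_j\}$; when $|\mathbb{S}|>1$, the blocking condition must hold for each member independently, so the argument is repeated BS by BS.

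\textbf{Step~1 (the blocking contract would have been on the table).} The contract $\mathbb{C}^\prime$ must satisfy the client-side constraints (\ref{equ. PF MU C3})--(\ref{equ. PF MU C7}), since otherwise it could not participate in a blocking pair, and by (\ref{equ. 41}) it yields $\mathsf{a}$ strictly higher expected utility than its terminal match. Hence some solution involving $s_j$ that is at least as good for $\mathsf{a}$ as $\mathbb{C}^\prime$ sits strictly above $\varphi^{\mathfrak{(X)}}(\mathsf{a})$ in $\mathsf{a}$'s preference list. Because a client in Alg.~\ref{Alg:1} only descends to a lower-ranked solution after being rejected on its current one, there is a round $\mathcal{X}^\prime$ at which $\mathsf{a}$ proposed such a solution to $s_j$ and was rejected. \textbf{Step~2 (the rejection was utility-optimal for $s_j$).} At round $\mathcal{X}^\prime$ the Step~4 DP had the feasible candidate set $\{\varphi^{\mathfrak{(X)}}(s_j)\setminus\varphi^{\mathfrak{(X)}\prime}(s_j)\}\cup\{\mathsf{a}\}$ available but chose a set excluding $\mathsf{a}$; hence the expected utility $s_j$ secured at round $\mathcal{X}^\prime$ is at least $\mathbb{E}[u^{\mathfrak{(X)},\mathsf{S}}(s_j,\{\varphi^{\mathfrak{(X)}}(s_j)\setminus\varphi^{\mathfrak{(X)}\prime}(s_j)\}\cup\{\mathsf{a}\},\mathbb{C}^\prime)]$, up to the weakly larger final bids of the retained clients.

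\textbf{Step~3 (monotonicity across rounds).} I would then show that the expected utility $\mathbb{E}[u^{\mathfrak{(X)},\mathsf{S}}(s_j,\cdot)]$ evaluated at $s_j$'s tentatively held set is non-decreasing over the rounds of Alg.~\ref{Alg:1}: the proposal pool seen by $s_j$ only gains clients or sees existing ones raise their bids, and any swap the DP performs is, by optimality of the DP, weakly utility-improving for $s_j$, so the held value cannot fall. Chaining Steps~2 and~3 gives $\mathbb{E}[u^{\mathfrak{(X)},\mathsf{S}}(s_j,\varphi^{\mathfrak{(X)}}(s_j),\mathbb{C}^{\mathfrak{(X)}}_{i,j})]\ge\mathbb{E}[u^{\mathfrak{(X)},\mathsf{S}}(s_j,\{\varphi^{\mathfrak{(X)}}(s_j)\setminus\varphi^{\mathfrak{(X)}\prime}(s_j)\}\cup\{\mathsf{a}\},\mathbb{C}^\prime)]$, directly contradicting (\ref{equ. 42}). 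Therefore $\varphi^{\mathfrak{(X)}}$ contains no Type~1 blocking pair, which is exactly fairness. Individual rationality (Prop.~\ref{Prop 1}) is invoked implicitly to guarantee that every contract compared in this chain respects all feasibility and risk constraints.

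I expect Step~3 to be the main obstacle. The DP is re-solved from scratch each round over a proposal set whose members change not only their bids but also their requested bandwidth/power bundles --- and whose feasible contract menus shrink as bids climb --- so the correct loop invariant (e.g., ``the held set's expected utility for $s_j$ never decreases, because every eviction the DP makes is compensated by a weakly more profitable admission'') must be stated carefully and carried through all rounds, including across the outer bid-update loop. A lesser subtlety is making Step~1 precise: the exact contract $\mathbb{C}^\prime$ need not appear verbatim in the run, so one argues via the dense enumeration of feasible contracts in Definitions~\ref{def 5}--\ref{def 6} on the $\Delta p$ bid grid that some solution $s_j$ actually evaluated is comparable, for both sides, to $\mathbb{C}^\prime$.
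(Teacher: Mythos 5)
Your contradiction argument targets the right condition, namely (\ref{equ. 42}), but it follows a genuinely different route from the paper's, and the step you yourself flag as the main obstacle is precisely the one the paper's proof avoids. The paper (Appendix C, the ``no Type 1 blocking pair'' propositions) does not establish any cross-round monotonicity of the BS's tentatively held utility. Instead it caps the client's bid: by the individual-rationality constraint (\ref{equ. PF MU C3}) and the risk constraint (\ref{equ. PF MU C6}), no feasible blocking contract $\mathbb{C}^\prime$ can pay $s_j$ more than $\min\{\mathbb{E}[V^\mathsf{(c)}_{i,j}],p^\mathsf{(c),max}_{i,j}\}$ (eq.~(\ref{59})); the bid-escalation loop of Alg.~\ref{Alg:1} guarantees that a client not matched to $s_j$ at termination was rejected by $s_j$'s DP \emph{at exactly this cap}, which yields (\ref{60}); and the subset relation (\ref{81}) shows that evicting a larger set cannot help $s_j$. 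Chaining these contradicts (\ref{equ. 42}) in one step, with no loop invariant. Your Step~3, by contrast, carries the whole weight of your argument and is only asserted. It does in fact hold for Alg.~\ref{Alg:1} --- a client accepted in round $\mathcal{X}$ keeps its bid and hence re-proposes the identical solution in round $\mathcal{X}+1$, while rejected clients only raise bids or depart, so the round-$(\mathcal{X}+1)$ DP can always reproduce the round-$\mathcal{X}$ selection and its optimal value cannot decrease --- but that observation must be stated and proved as a lemma, not deferred. If you prove it, your argument is a valid (and arguably cleaner, more Gale--Shapley-style) alternative; the paper's bid-cap argument buys a shorter proof at the cost of gliding over the fact that $\mathbb{C}^\prime$ may specify a different bandwidth--power bundle than any contract actually proposed, a point your Step~1 handles more honestly via the $\Delta p$ grid.
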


\begin{Prop}(Non-wastefulness of offRFW$^2$M): offRFW$^2$M guarantees non-wastefulness by preventing the formation of Type 2 blocking pairs, ensuring that BSs efficiently utilize their resources without leaving a surplus.\end{Prop}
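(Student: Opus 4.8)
The approach will be a proof by contradiction that parallels the Type~1 argument behind the fairness claim (Proposition~2). Suppose offRFW$^2$M terminates with matchings $\varphi^\mathsf{(c)},\varphi^\mathsf{(s)}$ and their contracts, yet a Type~2 blocking pair $(\mathsf{a};\mathbb{S};\mathbb{C}')$ exists, where $\mathsf{a}$ is an MU when $\mathfrak{(X)}=\mathsf{(c)}$ and a coalition when $\mathfrak{(X)}=\mathsf{(s)}$. Because our M2M matching definitions force each client to a single BS and the defining inequalities \eqref{equ. 41} and \eqref{equ. 44} are imposed BS-by-BS, it suffices to refute a reduced pair $(\mathsf{a};\{s_j\};\mathbb{C}')$ for one $s_j\in\mathbb{S}$, where $\mathbb{C}'$ is a feasible $(\mathsf{a},s_j)$ contract (respecting \eqref{equ. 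PF MU C8}, \eqref{equ. PF MU C9}, and \eqref{equ. PF MU C4} or \eqref{equ. PF MU C5}), so that $(s_j,\mathbb{C}')$ belongs to $\mathsf{a}$'s feasible solution set and is therefore an entry of the preference list $\overrightarrow{L^\mathfrak{(X)}_\mathsf{a}}$ in Definition~\ref{def 5} or Definition~\ref{def 6}.

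I would then invoke two facts about the state at termination. \textbf{(BS-side optimality.)} In the last round each BS $s_j$ sets $\varphi^\mathfrak{(X)}(s_j)=\mathbb{Y}^\mathfrak{(X)}(s_j)$ as the exact dynamic-programming optimum of its two-dimensional $0$--$1$ knapsack over the proposals $\widetilde{\mathbb{Y}}(s_j)$ it received, subject to the overbooked-capacity constraints \eqref{equ. PF BS C3}--\eqref{equ. PF BS C4} and the risk constraints \eqref{equ. PF BS C5}--\eqref{equ. PF BS C6}; consequently, for any proposer $\mathsf{a}'\in\widetilde{\mathbb{Y}}(s_j)$ that is not selected, augmenting $\varphi^\mathfrak{(X)}(s_j)$ by $\mathsf{a}'$ under the contract $\mathsf{a}'$ submitted either violates \eqref{equ. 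PF BS C3}--\eqref{equ. PF BS C6} or does not strictly raise $\mathbb{E}[u^{\mathfrak{(X)},\mathsf{S}}(s_j,\cdot)]$. \textbf{(Client-side monotone descent.)} Bids attached to a solution never decrease once raised, a client always proposes the top of its current list, and a rejected top solution is pushed strictly below its former position by the $\Delta p$ increment, so over rounds the BS that $\mathsf{a}$ proposes to moves monotonically down $\overrightarrow{L^\mathfrak{(X)}_\mathsf{a}}$; hence $\mathsf{a}$'s terminal contract maximizes $\mathsf{a}$'s expected utility among the feasible solutions still reachable, and every feasible solution ranked strictly above it was proposed at some round and rejected at a payment no smaller than the one it carries.

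Combining the two facts yields the contradiction. By \eqref{equ. 41}, $(s_j,\mathbb{C}')$ strictly dominates $\mathsf{a}$'s terminal match in $\mathsf{a}$'s expected utility, hence it lies strictly above that match in $\overrightarrow{L^\mathfrak{(X)}_\mathsf{a}}$; by the client-side fact, $\mathsf{a}$ therefore proposed to $s_j$ at some round a solution $(s_j,\widehat{\mathbb{C}})$ with payment at least that of $\mathbb{C}'$ and was rejected (acceptance would place $\mathsf{a}$ at $s_j$ under a contract no worse for $\mathsf{a}$ than $\mathbb{C}'$, already contradicting \eqref{equ. 41}). I then propagate this rejection to the terminal round: since payments only increase, a larger payment only helps $s_j$'s objective \eqref{equ. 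PF BS}, and inserting a single client can only tighten---never loosen---\eqref{equ. PF BS C3}--\eqref{equ. PF BS C6}; so the strict improvement and feasibility that \eqref{equ. 44} certifies for adding $\mathsf{a}$ under $\mathbb{C}'$ to the terminal allocation would have held, a fortiori, for adding $\mathsf{a}$ under $\widehat{\mathbb{C}}$ to $s_j$'s selection when $\mathsf{a}$ was being considered, so the DP of the BS-side fact would have kept $\mathsf{a}$ in $\varphi^\mathfrak{(X)}(s_j)$---contradicting that $(\mathsf{a};\{s_j\};\mathbb{C}')$ blocks. Therefore no Type~2 blocking pair can persist, i.e., offRFW$^2$M leaves no BS holding bandwidth/power headroom that a willing, service-eligible client could occupy while improving that BS's utility, which is exactly non-wastefulness.

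The step I expect to be the main obstacle is this ``propagation'': over rounds the proposal set $\widetilde{\mathbb{Y}}(s_j)$, the current selection $\mathbb{Y}^\mathfrak{(X)}(s_j)$, and the risk functionals $R_1^\mathsf{S},R_2^\mathsf{S}$ all change as other clients join, leave, or re-bid, so I must show that the profitability of admitting $\mathsf{a}$ that \eqref{equ. 44} certifies at termination is not an artifact that materialized only after $\mathsf{a}$'s last consideration by $s_j$. I would isolate a round-indexed monotonicity lemma---proved by induction on the round index---asserting that (i) bids are non-decreasing, (ii) $R_1^\mathsf{S}$ and $R_2^\mathsf{S}$ are monotone in the served set under the closed-form (Appendix~B) expressions, and (iii) once a feasible, strictly-improving augmentation by $\mathsf{a}$ exists for $s_j$, it remains feasible and strictly improving while $\mathsf{a}$ keeps proposing to $s_j$ with a non-decreasing bid. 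This lemma reuses exactly the monotonicity machinery that underlies the Type~1 (fairness) analysis, and its verification closes the proof.
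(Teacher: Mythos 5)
Your proposal is correct and follows essentially the same route as the paper's own argument: a proof by contradiction in which a client forming a Type~2 pair with $s_j$ must have proposed to $s_j$ and been rejected at its capped payment, and a rejection by the utility-maximizing DP can then only stem from resource exhaustion at $s_j$, contradicting the surplus that condition~\eqref{equ. 44} presupposes. Your treatment is in fact more careful than the paper's, which compresses your entire ``propagation across rounds'' concern into the single unproved assertion that ``the only reason of such a rejection is that $s_j$ has no surplus resources''; the round-indexed monotonicity lemma you flag as the remaining obstacle is precisely the step the paper leaves implicit.
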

\begin{Prop}(Strong Stability of offRFW$^2$M)\label{Prop 4} The proposed offRFW$^2$M achieves strong stability by ensuring that the matching remains individually rational, fair, and non-wasteful.
\end{Prop}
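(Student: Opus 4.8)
The plan is to recognize that, in our framework, \emph{strong stability} of a matching is \emph{by definition} the conjunction of three properties: individual rationality, fairness (absence of Type~1 blocking pairs), and non-wastefulness (absence of Type~2 blocking pairs). Consequently, the proof reduces to an assembly of the three preceding results — Proposition~\ref{Prop 1} together with the Fairness and Non-wastefulness propositions — once we verify that Type~1 and Type~2 blocking pairs exhaust all ways the output matchings of offRFW$^2$M could be ``blocked.''

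First I would make the target precise: the matchings $\varphi^\mathsf{(c)}$ and $\varphi^\mathsf{(s)}$ returned by Alg.~\ref{Alg:1} are strongly stable if \textit{(i)} every BS satisfies its overbooking-capacity and risk constraints \eqref{equ. PF BS C3}--\eqref{equ. PF BS C6} and every client satisfies its valuation, QoS, resource-range, and risk constraints \eqref{equ. PF MU C3}--\eqref{equ. PF MU C7}, and \textit{(ii)} no triplet $(\mathsf{a};\mathbb{S};\mathbb{C}')$ forms a blocking pair. For part~\textit{(i)} I would directly invoke Proposition~\ref{Prop 1}, which already establishes individual rationality for BSs and for clients under offRFW$^2$M. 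For part~\textit{(ii)} I would argue exhaustiveness: given any candidate triplet, each $s_j\in\mathbb{S}$ either accommodates $\mathsf{a}$ by evicting a nonempty subset $\varphi^\mathfrak{(X)\prime}(s_j)$ of its current partners — yielding the strict-gain condition \eqref{equ. 42}, i.e.\ Type~1 — or accommodates $\mathsf{a}$ with no eviction — yielding \eqref{equ. 44}, i.e.\ Type~2; since the eviction set is either empty or not, these two cases are mutually exclusive and collectively exhaustive. The Fairness proposition then rules out Type~1 (at termination no BS profits from a resource reallocation that admits some willing client), and the Non-wastefulness proposition rules out Type~2 (no BS terminates holding slack bandwidth/power that it could profitably fill with an unserved client). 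Hence no blocking triplet exists, and combining with part~\textit{(i)} gives strong stability.

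The step I expect to be the main obstacle is not the final logical assembly but justifying that Type~1 and Type~2 truly cover every deviation worth considering — in particular, a client $\mathsf{a}$ re-approaching a BS with a \emph{different} contract $\mathbb{C}'$ that alters the traded quantities, not merely the price. I would address this by observing that the preference lists $\overrightarrow{L^\mathsf{(c)}_i}$ and $\overrightarrow{L^\mathsf{(s)}_k}$ of Definitions~\ref{def 5} and~\ref{def 6} enumerate \emph{all} feasible (BS, five-tuple contract) solutions, that in each round a client proposes its top remaining feasible solution and raises its bid by $\Delta p$ whenever rejected while \eqref{equ. PF MU C4}--\eqref{equ. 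PF MU C7} still hold, so the termination rule (no preferred solution changes over two consecutive rounds) certifies that the client has exhausted all strictly improving proposals; symmetrically, each BS's per-round DP solves its two-dimensional $0$--$1$ knapsack to optimality over the proposals it holds, so at termination it cannot strictly gain by any feasible swap or addition. Thus the blocking-pair taxonomy is complete, and the claimed strong stability follows from Proposition~\ref{Prop 1} together with the Fairness and Non-wastefulness propositions.
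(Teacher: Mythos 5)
Your proposal is correct and takes essentially the same route as the paper: the paper's proof (Proposition~\ref{Prop 11} in Appendix~C) likewise treats strong stability as the conjunction of individual rationality, fairness, and non-wastefulness, and simply assembles the preceding convergence, individual-rationality, and no-blocking-pair propositions. Your added discussion of why Type~1 and Type~2 blocking pairs are mutually exclusive and exhaustive, and why the preference lists cover all feasible contracts, is a more explicit justification of a step the paper leaves implicit, but it does not change the argument.
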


\begin{Prop}(Stability of Sensing Coalitions in offRFW$^2$M)
In offRFW$^2$M, each sensing coalition $\bm{c}_k$ is stable, that is 
	
\noindent $\bullet$ For each MU $u_i$ in $\bm{c}_k$, its expected utility is above $u^\mathsf{(s)}_\mathsf{\min}$. 

\noindent $\bullet$ For each MU $u_i$ in sensing coalition $\bm{c}_k$, the expected utility obtained by joining the coalition \( \bm{c}_k \) (shown in (\ref{equ. expected sensing utility})) is lower than the expected utility when trading as an individual.
\end{Prop}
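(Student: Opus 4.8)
The plan is to reduce both bullets to statements about a single quantity: the per-member expected payoff of an MU $u_i\in\bm{c}_k$. Under the even-split conventions for sensing (the representative MU attains $V^\mathsf{(s),max}_{k,j}=\max_{u_\ell\in\bm{c}_k}V^\mathsf{(s)}_{\ell,j}$, while the coalition payment $\mathbbm{c}^\mathsf{(s),Pay}_{k,j}$, the breach penalty $\mathbbm{c}^\mathsf{(s),PelU}_{k,j}$ and the volunteer compensation $\mathbbm{c}^\mathsf{(s),PelS}_{k,j}$ are divided equally among the $|\bm{c}_k|$ members), the payoff seen by each $u_i\in\bm{c}_k$ equals $\tfrac{1}{|\bm{c}_k|}\,\mathbb{E}\!\left[u^\mathsf{(s),U}(\bm{c}_k,\varphi^\mathsf{(s)}(\bm{c}_k),\mathbb{C}^\mathsf{(s)}_{k,j})\right]$, with the bracketed term given by (\ref{equ. expected sensing utility}). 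The first bullet then reduces to lower-bounding this quantity by $u^\mathsf{(s)}_\mathsf{\min}$, and the second to comparing it against what $u_i$ would obtain by requesting sensing alone.

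For the first bullet I would invoke Proposition~\ref{Prop 1}: since offRFW$^2$M returns a contract $\mathbb{C}^\mathsf{(s)}_{k,j}$ that is individually rational for $\bm{c}_k$, constraints (\ref{equ. PF MU C3}), (\ref{equ. PF MU C5}) and (\ref{equ. PF MU C7}) hold. Constraint (\ref{equ. PF MU C3}) yields $|\bm{c}_k|V^\mathsf{(s),max}_{k,j}\ge\mathbbm{c}^\mathsf{(s),Pay}_{k,j}$, so the ``service-delivered'' term $(1-\mathbb{E}[\mathbbm{v}^\mathsf{(s)}_{k,j}])\mathbb{E}[\beta_k](\mathbb{E}[V^\mathsf{(s),max}_{k,j}]-\mathbbm{c}^\mathsf{(s),Pay}_{k,j})$ and the compensation term in (\ref{equ. expected sensing utility}) are nonnegative, and the only negative contribution, $-(1-\mathbb{E}[\beta_k])\mathbbm{c}^\mathsf{(s),PelU}_{k,j}$, is controlled because $1-\mathbb{E}[\beta_k]=\Pr(\beta_k=0)$ is non-increasing in the membership of $\bm{c}_k$ and is further limited by the risk cap (\ref{equ. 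PF MU C7}) (only feasible tuples enter the preference lists and survive the bid updates of Alg.~\ref{Alg:1}). Dividing by $|\bm{c}_k|$ and using the a priori choice of the tolerance $u^\mathsf{(s)}_\mathsf{\min}$ gives the bound. The delicate point is that (\ref{equ. PF MU C7}) bounds a probability, not an expectation; I would bridge this by splitting the expectation over the event $\{u^\mathsf{(s),U}(\bm{c}_k,\cdot)\le u^\mathsf{(s)}_\mathsf{\min}\}$ (probability at most $\rho_4$, loss at most $\mathbbm{c}^\mathsf{(s),PelU}_{k,j}$) and its complement (payoff $\ge 0$ by the above), which the choice of $u^\mathsf{(s)}_\mathsf{\min}$ is meant to absorb.

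For the second bullet I would spell out the ``go-it-alone'' counterfactual: were $u_i$ to leave $\bm{c}_k$ and request sensing as a singleton, \textit{(a)} its value would be $V^\mathsf{(s)}_{i,j}\le V^\mathsf{(s),max}_{k,j}$ (by (\ref{sensing value}) and the representative rule), \textit{(b)} it would bear the whole payment instead of a $1/|\bm{c}_k|$ share, and \textit{(c)} its service would be delivered only when it personally participates, so the participation probability drops from $\mathbb{E}[\beta_k]$ to $\mathbb{E}[\alpha_i]$, with $\mathbb{E}[\alpha_i]\le\mathbb{E}[\beta_k]$ since $\beta_k\ge\alpha_i$ pointwise. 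Feeding these three monotone comparisons into (\ref{equ. expected sensing utility}) and its singleton specialization shows that the per-member coalition payoff weakly dominates the solo payoff, so no member can strictly gain by unilaterally switching to individual trading — which is the coalition-stability statement. (The bullet as printed says ``lower than''; since what makes a coalition stable is precisely that no member profits from leaving, I read it as ``the coalition payoff is at least the solo payoff'' and would prove that inequality.)

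The main obstacle will be making the counterfactual of the second bullet rigorous: the solo contract is not a fixed object but the output of a \emph{different} execution of offRFW$^2$M, with different bids, a different knapsack-based selection at the BSs, and possibly a different matched BS, so one cannot simply compare against one specific solo contract at the same $s_j$. I would instead argue at the level of feasible sets and the monotone bid-raising dynamics (lines~17--27 of Alg.~\ref{Alg:1}): the family of feasible solutions that $\bm{c}_k$ explores dominates, in per-member expected payoff, the one the singleton $\{u_i\}$ would explore (since $\bm{c}_k$ inherits $u_i$'s channel-to-target quality, hence at least its value, while spreading cost and breach risk over $|\bm{c}_k|\ge1$ members), and I would then invoke the absence of a client-improving blocking structure guaranteed by the strong-stability result (Proposition~\ref{Prop 4}), applied to the hypothetical singleton client $\{u_i\}$, to exclude a profitable defection. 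Additional care is needed for the edge cases $|\bm{c}_k|=1$ (the two payoffs coincide) and coalitions whose representative is $u_i$ itself (where the advantage reduces to pure cost and risk sharing).
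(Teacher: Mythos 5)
Your overall route coincides with the paper's own (very terse) proof: the first bullet is discharged by the feasibility checks of Alg.~\ref{Alg:1} (line 24, enforcing (\ref{equ. PF MU C5}) and (\ref{equ. PF MU C7})), and the second by observing that a coalition member inherits the representative's value $V^\mathsf{(s),max}_{k,j}\ge V^\mathsf{(s)}_{i,j}$ while bearing only a $1/|\bm{c}_k|$ share of the payment and enjoying the higher participation probability $\mathbb{E}[\beta_k]\ge\mathbb{E}[\alpha_i]$ (the paper phrases this last point as ``even in the extreme case where only one MU from the coalition engages in a transaction''). Your reading of the second bullet's ``lower than'' as a typo for ``no lower than'' is confirmed by the paper's own conclusion, which states that joining $\bm{c}_k$ ``will not result in a lower expected utility than trading as an individual.''

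The one concrete gap is your bridge for the first bullet. Splitting the expectation over the event $\{u^\mathsf{(s),U}\le u^\mathsf{(s)}_\mathsf{\min}\}$ yields at best $\mathbb{E}[u^\mathsf{(s),U}]\ge -\rho_4\,\mathbbm{c}^\mathsf{(s),PelU}_{k,j}+(1-\rho_4)\,u^\mathsf{(s)}_\mathsf{\min}$, which falls short of $u^\mathsf{(s)}_\mathsf{\min}$ unless one further assumes $u^\mathsf{(s)}_\mathsf{\min}\le-\mathbbm{c}^\mathsf{(s),PelU}_{k,j}$; saying the choice of $u^\mathsf{(s)}_\mathsf{\min}$ ``absorbs'' this is an assumption, not an argument. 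The bridge you are missing is already in Appendix~B: the algorithm does not enforce the raw probabilistic constraint (\ref{equ. PF MU C7}) but its Markov surrogate $\left(u^\mathsf{(s)}_\mathsf{\max}-\mathbb{E}\left[u^\mathsf{(s),U}\right]\right)/\left(u^\mathsf{(s)}_\mathsf{\max}-u^\mathsf{(s)}_\mathsf{\min}\right)\le\rho_4$, and since $\rho_4\le 1$ this rearranges directly to $\mathbb{E}\left[u^\mathsf{(s),U}\right]\ge u^\mathsf{(s)}_\mathsf{\min}$ with no event splitting at all. On the second bullet, your concern that the solo counterfactual is the output of a different execution of offRFW$^2$M is legitimate but goes beyond the paper, whose proof implicitly compares the per-member payoff against the solo payoff under the \emph{same} contract terms; the fixed-contract comparison via your three monotone inequalities is all that is needed to reproduce the paper's argument, and the appeal to Proposition~\ref{Prop 4} for a hypothetical singleton client is unnecessary.
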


{Note that for the MOO problem defined by $ \bm{\mathcal{F}^\mathsf{U}} $ and $ \bm{\mathcal{F}^\mathsf{S}} $, a Pareto improvement occurs when the expected social welfare (i.e., the summation of expected utilities of clients and BSs)\cite{RW Matching3} can be increased with another matching result.}
{Subsequently, a matching is considered to be \textit{weakly Pareto optimal} if no other feasible matching leads to a strictly higher total expected social welfare. Unlike strong Pareto optimality, which is reached when it is impossible to improve one individual's utility without making at least one other individual worse off (i.e., any change that benefits one entity would necessarily harm another), this weaker version considers only aggregate utility. 
In other words, even if some participants experience no gain, or a slight loss, as long as the total utility cannot be improved, the solution is considered weakly Pareto optimal.

This weaker notion is appropriate for our setting because our model already ensures individual rationality (no participant is worse off by participating) and matching stability (no blocking pairs exist). As such, weak Pareto optimality serves as a meaningful, complementary criterion that captures global optimality, aligning with the objectives of many distributed, market-based mechanisms~\cite{RW Matching3}.}

\begin{Prop}(Weak Pareto Optimality of offRFW$^2$M)  offRFW$^2$M is weakly Pareto optimal, ensuring that no other matching can increase the social welfare of the system.
\end{Prop}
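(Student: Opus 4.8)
The plan is to prove weak Pareto optimality by contradiction, building directly on the strong stability of offRFW$^2$M established in Proposition~\ref{Prop 4}. The first step is to simplify the expected social welfare. Summing the expected client utilities of \eqref{equ. expected comm utility}--\eqref{equ. expected sensing utility} with the corresponding expected BS utilities, every monetary clause of every contract is an \emph{internal transfer} between a client and the unique BS serving it: each payment $\mathbbm{c}^\mathsf{(c),Pay}_{i,j}$ (resp.\ $\mathbbm{c}^\mathsf{(s),Pay}_{k,j}$) enters the BS utility with a ``$+$'' and the client utility with a ``$-$'', and likewise the breach penalties $\mathbbm{c}^\mathsf{(c),PelU}_{i,j},\mathbbm{c}^\mathsf{(s),PelU}_{k,j}$ and the volunteer compensations $\mathbbm{c}^\mathsf{(c),PelS}_{i,j},\mathbbm{c}^\mathsf{(s),PelS}_{k,j}$ cancel pairwise. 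Hence the expected social welfare of any feasible matching $\varphi=(\varphi^\mathsf{(c)},\varphi^\mathsf{(s)})$ collapses to the aggregate \emph{net value}
\begin{equation}
\resizebox{0.48\textwidth}{!}{$
\mathcal{W}(\varphi)=\sum_{(u_i,s_j)\in\varphi^\mathsf{(c)}}\mathbb{E}[\alpha_i](1-\mathbb{E}[\mathbbm{v}^\mathsf{(c)}_{i,j}])(V^\mathsf{(c)}_{i,j}-\omega_5 P^\mathsf{(c)}_{i,j})+\sum_{(\bm{c}_k,s_j)\in\varphi^\mathsf{(s)}}\mathbb{E}[\beta_k](1-\mathbb{E}[\mathbbm{v}^\mathsf{(s)}_{k,j}])(\mathbb{E}[V^\mathsf{(s),max}_{k,j}]-\omega_5 P^\mathsf{(s)}_{k,j})
$}
\end{equation}
which depends only on \emph{which} resources are booked to \emph{which} clients, not on the negotiated prices. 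This reduction is the conceptual key, since it shows that any welfare gap between two matchings must originate solely from differences in the served sets and their bandwidth/power bundles, which is exactly what the blocking-pair machinery detects.

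Next, I would suppose for contradiction that offRFW$^2$M is not weakly Pareto optimal, so that some feasible matching $\varphi'$ (with feasible resource bundles and contracts $\mathbb{C}'$) satisfies $\mathcal{W}(\varphi')>\mathcal{W}(\varphi)$, where $\varphi$ is the output of Alg.~\ref{Alg:1}. Since $\mathcal{W}$ is additive over BSs, there is a BS $s_j$ whose client set under $\varphi'$ contributes strictly more net value than under $\varphi$. I would then classify how $\varphi'$ differs from $\varphi$ at $s_j$ into two atomic moves and show each contradicts strong stability: \textbf{(i) an admissible insertion} --- $\varphi'$ serves all clients in $\varphi^\mathsf{(c)}(s_j)\cup\varphi^\mathsf{(s)}(s_j)$ plus an extra client $\mathsf{a}$ using only residual (over)booked capacity of $s_j$ while still respecting the risk bounds \eqref{equ. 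PF BS C5}--\eqref{equ. PF BS C6}; or \textbf{(ii) an admissible swap} --- $\varphi'$ evicts some subset of $\varphi$'s clients at $s_j$ and replaces it with a set of strictly larger net value. In case (i) the pair $(\mathsf{a},s_j)$ has strictly positive marginal net value; because the payment in the blocking-pair definition is a free design variable, one can pick a contract $\mathbb{C}'$ that splits this positive surplus so that $\mathsf{a}$ strictly improves (so \eqref{equ. 41} holds) and $s_j$ strictly improves while keeping its current matches (so \eqref{equ. 44} holds), i.e., a Type~2 blocking pair, contradicting non-wastefulness (Proposition~3) and hence Proposition~\ref{Prop 4}. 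In case (ii) the same surplus-splitting argument applied to the incoming client and the evicted subset yields \eqref{equ. 41} together with \eqref{equ. 42}, i.e., a Type~1 blocking pair, contradicting fairness (Proposition~2) and hence Proposition~\ref{Prop 4}. A client that is reassigned between two distinct BSs is handled by decomposing the transformation into a removal at the source BS followed by an insertion or swap at the destination BS, and appealing to the fact that at termination each BS has selected a knapsack-optimal subset of the proposals it received under constraints \eqref{equ. PF BS C3}--\eqref{equ. PF BS C6}, which rules out the residual unilateral deviations not immediately captured by the two cases above.

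The step I expect to be the main obstacle is the \textbf{feasibility-preserving decomposition}: the per-BS constraints \eqref{equ. PF BS C3}--\eqref{equ. PF BS C6} are knapsack- and chance-constraint-type conditions that are nonlinear and non-separable in the selected client set, so one cannot in general move a single client in or out of a BS while staying feasible. To make the case analysis rigorous I would argue on the symmetric difference $\varphi\,\triangle\,\varphi'$: among the BSs at which $\varphi'$ strictly gains net value, at least one admits an atomic move --- a single admissible insertion or a single admissible swap --- that keeps \eqref{equ. PF BS C3}--\eqref{equ. PF BS C6} satisfied, precisely because $\varphi'$ itself is feasible, so the ``prefix'' of the transformation of $\varphi$ toward $\varphi'$ at that BS is feasible. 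Combined with the transferability of payments, which converts any strictly positive marginal net value into a mutually strictly improving (hence blocking) deviation, and with the strong stability already proved in Proposition~\ref{Prop 4}, this closes the contradiction and establishes that offRFW$^2$M is weakly Pareto optimal. A minor point to handle carefully is that the blocking-pair definition requires \emph{every} BS in the set $\mathbb{S}$ to strictly benefit; since each atomic move involves a single BS, taking $\mathbb{S}=\{s_j\}$ suffices and this point does not cause difficulty.
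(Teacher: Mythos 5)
Your proof follows the same core strategy as the paper's: assume a matching with strictly higher expected social welfare exists, extract from it a blocking pair, and contradict the strong stability of Proposition~\ref{Prop 4}. The paper's own proof is a short, informal rendition of exactly this argument — it asserts that any Pareto improvement would require some client and BS to prefer each other over their current matches, which would constitute a blocking pair, and concludes from Propositions~\ref{Prop 10}--\ref{Prop 11} that no improvement is possible. Your opening reduction — that payments, breach penalties, and volunteer compensations are internal transfers which cancel in the welfare sum, so that social welfare depends only on which bandwidth/power bundles are booked to which clients — is a genuinely useful sharpening that the paper does not carry out, and it correctly isolates where any welfare gap must originate.

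However, the step you yourself flag as the main obstacle is a real gap, and your proposed fix does not close it. Feasibility of $\varphi'$ at a BS does not imply that a ``prefix'' configuration mixing clients of $\varphi$ and $\varphi'$ satisfies the knapsack and risk constraints \eqref{equ. PF BS C3}--\eqref{equ. PF BS C6}, since those constraints are evaluated on the whole selected set. More fundamentally, an atomic move extracted from the symmetric difference need not be welfare-improving in isolation: a client inserted at $s_j$ under $\varphi'$ may be abandoning a high-net-value pairing at its old BS under $\varphi$, so the single deviation can lower aggregate net value even though $\varphi'$ as a whole raises it; in that case the surplus-splitting construction cannot make both the client and $s_j$ strictly better off, and conditions \eqref{equ. 41} together with \eqref{equ. 42} or \eqref{equ. 44} cannot be met, so no blocking pair is produced. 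This is precisely why pairwise stability does not in general imply welfare maximization in many-to-many matching with coupled feasibility constraints; establishing it would require a core/duality-type argument over coalitions rather than single client--BS deviations. The paper's proof makes the same leap implicitly, so your attempt is no weaker than the paper's — but the decomposition step remains unproven in both.
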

Detailed proofs of all the above results are presented in Appendix C.

\section{Effective Backup Win-Win Matching for Online Trading (onEBW$^2$M)}
\noindent The intermittent participation of MUs may hinder the seamless execution of pre-signed long-term contracts during practical transactions, potentially resulting in financial losses for both parties. To address this, we introduce an online trading mode as a complementary mechanism, handling two scenarios: \textit{(i)} When the resource demand at a BS exceeds its available supply, the BS strategically designates certain long-term contract clients as \textit{voluntary contributors} using a greedy-based approach. These clients forgo their services in exchange for compensations. \textit{(ii)} When there exist clients with unmet resource demands --- including voluntary clients and those without long-term contracts --- as well as BSs with surplus resources, we implement the onEBW$^2$M mechanism, which obtains two types of \textit{temporary contracts}: 
\textit{(i)} for communication service, a contract between an individual MU $u_i$ and a BS $s_j$, represented as {\footnotesize$\dot{\mathbb{C}}^\mathsf{(c)}_{i,j}$}, and 
\textit{(ii)} for sensing service, a contract between a sensing coalition $\bm{c}_k$ and a BS $s_j$, represented as {\footnotesize$\dot{\mathbb{C}}^\mathsf{(s)}_{k,j}$}. 
These temporary contracts consist of two terms: the quantity of trading resources {\footnotesize$\dot{\mathbbm{c}}^\mathsf{(c),B}_{i,j}, \dot{\mathbbm{c}}^\mathsf{(c),Pow}_{i,j}, \dot{\mathbbm{c}}^\mathsf{(s),B}_{k,j}, \dot{\mathbbm{c}}^\mathsf{(s),Pow}_{k,j}$}, and the service price {\footnotesize$\dot{\mathbbm{c}}^\mathsf{(c),Pay}_{i,j}, \dot{\mathbbm{c}}^\mathsf{(s),Pay}_{k,j}$} (super-scripts have the same meaning as those in Sec.~\ref{sec:over}).



To identify the MUs and BSs that participate in online trading (i.e., in onEBW$^2$M), we introduce the following notations:

\noindent $\bullet$ $\bm{\mathcal{U}^\prime}$: The set of MUs that trade with BSs during online trading (MUs without long-term contracts and volunteers), $\bm{\mathcal{U}^\prime}\subseteq \bm{\mathcal{U}}$;

\noindent $\bullet$ $\bm{\mathcal{C}^\prime}$: The set of coalitions (without long-term contracts and volunteers) that trade with BSs during online trading, $\bm{\mathcal{C}^\prime}\subseteq \bm{\mathcal{C}}$;

\noindent $\bullet$ $\bm{\mathcal{S}^\prime}$:The set of BSs with available resources beyond those specified in long-term contracts,  $\bm{\mathcal{S}^\prime}\subseteq \bm{\mathcal{S}}$;

\noindent $\bullet$ $\nu^\mathsf{(c)}(s_j)$ and $\nu^\mathsf{(s)}(s_j)$: The set of MUs and coalitions served by BS \( s_j \) for communication and sensing services in online trading mode, where $\nu^\mathsf{(c)}(s_j)$ and $\nu^\mathsf{(s)}(s_j)$$ \subseteq \bm{\mathcal{U}^\prime}$;

\noindent $\bullet$ $\nu^\mathsf{(c)}(u_i)$ and $\nu^\mathsf{(s)}(\bm{c}_k)$: The BS that serves MU \( u_i \) and coalition \( \bm{c}_k \) for communication and sensing services in online trading mode, where $\nu^\mathsf{(c)}(u_i)\subseteq \bm{\mathcal{S}^\prime}$ and $\nu^\mathsf{(s)}(u_i)$ $ \subseteq \bm{\mathcal{S}^\prime}$.

In the online trading mode, we define the practical utility of a client in communication and sensing services as the difference between its obtained valuation and its payments as follows:
\begin{equation}\label{eq:49_n}
	u^\mathsf{U^\prime,(c)}(u_i,\nu^\mathsf{(c)}(u_i),\dot{\mathbb{C}}^\mathsf{(c)}_{i,j})=V^\mathsf{(c)}_{i,j}-\dot{\mathbbm{c}}^\mathsf{(c),Pay}_{i,j},
\end{equation}
\begin{equation}\label{eq:49_n2}
	u^\mathsf{U^\prime,{(s)}}(\bm{c}_{k},\nu^\mathsf{(s)}(\bm{c}_{k}),\dot{\mathbb{C}}^\mathsf{(s)}_{k,j})=V^\mathsf{(s)}_{k,j}-\dot{\mathbbm{c}}^\mathsf{(s),Pay}_{k,j}.
\end{equation}
Similarly, the utility of BS $s_j \in \bm{\mathcal{S}^\prime}$ for communication and sensing services is calculated via its received payments as 
\begin{equation}\label{eq:50_n}
	u^\mathsf{S^\prime,{(c)}}(s_j,\nu^\mathsf{(c)}(s_j),\dot{\mathbb{C}}^\mathsf{(c)}_{i,j})=\sum_{u_i\in\nu^\mathsf{(c)}(s_j)}\dot{\mathbbm{c}}^\mathsf{(c),Pay}_{i,j},
\end{equation}
\begin{equation}\label{eq:50_n2}
	u^\mathsf{S^\prime,{(s)}}(s_j,\nu^\mathsf{(s)}(s_j),\dot{\mathbb{C}}^\mathsf{(s)}_{k,j})=\sum_{u_i\in\nu^\mathsf{(s)}(s_j)}\dot{\mathbbm{c}}^\mathsf{(s),Pay}_{k,j}.
\end{equation}

During online trading, the objective of each client and BS is \textit{to maximize its practical utility (e.g., \eqref{eq:49_n},\eqref{eq:49_n2} similar to~\eqref{equ. PF MU}, and summing \eqref{eq:50_n} and \eqref{eq:50_n} similar to \eqref{equ. PF BS}), resampling the MOO problem in Sec.~\ref{sec:probForm}.} Subsequently, for brevity, since onEBW$^2$M is similar to offRFW$^2$M, the detailed formulation of the optimization for onEBW$^2$M, the definitions of matchings $\nu^\mathsf{(c)}(.)$ and $\nu^\mathsf{(s)}(.)$, the solution design and its characteristics are provided in Appendix D.

\section{Evaluations}
\noindent {We conduct experiments under both synthetic  (Sec.~\ref{sec:Synt}) and real-world (Sec.~\ref{sec:real}) settings to evaluate the performance of our framework.} For clarity, our future resource bank framework for ISAC is abbreviated as "FRBank". {For synthetic simulations, BSs are deterministically deployed to form symmetric network topologies within a square area of size $800\ \mathrm{m} \times 800\ \mathrm{m}$. Specifically, we consider two configurations with 5 and 7 BSs, respectively, where BSs are placed at fixed locations such as the four corners, the center, and optionally the edge midpoints. MUs and sensing targets are uniformly distributed within the same area.
} For real-world experiments, we employ the EUA dataset\cite{EUA dataset}, which provides information about BSs and MUs in Melbourne metropolitan area of Australia, covering over 9,000 km$^2$. {Since the exact target locations are not specified in this dataset, sensing targets are distributed across the coverage area uniformly at random.} We also adopt the following parameters: $|\bm{\mathcal{Q}}|=8$, $N^\mathsf{T}_j\in[8,16]$, $N^\mathsf{R}_i\in[4,8]$, $B_j\in[80,120]$MHz, $P_j\in[10, 20]$dBW, $R_i^\mathsf{req}\in[0.01,10]$bit/s, $S_i^\mathsf{req}\in[1,100]$, $B_0=180$KHz\cite{RW ISAC1}, and $\mathbbm{a}_i\in[0.64,0.96]$\cite{MY tsc}.
 {We perform the Monte Carlo method in both Sec.~\ref{sec:Synt} and Sec.~\ref{sec:real}, where each data point in the figures represents the average result over 100 independent trials.}

\subsection{Benchmark Methods and Evaluation Metrics}
We first consider two conventional resource trading benchmarks, each relying on a single trading mode.

\noindent $\bullet$ \textbf{Conventional online resource trading (ConOnline)}\cite{RW Matching3}: This method relies solely on online resource trading. It conducts M2M matching between BSs and MUs in every practical transaction for communication and sensing services.

\noindent $\bullet$ \textbf{Conventional offline resource trading (ConOffline)}\cite{future 1}: This method relies on offline resource trading. It conducts M2M matching to obtain long-term contracts between BSs and MUs for communication and sensing services, which are executed during practical transactions without any backup mechanisms.

We also consider two hybrid trading benchmarks from the broader networking area  to verify the necessity of integrating different trading modes, sensing coalitions, and overbooking in our method, tailored for dynamic ISAC networks.

\noindent $\bullet$ \textbf{Hybrid trading for resource provision (Hybrid)}\cite{MY tsc}: This method integrates the ConOffline and ConOnline methods but does not incorporate overbooking or sensing coalitions.

\noindent $\bullet$ \textbf{Hybrid trading for resource provision with overbooking (HybridO)}: This method is similar to the above method but considers overbooking when determining long-term contracts.

We also incorporate a greedy resource provisioning method.

\noindent $\bullet$ \textbf{Greedy-driven resource provisioning (Greedy)}\cite{MY tsc}: This method is an online trading where BSs select MUs offering  highest payments, while satisfying its resource constraints.

To quantify the performance, we consider the metrics below:

\noindent $\bullet$ \textbf{Social welfare:} The summation of utilities of MUs and BSs for both communication and sensing services (i.e., summing (\ref{equ. comm utility}), (\ref{equ. sensing utility}), (\ref{equ. comm BS U}), and (\ref{equ. sensing BS U})). Here, the utility of MUs for sensing services captures the profit obtained by each MU in its coalition.

\noindent $\bullet$ \textbf{Utility of MUs and BSs}: The respective utilities received by MUs and BSs in the trading process.

{\noindent $\bullet$ \textbf{Running time (RT)}: The reported delay (in milliseconds) corresponds to the latency of the onsite resource trading  process, capturing the real-time execution efficiency of the methods. In our framework, as this process is triggered upon the actual arrival of service demands (in online trading mode), this metric captures the overall perceived latency. This is because the offline pre-matching stage (offRFW$^2$M) is executed in advance and thus excluded from the reported runtime measurements.}

\noindent $\bullet$ \textbf{Number of interactions (NI)}: The number of clients-to-BSs interactions to obtain matchings, measuring the  overhead.

\noindent $\bullet$ \textbf{Delay caused by interactions between BSs and clients (DIBC)}: The duration of decision-making (in ms). We presume the delay of each interaction to be $[1, 15]$ ms\cite{MY tsc,E2E1,E2E2}.

\noindent $\bullet$ \textbf{Energy consumption incurred by interactions between BSs and clients (ECIBC)}: The energy (in Watts) consumed during the decision-making process. To obtain it, we assume that the transmit power of MUs lies in $[0.2, 0.4]$ watts, while the transmit power of BSs lies in $[6, 20]$ Watts \cite{MY tsc,E2E1,E2E2}.

\subsection{Synthetic Experiments}\label{sec:Synt}
\subsubsection{Social Welfare and Utility of MUs/BSs}
\begin{figure}[]
	\centering
	\setlength{\abovecaptionskip}{-1 mm}
	\includegraphics[width=1\columnwidth]{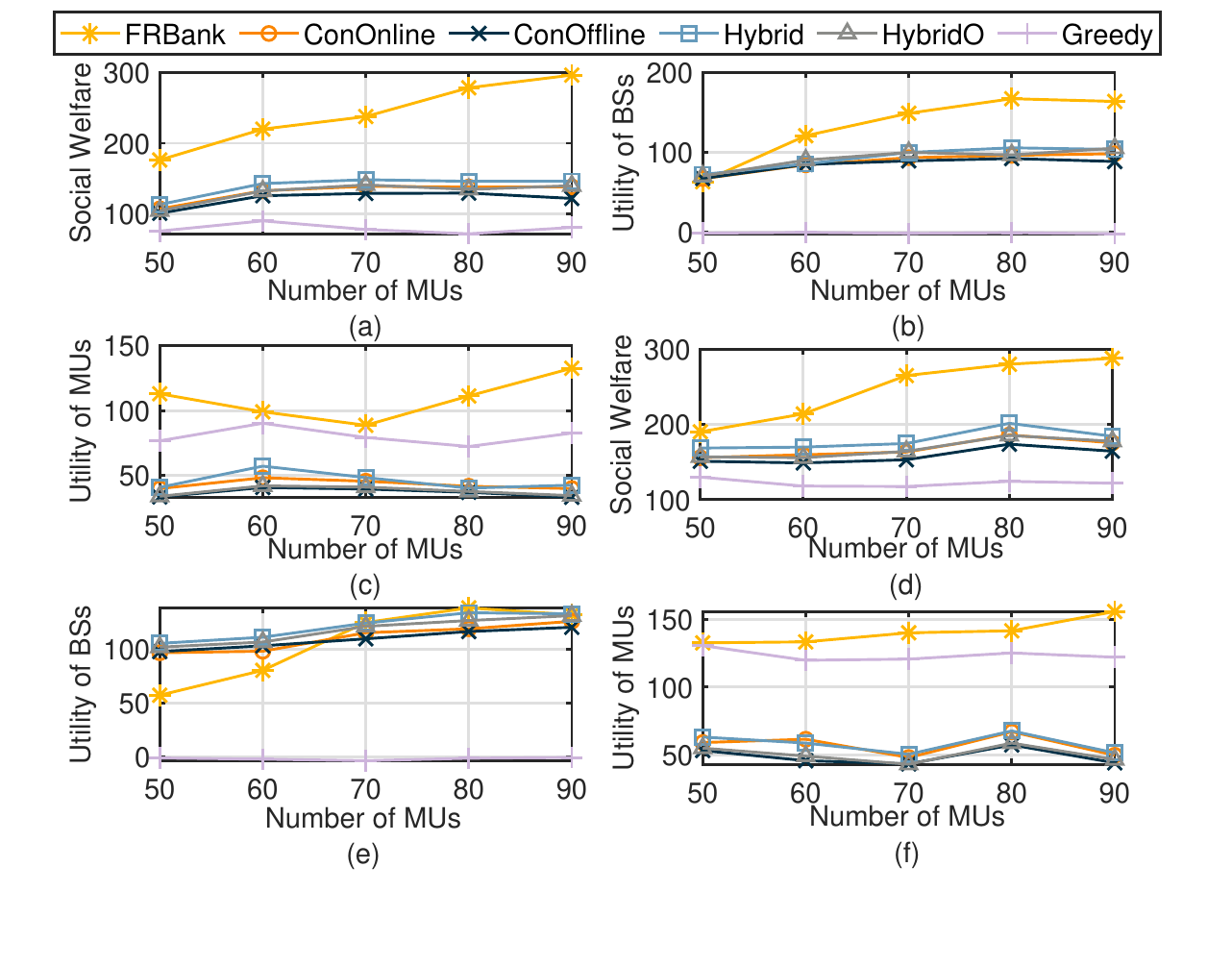}
	\caption{Performance comparisons in terms of social welfare, utility of BSs, and utility of MUs. We consider 5 BSs in (a)-(c), and 7 BSs in (d)-(f).}
	\label{SW}
\end{figure}
Social welfare, along with the utilities of MUs and BSs that constitute it, serve as crucial metrics for assessing the effectiveness of our method,  depicted in Fig. \ref{SW} (Figs. \ref{SW}(a)-(c) and \ref{SW}(d)-(f) consider 5 and 7 BSs, respectively, to capture different market scales).

In Fig. \ref{SW}(a), the curves of benchmark methods such as ConOnline, ConOffline, Hybrid, and HybridO, show an upward trend when the number of MUs is below 60. This is because an increase in the number of MUs leads to higher resource demand, thereby improving the utilities of both MUs and BSs (also evident in Figs. \ref{SW}(b) and \ref{SW}(c)). {However, when the number of MUs exceeds 60, these curves tend to stabilize as the client's payment approaches the maximum value that they can achieve. This trend is especially pronounced in Fig. \ref{SW}(c), where MUs' utilities decline and stabilize once the number of MUs surpasses 60.
In contrast, the performance curves of our proposed FRBank exhibit a generally increasing trend, primarily due to its ability to enable additional MUs to extract sensing value by joining sensing coalitions. Notably, in Fig. \ref{SW}(c), the MU utility curve for FRBank initially declines when the number of MUs is below approximately 70. This decrease is attributed to intensified competition, which elevates bidding prices and consequently diminishes individual MU utilities. However, as the number of MUs further increases and bidding approaches its upper limit, newly participating MUs begin to benefit from coalition-based sensing, resulting in a subsequent rise in MU utilities.
Furthermore, the overbooking mechanism embedded in HybridO alleviates bidding competition among MUs more effectively than the baseline Hybrid scheme, albeit at the cost of a slight reduction in overall social welfare. Nonetheless, this trade-off proves beneficial: by accommodating fluctuations in resource demand, HybridO significantly reduces interaction overhead, as further illustrated in Fig. \ref{NI}, thereby achieving a more favorable balance between efficiency and robustness.} The well-designed online trading mode, acting as a complementary mechanism, enables both Hybrid and HybridO methods to obtain higher social welfare than ConOffline by accommodating more MUs.
Also, Greedy exhibits the weakest performance, as its design prioritizes maximizing BS revenues over efficient resource utilization.
Further, qualitatively similar results are obtained for a larger market scale with 7 BSs in Figs. \ref{SW}(d)-(f).

\subsubsection{Evaluation on Interaction Overhead}
\begin{figure*}[]
	\centering
	\setlength{\abovecaptionskip}{-2 mm}
	\includegraphics[width=2\columnwidth]{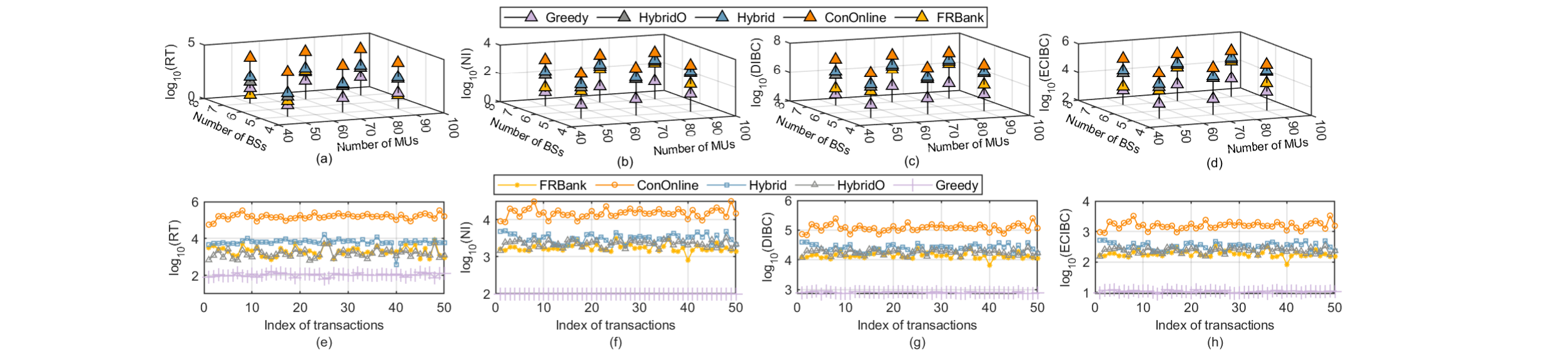}
	\caption{Performance comparisons in terms of RT, NI, DIBC, and ECIBC metrics.}
	\label{NI}
	\vspace{-0.5cm}
\end{figure*}
To provide a quantitative assessment of time and energy efficiency, we consider four key metrics illustrated in Fig.~\ref{NI}, each reflecting the decision-making overhead, including: RT (Figs. \ref{NI}(a) and (e)), NI (Figs. \ref{NI}(b) and (f)), DIBC (Figs. \ref{NI}(c) and (g)), and ECIBC (Figs. \ref{NI}(d) and (h)). The y-axis of subfigures in Fig. \ref{NI} is presented on a logarithmic scale to enhance the visualizations. Also, ConOffline is excluded from the comparison here, as it operates exclusively in the offline trading mode and therefore does not incur real-time decision-making delays or overhead. 

Figs. \ref{NI}(a) and \ref{NI}(e) depict RT performance across different market scales. ConOnline exhibits significantly higher RT compared to other methods, since BSs and MUs require substantial time to determine matching results during each transaction. This issue becomes more pronounced as resource demand intensifies with an increasing number of MUs.
Our FRBank significantly reduces RT compared to ConOnline, as many MUs bypass online trading due to the integration of the overbooking strategy, the sensing coalitions, and the pre-established long-term contracts. {Besides, the well-designed overbooking mechanism in HybridO contributes to a lower RT compared to Hybrid. By ensuring that most contracts are successfully fulfilled in advance, HybridO effectively reduces the reliance on real-time matching, thereby minimizing the number of participants engaged in real-time transactions.} In addition, while Greedy achieves a RT comparable to Hybrid due to the absence of price negotiations, its overall social welfare performance remains suboptimal (see Fig. \ref{SW}).

Figs. \ref{NI}(b)-(d) and Figs. \ref{NI}(f)-(h) illustrate the interaction overhead between BSs and clients, encompassing NI, as well as the time (DIBC) and energy consumption (ECIBC) incurred in reaching matching decisions. The results demonstrate that ConOnline exhibits the worst performance due to its reliance on online trading, leading to significant interaction overhead. In contrast, the structured offline trading mode in Hybrid, HybridO, and FRBank significantly reduces interaction overhead by limiting the number of active participants in each practical transaction. Notably, FRBank and HybridO surpass Hybrid due to the overbooking mechanism, which proactively mitigates the impact of uncertain MU availability and enhances the probability of contract fulfillment.
Moreover, our FRBank outperforms all other methods by incorporating sensing coalitions as strategic market participants. These coalitions act as single decision-making entities, effectively streamlining negotiation complexity and reducing the volume of interactions required in both offline and online trading phases. Such a coalition-based structure not only enhances trading efficiency but also ensures better resource allocation and task execution.
While Greedy exhibits minimal interaction overhead (as seen in Fig. \ref{NI}(a)), this efficiency comes at the expense of weakened social welfare and reduced BS revenue. In summary, FRBank establishes a well-balanced framework, minimizing interaction costs while maintaining economic viability for both BSs and MUs, ultimately leading to superior social welfare and system-wide efficiency.

\subsubsection{Individual Rationality}
\begin{figure} \centering 
\vspace{-3mm}
	\setlength{\abovecaptionskip}{0.1 cm}
	\subfigure[] {
		\label{fig:a} 
		\includegraphics[width=0.45\columnwidth]{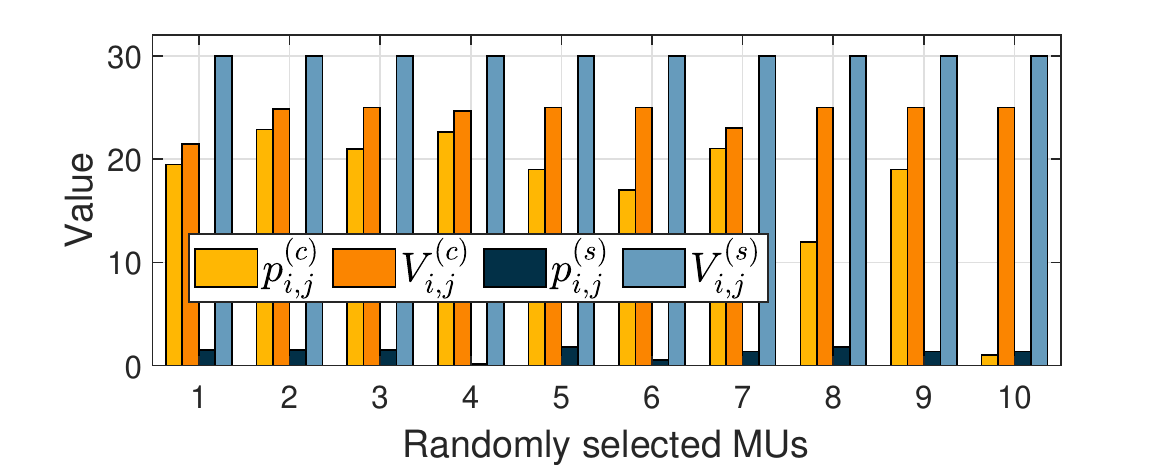} 
	} \hspace{-4mm}
	\subfigure[] { 
		\label{fig:b} 
		\includegraphics[width=0.45\columnwidth]{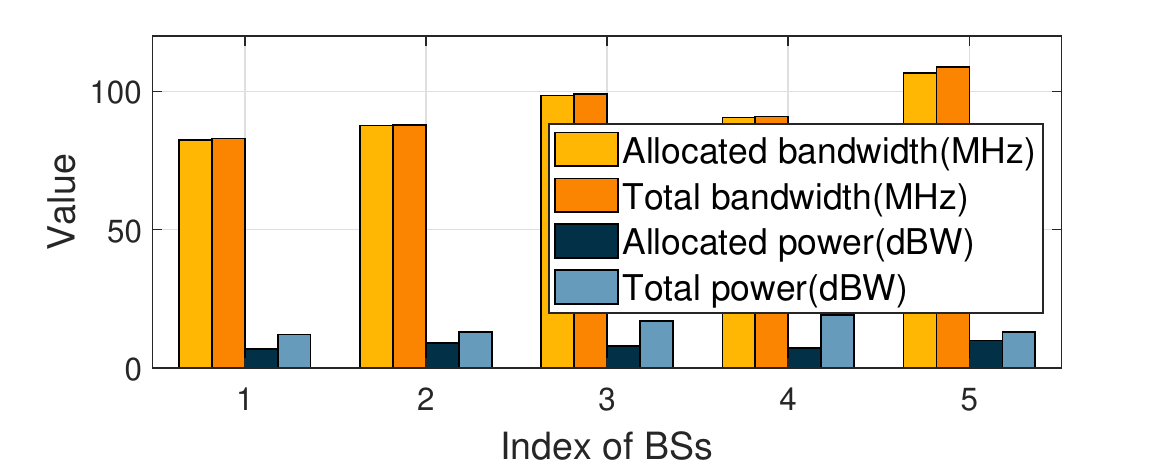} 
	} 
    \vspace{-2mm}
	\caption{Demonstration of individual rationality of MUs and BSs.} 
	\label{fig} 
	\vspace{-0.6 cm}
\end{figure} 
 Fig. \ref{fig:a} quantifies the \textit{value} accrued by MUs (15 MUs are randomely selected from a pool of 50) from accessing both communication and sensing services, along with their corresponding payment obligations. As can be seen, the paid payment of MUs for communication and sensing services never exceed the values they receive. This validates that our designed offRFW$^2$M and onEBW$^2$M in FRBank effectively uphold the individual rationality of MUs.
Additionally, Fig. \ref{fig:b} validates the individual rationality of the 5 BSs, demonstrating that the allocated bandwidth and power for each BS never exceed their total available resources. This is ensured by the \textit{voluntary} selection mechanism during online trading in FRBank, which prevents resource shortages. 

\subsubsection{Performance Analysis of Our Unique Considerations}

This paper introduces several novel aspects that are unexplored in existing ISAC research. Specifically, we highlight three key innovations: sensing coalition, overbooking, and risk analysis. To systematically evaluate their advantages and impact, we conduct dedicated analysis as follows.

\noindent~\textit{(a) Analysis of sensing coalitions (Figs. \ref{SW}-\ref{NI}):} As illustrated in Fig. \ref{SW}, incorporating sensing coalitions allows FRBank to achieve the highest MU utility and social welfare. Moreover, considering sensing coalitions can significantly improve interaction efficiency, as evidenced by superior performance in NI, time, and energy consumption (see Fig. \ref{NI}) during the bargaining and decision-making process. 
This improvement stems from the fact that each coalition appoints a representative MU to engage in negotiations, effectively reducing the number of active participants and simplifying the matching process. 

\begin{figure}[t]
\centering 
\vspace{-1mm}
	\setlength{\abovecaptionskip}{0 cm}
	\subfigure[] {
		\label{overbook} 
		\includegraphics[width=0.45\columnwidth]{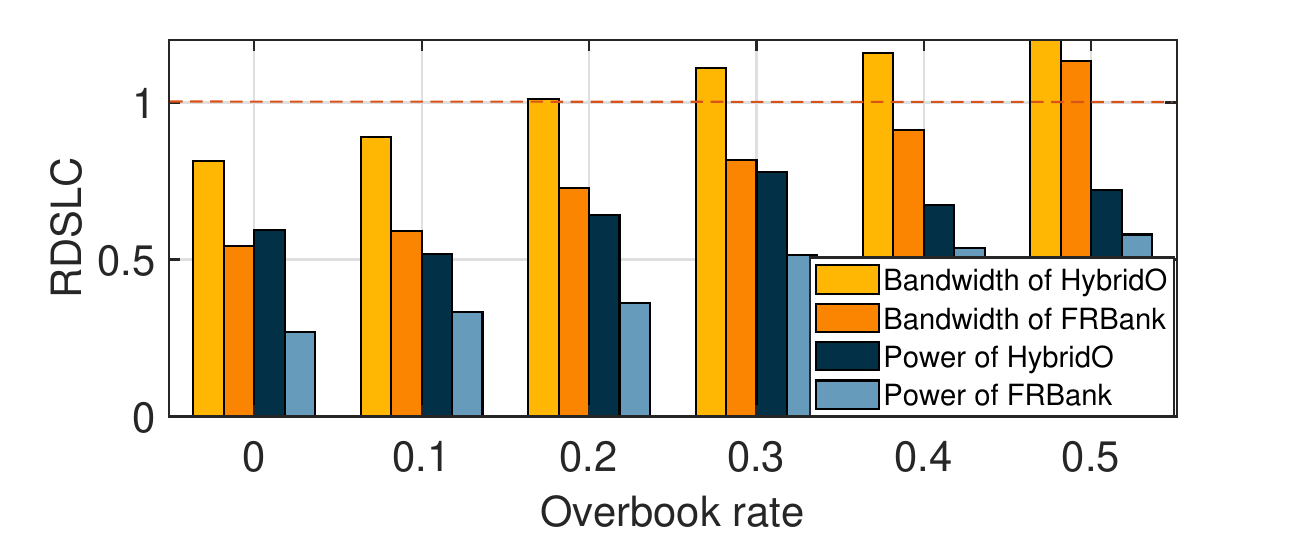} 
	} \hspace{-4mm}
	\subfigure[] { 
		\label{risk} 
		\includegraphics[width=0.45\columnwidth]{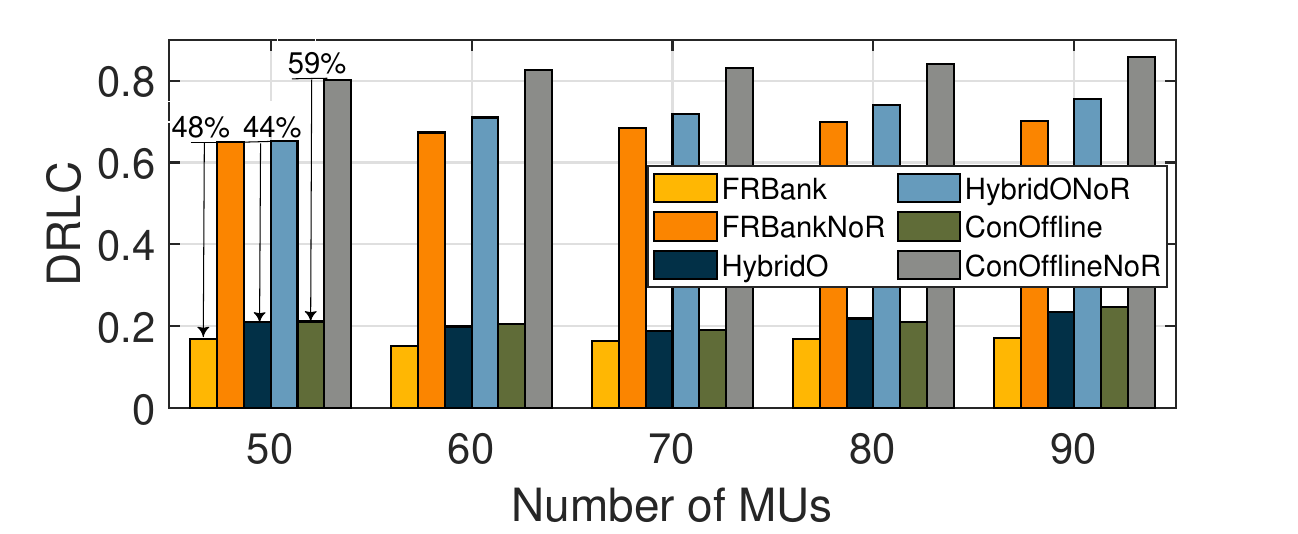} 
	} 
    \vspace{-0.5mm}
	\caption{Performance comparisons in terms of RDSLC and DRLC.} 
	\label{fig} 
\end{figure}
\noindent~\textit{(b) Analysis of overbooking (Fig. \ref{overbook}):} In our FRBank,  overbooking enables BSs to mitigate fluctuations in MU demand during practical transactions by strategically overbooking resources in long-term contracts.
To evaluate its impact, we analyze the \textbf{r}atio of BSs' resource \textbf{d}emand to resource \textbf{s}upply in practical transactions under \textbf{l}ong-term \textbf{c}ontracts (RDSLC) across different overbooking levels. Specifically, Fig. \ref{overbook} presents a comparative analysis between HybridO and FRBank for different resources, highlighting how overbooking optimizes resource allocation stability and adaptability in dynamic ISAC environments. In Fig. \ref{overbook}, as the overbooking rate increases, we observe a rise in RDSLC curves for both HybridO and FRBank in terms of bandwidth utilization. This trend highlights the effectiveness of the overbooking strategy in mitigating resource underutilization caused by the uncertainty of MU participation in transactions. Nevertheless, it is critical to carefully calibrate the overbooking rate, as excessive overbooking can lead to resource shortages. For instance, when the overbooking rate exceeds 0.3, the RDSLC curve for HybridO in terms of bandwidth surpasses 1, indicating that the BS should select voluntary clients to compensate because of the resource deficit. 
This underscores the importance of a proper overbooking rate, preventing both  resource underutilization and overcommitment in ISAC networks.

\noindent~\textit{(c) Analysis of risks (Fig. \ref{risk}):} To illustrate the importance of risk assessment and management, we introduce a metric called \textbf{d}efault \textbf{r}ate on \textbf{l}ong-term \textbf{c}ontract (DRLC)\footnote{The ratio of the number of failed transactions under long-term contracts to the total number of transactions specified in the contract during practical operations. This metric quantifies the contract violation rate.} upon having 5 BSs and different number of MUs. In Fig.~\ref{risk}, we conduct ablation experiments by introducing three comparative methods called FRBankNoR, HybridONoR, and ConOfflineNoR,  indicating FRBank, HybridO, and ConOffline methods excluding risk analysis. Note that if the utility of either the client or the BS becomes negative, or the compensation received is insufficient to offset the incurred loss, the affected party will opt to break contracts to minimize its loss and maximize its own utility. Additionally, due to overbooking and fluctuations in MUs' resource demand, BSs may encounter situations where resource supply is insufficient. In such cases, the BS may choose to breach long-term contracts and compensate volunteers to ensure sufficient resource availability. As shown in the figure, the value of DRLC of ConOfflineNoR, HybridONoR and FRBankNoR are significantly higher than those of ConOffline, HybridO and FRBank. This improvement is attributed to the implementation of effective risk management, which mitigates the risks of clients or BSs encountering unsatisfactory transactions and the risk of insufficient resource supply from the BS. These risk constraints allow our design to better adapt to the challenges posed by dynamic and uncertain ISAC network environments.

\subsection{Experiments on a Real-World Dataset}\label{sec:real}
\begin{figure}[t]
	\centering
	\includegraphics[width=1\columnwidth]{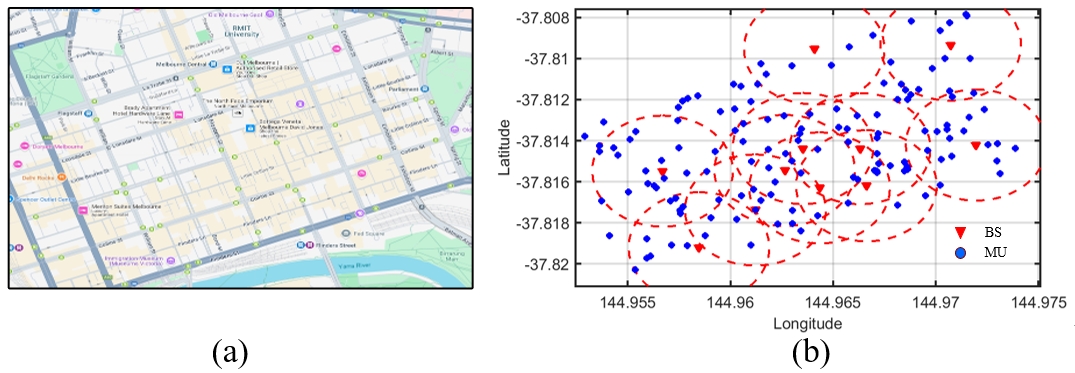}
        \vspace{-8mm}
	\caption{Locations of BSs and MUs in Melbourne  area as per EUA Dataset.}
	\label{EUA}
\end{figure}

To further evaluate the performance, we leverage the real-world EUA dataset\cite{EUA dataset}, focusing on Melbourne Central Business District  (see Fig. \ref{EUA}).
Within this region, we randomly selected 140 MUs, identified the geographic locations of 11 BSs, and defined 28 sensing targets. The results are summarized in Table \ref{table EUA}, which confirm that FRBank delivers the best balance across key metrics, including social welfare, RT, NI, DIBC, and ECIBC. Note that ConOffline operates exclusively in the offline mode and therefore does not incur real-time decision-making delays or overhead, reflected by N/A cells in this table.

  \begin{table}[h]

	{\small 
		\caption{Performance Evaluations for EUA Dataset \\(Alg. 1: FRBank, Alg. 2: ConOnline, Alg. 3: ConOffline, Alg. 4: Hybrid, Alg. 5: HybridO, Alg. 6: Greedy)}
        \vspace{-2mm}
		\begin{center}\vspace{-0.2cm}\label{table EUA}
			\setlength{\tabcolsep}{0.5mm}{
				\begin{tabular}{|c||c|c|c|c|c|c|}
					\hline \hline\rowcolor{gray!20}
					\textbf{Performance} & \textbf{Alg. 1}& \textbf{Alg. 2}&\textbf{Alg. 3}&\textbf{Alg. 4} & \textbf{Alg. 5} & \textbf{Alg. 6}\\ \hline\hline
					\textbf{Social welfare} &5014.70&3103.04&2851.26&3291.26&3074.83&1927.03
					\\ \hline
					\textbf{RT (ms)} &6705.7&389286.8&0.7 &27832.5&11712.7&1505.6\\ \hline
					\textbf{NI} &2878&37820&N/A&11803&8207&280
					\\ \hline
					\textbf{DIBC (ms)} &23291&302678&N/A&94909&64740 &2231	\\ \hline
					\textbf{ECIBC (W)} &304.53&3972.40&N/A&1251.95&868.92&29.89	\\ \hline \hline
			\end{tabular}}
	\end{center}}
         \vspace{-4mm}
\end{table}

\section{Conclusion and Future Work}
\noindent In this paper, we tackled the problem of resource allocation in multi-MU, multi-BS ISAC networks. We proposed a novel framework called the \textit{future resource bank for ISAC (FRBank)}, which models individual MUs and sensing coalitions as clients while integrating offline and online resource trading modes to enable efficient and mutually beneficial interactions. 
FRBank incorporates offRFW$^2$M with overbooking strategies to obtain long-term contracts between clients and BSs while mitigating potential risks. It also incorporates onEBW$^2$M to address challenges arising from intermittent MU participation.
Through simulations, we demonstrated the superior performance of FRBank in terms of time efficiency and network/user utility. 
{
Promising future directions are as follows.  
\textit{(i)} Adaptive overbooking strategies: Dynamically adjusting overbooking rates to better cope with real-time market fluctuations and reduce the cost of contract violations; 
\textit{(ii)} Sensing coalition value allocation: Investigating fair and stable value/cost distribution within sensing coalitions using cooperative game theory or bargaining models;} {
\textit{(iii)} Sensing utility modeling with RCS uncertainty: Considering RCS uncertainties and communication constraints to enhance the robustness of sensing utility estimation and decision-making;
\textit{(iv)} Incentive-compatible volunteer selection: Designing advanced compensation schemes (e.g., auction-based or contract-theoretic) to ensure fairness and efficiency in volunteer participation under limited information.}

{

	}
  
\newpage
\clearpage
\appendices
\section{Derivations associated with ISAC}
\subsection{Derivation of CRLB of ${\boldsymbol{\eta}_{i,j}}$}
Here, we first note that the device-free sensing model and the device-based sensing model are similar in structure, with the latter being derivable in a similar manner. As an example, we present the detailed derivation of the device-free sensing model. Denote
\begin{equation}
	\mathbf{H}_{i,j,n} = \Omega(\tau_{i,j,n}) \mathbf{A}_{i,j}^\mathsf{r} (\phi^\mathsf{(s)}_{i,j}) (\mathbf{A}_{i,j}^\mathsf{t})^\mathsf{H} \theta^\mathsf{(s)}_{i,j},
\end{equation}
where
\begin{equation}
	\Omega(\tau_{i,j}) = \frac{\sqrt{N_j^\mathsf{T} N_i^\mathsf{R} p_{i,j,n} h_{i,j,n}}}{\sqrt{p_{i,j,n}}} e^{j\frac{2\pi n \tau_{i,j,n}}{NT_s}}.
\end{equation}
Based on \cite{RW ISAC1,46}, when dealing with AWGN, the $(\mathbbm{p},\mathbbm{q})$-th elements in $\mathbf{J}_{\boldsymbol{\eta}_{i,j}}(\mathbbm{p},\mathbbm{q})$ can be derived as
\begin{equation}
	\mathbf{J}_{\boldsymbol{\eta}_{i,j}}(\mathbbm{p},\mathbbm{q}) = \sum_n \frac{2}{\sigma_s^2} \left\{ \frac{\partial (\mathbf{H}_{i,j,n} s_{ref})^\mathsf{H}}{\partial {\boldsymbol{\eta}_{i,j}}(\mathbbm{p})} \cdot \frac{\partial (\mathbf{H}_{i,j,n} s_{ref})}{\partial {\boldsymbol{\eta}_{i,j}}(\mathbbm{q})} \right\}^\mathsf{Re},
\end{equation}
where ${\boldsymbol{\eta}_{i,j}}(\mathbbm{q})$ is the $\mathbbm{q}$-th element of ${\boldsymbol{\eta}_{i,j}}$. By decomposing $\mathbf{H}_{i,j,n}$ across antennas, we have
\begin{equation}
	\begin{aligned}
		&\tilde{h}_{i,j,n,l_j^\mathsf{T},l_i^\mathsf{R}} =\\& \Omega(\tau_{i,j,n}) \times \exp\left( \frac{2\pi n}{\lambda_n} j \left( - \frac{N_j^\mathsf{T}-1}{2} + l_j^\mathsf{T} \right) \right)
		\times d \sin \theta_{i,j} \\&+ \frac{2\pi n}{\lambda_n} j \left( - \frac{N_i^\mathsf{R}-1}{2} + l_i^\mathsf{R} \right) d \sin \phi_{i,j},
	\end{aligned}
\end{equation}
where $l_j^\mathsf{T}\in \{1,2,...,N_j^\mathsf{T}\}$ and $l_i^\mathsf{R} \in \{1,2,...,N_i^\mathsf{R}\}$. Inspired by the method presented in \cite{46}, under AWGN, the $(\mathbbm{p},\mathbbm{q})$-th elements in $\mathbf{J}_{\boldsymbol{\eta}_{i,j}}$ can then be derived as
\begin{equation}{\small
	\begin{aligned}
			&\mathbf{J}_{\boldsymbol{\eta}_{i,j}}(\mathbbm{p},\mathbbm{q}) =\\& \frac{2}{\sigma_s^2} \sum_{n = \mathbbm{l}^\mathsf{(s)}_{i,j}}^{\mathbbm{l}^\mathsf{(s)}_{i,j}+N^\mathsf{(s)}_{i,j}} \sum_{l_j^\mathsf{T}=1}^{N_j^\mathsf{T}} \sum_{l_i^\mathsf{R}=1}^{N_i^\mathsf{R}} \left\{ \frac{\partial (\tilde{h}_{i,j,n,l_j^\mathsf{T},l_i^\mathsf{R}})^\mathsf{H}}{\partial {\boldsymbol{\eta}_{i,j}}(\mathbbm{p})} \cdot \frac{\partial \tilde{h}_{i,j,n,l_j^\mathsf{T},l_i^\mathsf{R}}}{\partial {\boldsymbol{\eta}_{i,j}}(\mathbbm{q})} \right\}^\mathsf{Re},
	\end{aligned}}
\end{equation}
where the first summation is among all the subcarriers in the subchannels. Due to the difficulty of deriving a closed-form with inverse processing of $\mathbf{J}_{\boldsymbol{\eta}_{i,j}}$, we adopt a similar approximation approach as also applied in \cite{RW ISAC1,47,48}. For the three variables in ${\boldsymbol{\eta}_{i,j}}$, assuming that the error source comes from the white noise, the estimation of each measurement is considered as independent, then, we have
\begin{equation}
	\text{var}(\bm{\hat{\eta}}_{i,j}(i)) \geq \mathbf{J}_{\boldsymbol{\eta}_{i,j}}^{-1}(i,i), \quad i \in \{1, 2, 3\},
\end{equation}
as shown in (\ref{equ. 59}).
\begin{figure*}[b!]
	\hrulefill
	\begin{equation}\label{equ. 59}
		\begin{aligned}
			\text{var}(\hat{\theta}_{i,j}) &\geq \frac{\zeta_1}{N^\mathsf{(s)}_{i,j} P_{i,j}} = \frac{3\sigma^2 \rho_{i,j}}{16\pi^2 h^2_{i,j}N^\mathsf{(s)}_{i,j} N_i^\mathsf{R} P^\mathsf{(s)}_{i,j} \cos^2 \theta_{i,j}N_j^\mathsf{T} (N_j^\mathsf{T} + 1)(2N_j^\mathsf{T} + 1)} \\
			\text{var}(\hat{\phi}_{i,j}) &\geq \frac{\zeta_2}{N^\mathsf{(s)}_{i,j} P_{i,j}} = \frac{3\sigma^2 \rho_{i,j}}{16\pi^2 h^2_{i,j}N^\mathsf{(s)}_{i,j} N_j^\mathsf{T} P^\mathsf{(s)}_{i,j} \cos^2 \theta_{i,j}N_i^\mathsf{R} (N_i^\mathsf{R} + 1)(2N_i^\mathsf{R} + 1)} \\
			\text{var}(\hat{\tau}_{i,j}) &\geq  \frac{\zeta_3}{N^\mathsf{(s)}_{i,j}(N^\mathsf{(s)}_{i,j}+1)(2N^\mathsf{(s)}_{i,j}+1)P^\mathsf{(s)}_{i,j}}, \zeta_3 = \frac{3\sigma^2 \rho_{i,j}(N_{i,j}^\mathsf{(s)})^2 T_s^2}{4\pi^2 h^2_{i,j} N_j^\mathsf{T} N_i^\mathsf{R}} 
		\end{aligned}
	\end{equation}
\end{figure*}

\subsection{Derivation of PEB from CRLB of ${\boldsymbol{\eta}_{i,j}}$}
With the FIM of channel parameters, the PEB$_{i,j}$ can be derived through the variable transformation tensor $\mathbb{T}$ from ${\boldsymbol{\eta}_{i,j}}$ to $l^\mathsf{Q}_{i}$\cite{RW ISAC1,49}, as given by
\begin{equation}\label{equ. 60}
	\mathbf{J}_{l^\mathsf{Q}_{i}} = \mathbb{T} \mathbf{J}_{\boldsymbol{\eta}_{i,j}} \mathbb{T}^\mathsf{T}, \quad \mathbb{T} = \frac{\partial {\boldsymbol{\eta}^\mathsf{T}_{i,j}}}{\partial l^\mathsf{Q}_{i}},
\end{equation}
where $\mathbb{T} \in \mathbb{R}^{3 \times 2}$ can be written as
\begin{equation}\label{equ. 61}
	\mathbb{T} = \left[ \frac{\partial \tau_{i,j}}{\partial l^\mathsf{Q}_{i}}, \frac{\partial \theta_{i,j}}{\partial l^\mathsf{Q}_{i}}, \frac{\partial \phi_k}{\partial l^\mathsf{Q}_{i}} \right].
\end{equation}

From (\ref{equ. transmission time})-(\ref{equ. 18}), we have
\begin{equation}\label{equ. 62}
	\begin{aligned}
			&\frac{\partial \tau_{i,j}}{\partial l^\mathsf{Q}_{i}} = \frac{ \left[ \cos(\theta_{i,j}) + \cos(\phi_{i,j}), \sin(\theta_{i,j}) + \sin(\phi_{i,j}) \right]^\mathsf{T}}{c},
		\\&\frac{\partial \theta_{i,j}}{\partial l^\mathsf{Q}_{i}} = \frac{\left[ - \sin(\theta_{i,j}), \cos(\theta_{i,j}) \right]^\mathsf{T}}{d^\mathsf{Q}_{i}} ,
		\\&\frac{\partial \phi_{i,j}}{\partial l^\mathsf{Q}_{i}} = \frac{-\left[ \cos(\pi - \phi_{i,j}), \cos(\pi - \phi_{i,j}) \right]^\mathsf{T}}{d^\mathsf{Q}_{i,j}} .
	\end{aligned}
\end{equation}
\begin{figure*}[t!]
	\begin{equation}\label{equ. 63}
		\begin{aligned}
			&\mathbf{J}_{l^\mathsf{Q}_{i,j}}(1,1) = P^\mathsf{(s)}_{i,j} N^\mathsf{(s)}_{i,j} \left( \frac{\sin^2 \theta_{i,j} }{\zeta_1 (d^\mathsf{Q}_{i,j})^2}+\frac{\cos^2 \theta_{i,j} }{\zeta_2 (d^\mathsf{Q}_{i})^2} + \frac{\left( \cos \theta_{i,j} + \cos \phi_{i,j} \right)^2}{\zeta_3 c^2} \left( N^\mathsf{(s)}_{i,j}+1 \right) \left( 2N^\mathsf{(s)}_{i,j}+1 \right)\right)
			\\&\mathbf{J}_{l^\mathsf{Q}_{i,j}}(2,2) = P^\mathsf{(s)}_{i,j} N^\mathsf{(s)}_{i,j} \left( \frac{\cos^2 \theta_{i,j} }{\zeta_1 (d^\mathsf{Q}_{i,j})^2}+\frac{\sin^2 \theta_{i,j} }{\zeta_2 (d^\mathsf{Q}_{i})^2} + \frac{\left( \sin \theta_{i,j} + \sin \phi_{i,j} \right)^2}{\zeta_3 c^2} \left( N^\mathsf{(s)}_{i,j}+1 \right) \left( 2N^\mathsf{(s)}_{i,j}+1 \right) \right)
		\end{aligned}
	\end{equation}\hrulefill
\end{figure*}
With (\ref{equ. 59}) and (\ref{equ. 62}), $\mathbf{J}_{l^\mathsf{Q}_{i}}$ can be expressed as (\ref{equ. 63}). Then, we can get
\begin{equation}
	\begin{aligned}
			&\text{PEB}_{i,j} = \sqrt{ \text{tr} \left\{ \mathbf{J}_{l^\mathsf{Q}_{i}}^{-1} \right\}} \\&\approx \sqrt{ \frac{1}{\mathbf{J}_{l^\mathsf{Q}_{i}}(1,1)} + \frac{1}{\mathbf{J}_{l^\mathsf{Q}_{i}}(2,2)} }
		= \frac{\zeta_{i,j}}{\sqrt{N^\mathsf{(s)}_{i,j}P^\mathsf{(s)}_{i,j}}} ,
	\end{aligned}
\end{equation}
where the expression for \( \zeta_{i,j} \), in terms of bandwidth, antennas, channel gain, noise power, and the position of sensing targets/MUs, can be derived by substituting (\ref{equ. 60}) into (\ref{equ. 61}).

\section{Derivations associated with offRFW$^2$M}
\noindent\textbf{Mathematical expectation of $ \alpha_i $.} $ \alpha_i $ refers to the MU participation uncertainty, following a Bernoulli random variable $\alpha_i$, $\alpha_i \sim {\bf{B}} \{(1, 0), (\mathbbm{a}_i, 1-\mathbbm{a}_i)\}$. The expectation of $\alpha_i$ can simply computed as $ \mathbb{E}[\alpha_i]=1\times\mathbbm{a}_i+0\times (1-\mathbbm{a}_i))=\mathbbm{a}_i $.

\noindent\textbf{Mathematical expectation of $ \beta_k $.} The variable \( \beta_k \) represents the uncertainty associated with the participation of \( \bm{c}_k \) in a practical transaction.
 The expectation of $\beta_k$ can simply computed as
\begin{equation}
	\begin{aligned}
	&\mathbb{E}[\beta_k]=1\times\Pr(\sum_{u_i\in \bm{c}_k}\alpha_i > 0)+0\times\Pr(\sum_{u_i\in \bm{c}_k}\alpha_i = 0)\\&=1-\Pr(\sum_{u_i\in \bm{c}_k}\alpha_i = 0)=1-\prod_{u_i\in\bm{c}_k}(1-\mathbbm{a}_i)
	\end{aligned}
\end{equation}

\noindent\textbf{Mathematical expectation of $ \mathbbm{v}^\mathsf{(c)}_{i,j} $ and $ \mathbbm{v}^\mathsf{(s)}_{k,j} $.}
As resource overbooking can lead to the case where a contractual client being selected as a volunteer, we use $ \mathbbm{v}^\mathsf{(c)}_{i,j} $ and $ \mathbbm{v}^\mathsf{(s)}_{k,j} $ to indicate whether MU $ u_i $ and $ \bm{c}_k$ is determined as a volunteer by BS $ s_j $ in each practical transaction.

Due to intermittent participation of MUs, assessing the close-form of the probability of an MU being a volunteer needs large calculations. We first use notation $\bm{\mathcal{M}}_j=\left\{\bm{M}_1,..,\bm{M}_n,...,\bm{M_{|\mathcal{M}_j|}}\right\}$ to collect all the possible cases of the participation of clients participation in the online trading market, where $\bm{M}_n=\left\{\alpha_1,...,\alpha_i,...,\alpha_{|\varphi^\mathsf{(c)}\left(s_j\right)|}, \beta_1,...,\beta_k, ..., \beta_{|\varphi^\mathsf{(s)}\left(s_j\right)|}\right\}$ is a vector and denoted as the ${n}^\text{th}$ case of clients' participation in a transaction. For example, suppose that $s_j$ has pre-signed long-term contracts with two clients, all the possible cases of these clients taking part in a practical transaction can be expressed as $\bm{\mathcal{M}_j}=\left\{\bm{M}_1,\bm{M}_2,\bm{M}_3,\bm{M}_4\right\}=\left\{\left\{0,0\right\},\left\{0,1\right\},\left\{1,0\right\},\left\{1,1\right\}\right\}$. For notational simplicity, we use a set $\bm{\mathcal{M}}^\mathsf{U}_{i,j}$ to denote all the cases that $u_i$ attends in a transaction, which, unfortunately, is selected as a volunteer by $s_j$.
Accordingly, the probability of MU $ u_i $ being determined by $s_j$ as a volunteer is given by
\begin{equation}\label{key}{\small
		\begin{aligned}
			\text{Pr}(\mathbbm{v}^\mathsf{(c)}_{i,j}=1)=\sum_{\bm{M}_n\in\bm{\mathcal{M}}^\mathsf{U}_{i,j}}{\prod_{\alpha_i,\beta_k\in \bm{M}_n}}{\mathbb{E}[\alpha_i]\mathbb{E}[\beta_k]},		
	\end{aligned}}
\end{equation}
and the expected value of $\mathbbm{v}^\mathsf{(c)}_{i,j}$ can be calculated by
\begin{equation}{\small
		\begin{aligned}
			&\mathbb{E}[\mathbbm{v}^\mathsf{(c)}_{i,j}]=\text{Pr}\left(\mathbbm{v}^\mathsf{(c)}_{i,j}=0\right)\times 0+\text{Pr}\left(\mathbbm{v}^\mathsf{(c)}_{i,j}=1\right)\times 1\\&=\text{Pr}(\mathbbm{v}^\mathsf{(c)}_{i,j}=1)=\sum_{\bm{M}_n\in\bm{\mathcal{M}}^\mathsf{U}_{i,j}}{\prod_{\alpha_i,\beta_k\in \bm{M}_n}}{{\mathbb{E}[\alpha_i]\mathbb{E}[\beta_k]}}.
	\end{aligned}}
\end{equation}
Similarly, we use the set \( \bm{\mathcal{M}}^\mathsf{C}_{k,j} \) to represent all the cases where \( \bm{c}_k \) participates in a transaction, which, however, has become a volunteer by \( s_j \). Therefore, the expected value of \( \mathbbm{v}^\mathsf{(s)}_{i,j} \) is given by
\begin{equation}{\small
		\begin{aligned}
			&\mathbb{E}[\mathbbm{v}^\mathsf{(s)}_{k,j}]=\text{Pr}\left(\mathbbm{v}^\mathsf{(s)}_{k,j}=0\right)\times 0+\text{Pr}\left(\mathbbm{v}^\mathsf{(s)}_{k,j}=1\right)\times 1\\&=\text{Pr}(\mathbbm{v}^\mathsf{(s)}_{k,j}=1)=\sum_{\bm{M}_n\in\bm{\mathcal{M}}^\mathsf{C}_{k,j}}{\prod_{\alpha_i,\beta_k\in \bm{M}_n}}{{\mathbb{E}[\alpha_i]\mathbb{E}[\beta_k]}}.
	\end{aligned}}
\end{equation}

\noindent\textbf{Mathematical expectation of $ V^\mathsf{(s),max }_{k,j} $.} Due to the uncertainty in the participation of MUs in the practical transaction, and \( V^{\mathsf{(s), max}}_{k,j} = \max \{V^{\mathsf{(s)}}_{i,j}\} , u_i \in \bm{c}_k \), the expected value of \( V^{\mathsf{(s), max}}_{k,j} \) can be defined as \( \mathbb{E}[V^{\mathsf{(s), max}}_{k,j}] = \max\{ \mathbbm{a}_i V^{\mathsf{(s)}}_{i,j} \}, u_i \in \bm{c}_k \).

\noindent\textbf{Derivations associated with (\ref{equ. PF BS C5}) and (\ref{equ. PF BS C6}).} We obtain an upper bound for the left-hand side of equation (\ref{equ. PF BS C5}) using the Markov inequality \cite{RW Matching3}, as given by
\begin{equation}\label{key}{\small
		\begin{aligned}
			&R_1^\mathsf{S}(s_j,\varphi^\mathsf{(c)}(s_j),\mathbb{C}^\mathsf{(c)}_{i,j},\varphi^\mathsf{(s)}(s_j),\mathbb{C}^\mathsf{(s)}_{k,j}) \leq \rho_1 \Rightarrow \\& \Pr\left(\sum_{u_i\in\varphi^\mathsf{(c)}(s_j)}\alpha_i\mathbbm{c}^\mathsf{(c),B}_{i,j}+\sum_{\bm{c}_k\in\varphi^\mathsf{(s)}(s_j)}\beta_k\mathbbm{c}^\mathsf{(s),B}_{k,j}> B_j\right)\leq\\& \frac{\mathbb{E}\left[\sum_{u_i\in\varphi^\mathsf{(c)}(s_j)}\alpha_i\mathbbm{c}^\mathsf{(c),B}_{i,j}+\sum_{\bm{c}_k\in\varphi^\mathsf{(s)}(s_j)}\beta_k\mathbbm{c}^\mathsf{(s),B}_{k,j}\right]}{B_j} \leq\rho_1 \Rightarrow\\&  
            \frac{\sum_{u_i\in\varphi^\mathsf{(c)}(s_j)}\mathbb{E}[\alpha_i]\mathbbm{c}^\mathsf{(c),B}_{i,j}+\sum_{\bm{c}_k\in\varphi^\mathsf{(s)}(s_j)}\mathbb{E}[\beta_k]\mathbbm{c}^\mathsf{(s),B}_{k,j}}{B_j} \leq\rho_1.
	\end{aligned}}
\end{equation}
Similarly, constraint (\ref{equ. PF BS C6}) can be reformulated as
\begin{equation}\label{key}{\small
		\begin{aligned}
			&R_2^\mathsf{S}(s_j,\varphi^\mathsf{(c)}(s_j),\mathbb{C}^\mathsf{(c)}_{i,j},\varphi^\mathsf{(s)}(s_j),\mathbb{C}^\mathsf{(s)}_{k,j}) \leq \rho_2 \Rightarrow \\& \Pr\left(\sum_{u_i\in\varphi^\mathsf{(c)}(s_j)}\alpha_i\mathbbm{c}^\mathsf{(c),Pow}_{i,j}+\sum_{\bm{c}_k\in\varphi^\mathsf{(s)}(s_j)}\beta_k\mathbbm{c}^\mathsf{(s),Pow}_{k,j}> P_j\right)\leq \rho_2 \\& \Rightarrow  
            \frac{\sum_{u_i\in\varphi^\mathsf{(c)}(s_j)}\mathbb{E}[\alpha_i]\mathbbm{c}^\mathsf{(c),Pow}_{i,j}+\sum_{\bm{c}_k\in\varphi^\mathsf{(s)}(s_j)}\mathbb{E}[\beta_k]\mathbbm{c}^\mathsf{(s),Pow}_{k,j}}{P_j} \leq\rho_2.
	\end{aligned}}
\end{equation}

\noindent\textbf{Derivations associated with (\ref{equ. PF MU C6}) and (\ref{equ. PF MU C7}).} 
Constraint (\ref{equ. PF MU C6}) represents a probabilistic expression, making its close form non-trivial to be obtained. Therefore, we define the variable $ \hat{u}^\mathsf{(c),U}\left(u_i,\varphi^\mathsf{(c)}(u_i),\mathbb{C}^\mathsf{(c)}_{i,j}\right)=u^\mathsf{(c)}_\mathsf{\max}-u^\mathsf{(c),U}\left(u_i,\varphi^\mathsf{(c)}(u_i),\mathbb{C}^\mathsf{(c)}_{i,j}\right) $, where $u^\mathsf{(c)}_\mathsf{\max}$ represents the maximum value of $ u^\mathsf{(c),U}\left(u_i,\varphi^\mathsf{(c)}(u_i),\mathbb{C}^\mathsf{(c)}_{i,j}\right)$, and transform (\ref{equ. PF MU C6}) into a tractable one by exploiting a set of bounding techniques. First, (\ref{equ. PF MU C6}) can be rewritten as
\begin{equation}\label{DR 38d}{\small
		\begin{aligned}
			&R_1^\mathsf{U}(u_i,\varphi^\mathsf{(c)}(u_i),\mathbb{C}^\mathsf{(c)}_{i,j})
            \\&=\Pr\left(u^\mathsf{(c),U}\left(u_i,\varphi^\mathsf{(c)}(u_i),\mathbb{C}^\mathsf{(c)}_{i,j}\right)\leq u^\mathsf{(c)}_\mathsf{\min}\right)
            \\&=\Pr\left(\hat{u}^\mathsf{(c),U}\left(u_i,\varphi^\mathsf{(c)}(u_i),\mathbb{C}^\mathsf{(c)}_{i,j}\right)\geq u^\mathsf{(c)}_\mathsf{\max}-u^\mathsf{(c)}_\mathsf{\min}\right)\leq\rho_3 .
	\end{aligned}}
\end{equation}

To obtain a tractable form for (\ref{DR 38d}), we can have the upper-bound of its left-hand side by using Markov inequality \cite{RW Matching3}:
\begin{equation}\label{DR 38d1}{\small
		\begin{aligned}
		&\Pr\left(\hat{u}^\mathsf{(c),U}\left(u_i,\varphi^\mathsf{(c)}(u_i),\mathbb{C}^\mathsf{(c)}_{i,j}\right)\geq u^\mathsf{(c)}_\mathsf{\max}-u^\mathsf{(c)}_\mathsf{\min}\right)
        \\&\leq \frac{\mathbb{E}\left[\hat{u}^\mathsf{(c),U}\left(u_i,\varphi^\mathsf{(c)}(u_i),\mathbb{C}^\mathsf{(c)}_{i,j}\right)\right]}{u^\mathsf{(c)}_\mathsf{\max}-u^\mathsf{(c)}_\mathsf{\min}}
        \\&=\frac{u^\mathsf{(c)}_\mathsf{\max}-\mathbb{E}\left[u^\mathsf{(c),U}\left(u_i,\varphi^\mathsf{(c)}(u_i),\mathbb{C}^\mathsf{(c)}_{i,j}\right)\right]}{u^\mathsf{(c)}_\mathsf{\max}-u^\mathsf{(c)}_\mathsf{\min}}.
	\end{aligned}}
\end{equation}
Combining (\ref{DR 38d}) and (\ref{DR 38d1}), we can then get a tractable form for (\ref{equ. PF MU C6}):
\begin{equation}\label{key}{\small
		\begin{aligned}
			\frac{u^\mathsf{(c)}_\mathsf{\max}-\mathbb{E}\left[u^\mathsf{(c),U}\left(u_i,\varphi^\mathsf{(c)}(u_i),\mathbb{C}^\mathsf{(c)}_{i,j}\right)\right]}{u^\mathsf{(c)}_\mathsf{\max}-u^\mathsf{(c)}_\mathsf{\min}}\leq\rho_3	,
	\end{aligned}}
\end{equation}
where the value of $\mathbb{E}\left [u^\mathsf{(c),U}\left(u_i,\varphi^\mathsf{(c)}(u_i),\mathbb{C}^\mathsf{(c)}_{i,j}\right)\right ]$ is given by (\ref{equ. expected comm utility}).
Similarly, we can get a tractable form for (\ref{equ. PF MU C7}):
\begin{equation}\label{key}{\small
		\begin{aligned}
			\frac{u^\mathsf{(s)}_\mathsf{\max}-\mathbb{E}\left [u^\mathsf{(s),U}\left(\bm{c}_k,\varphi^\mathsf{(s)}(\bm{c}_k),\mathbb{C}^\mathsf{(s)}_{k,j}\right)\right ]}{u^\mathsf{(s)}_\mathsf{\max}-u^\mathsf{(s)}_\mathsf{\min}}\leq\rho_4	,
	\end{aligned}}
\end{equation}
where $u^\mathsf{(s)}_\mathsf{\max}$ represents the maximum value of $ u^\mathsf{(s),U}\left(\bm{c}_k,\varphi^\mathsf{(s)}(\bm{c}_k),\mathbb{C}^\mathsf{(s)}_{k,j}\right)$, and the value of $\mathbb{E}\left [u^\mathsf{(s),U}\left(\bm{c}_k,\varphi^\mathsf{(s)}(\bm{c}_k),\mathbb{C}^\mathsf{(s)}_{k,j}\right)\right ]$ is given by (\ref{equ. expected sensing utility}).

\section{Property Analysis of offRFW$^2$M}
\begin{Prop}\label{Prop 7}
	(Convergence of a set of matching in offRFW$^2$M) Alg. 1 converges within finite rounds.
\end{Prop}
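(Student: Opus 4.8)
\textbf{Proof proposal for Proposition~\ref{Prop 7} (Convergence of offRFW$^2$M within finite rounds).}

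The plan is to show that the round index $\mathcal{X}$ cannot increase indefinitely by exhibiting a quantity that changes monotonically across rounds and is bounded. The natural candidate is the collection of bids $\{\mathbbm{c}^\mathsf{(c),Pay}_{i,j}\langle\mathcal{X}\rangle\}_{u_i}$ and $\{\mathbbm{c}^\mathsf{(s),Pay}_{k,j}\langle\mathcal{X}\rangle\}_{\bm{c}_k}$. First I would observe from Step~5 of Alg.~\ref{Alg:1} (lines 17--27) that, for every client, the bid is non-decreasing in $\mathcal{X}$: in each round a bid either stays the same or is raised by the fixed increment $\Delta p$, and the update rule explicitly caps it at the client's valuation, i.e.\ $\mathbbm{c}^\mathsf{(c),Pay}_{i,j}\langle\mathcal{X}+1\rangle\leftarrow\min\{\mathbbm{c}^\mathsf{(c),Pay}_{i,j}\langle\mathcal{X}\rangle+\Delta p,\ V^\mathsf{(c)}_{i,j}\}$, and analogously for sensing. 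Hence each bid lies in the bounded interval $[p^\mathsf{\min}_{i,j},\, V^\mathsf{(c)}_{i,j}]$ (respectively $[p^\mathsf{\min}_{k,j},\, V^\mathsf{(s)}_{k,j}]$), which by constraints~(\ref{equ. PF MU C3})--(\ref{equ. PF MU C5}) is finite and nonempty. Because each increase is by at least $\Delta p>0$, any particular bid can be raised at most $\lceil (V^\mathsf{(c)}_{i,j}-p^\mathsf{\min}_{i,j})/\Delta p\rceil$ times before it is frozen at its valuation.

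Next I would argue that the algorithm terminates once no bid changes. The loop guard (line 4, together with the \texttt{flag} updates in lines 5 and 28--29) sets $flag_i,flag_k\leftarrow\mathbf{True}$ only when a preferred solution $F^\mathsf{(c),\star}_i$ or $F^\mathsf{(s),\star}_k$ differs between two consecutive rounds; and the preferred solution is a deterministic function of the current bid vector and the fixed feasible sets $\bm{\mathcal{C}^\mathsf{(c),F}}_i,\bm{\mathcal{C}^\mathsf{(s),F}}_k$ (Definitions~\ref{def 5} and~\ref{def 6}). Therefore, if in some round \emph{every} client's bid equals the value it had in the previous round, all preferred solutions and all BS-side DP selections (line 15) are reproduced identically, every \texttt{flag} stays \textbf{False}, and the \texttt{while} loop exits. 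Consequently, as long as the algorithm has not halted, at least one bid must strictly increase (by $\ge\Delta p$) in each round. Since the total budget for bid increments across all finitely many clients is
\[
\Theta \;=\; \sum_{u_i\in\bm{\mathcal{U}}}\Big\lceil \tfrac{V^\mathsf{(c)}_{i,j}-p^\mathsf{\min}_{i,j}}{\Delta p}\Big\rceil \;+\; \sum_{\bm{c}_k\in\bm{\mathcal{C}}}\Big\lceil \tfrac{V^\mathsf{(s)}_{k,j}-p^\mathsf{\min}_{k,j}}{\Delta p}\Big\rceil \;<\;\infty,
\]
the number of rounds in which any progress occurs is at most $\Theta$, and I would add the $O(1)$ extra rounds needed to detect two consecutive unchanged rounds. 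Hence $\mathcal{X}^\mathsf{max}\le \Theta+2$, establishing convergence in finitely many rounds.

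The main obstacle I anticipate is not the monotonicity of bids, which is immediate from the pseudocode, but rather carefully justifying that a round with no bid change really does force termination. One must check that all other state variables updated inside the loop --- the interest sets $\mathbb{Y}^\mathsf{(c)},\mathbb{Y}^\mathsf{(s)}$, the temporary BS selections produced by the two-dimensional $0$--$1$ knapsack DP, and the preference-list orderings --- are \emph{pure functions} of the bid vector and of data that are fixed throughout the run (locations, channel gains, resource caps, penalties, overbooking rates, risk thresholds). This requires noting that the feasible-solution sets in Definitions~\ref{def 5}--\ref{def 6} depend on $\mathcal{X}$ only through the bids, and that ties in the DP and in the preference sorting are broken by a fixed deterministic rule, so the map ``bid vector $\mapsto$ next bid vector'' is well defined; then a fixed point of this map is exactly a halting configuration. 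A secondary subtlety is the guard \texttt{if $\mathcal{X}\le 2$} appearing in line~28 of Alg.~\ref{Alg:1}; I would treat this as a typographical artifact (or interpret it as the intended ``detect no change over two consecutive rounds'' condition) and note that, under the natural reading, the bound $\mathcal{X}^\mathsf{max}\le\Theta+2$ is unaffected. With these points nailed down, the finiteness of $\bm{\mathcal{U}},\bm{\mathcal{S}},\bm{\mathcal{C}}$ and of each valuation interval closes the argument.
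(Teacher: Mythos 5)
Your proposal is correct and rests on the same core idea as the paper's proof: each bid is non-decreasing, rises by at least $\Delta p>0$ whenever progress is made, and is capped at the client's valuation, so only finitely many rounds can occur before every payment is either accepted or frozen at its maximum. The paper's own argument is only a two-line sketch of this, so your added details --- the explicit bound $\Theta$, the observation that the round-to-round update is a pure function of the bid vector so that an unchanged bid vector forces termination, and the flag on the \texttt{if $\mathcal{X}\le 2$} guard as a likely typo --- strengthen rather than diverge from it.
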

\begin{proof}
	As the offRFW$^2$M refers to a set of M2M matching (matching between BSs and individual MUs, as well as matching between BSs and coalitions), we utilize the DP algorithm to transform the problem into a two-dimensional 0-1 knapsack problem \cite{RW Matching3}. After a finite number of rounds, each client's payment can either be accepted or reach its maximum value while considering constraints (\ref{equ. PF BS C3}), (\ref{equ. PF BS C4}), (\ref{equ. PF BS C5}), and (\ref{equ. PF BS C6}) (e.g., lines 17-26, Alg. 1), ensuring the convergence.
\end{proof}

\begin{Prop}
	(Individual rationality of offRFW$^2$M) The proposed offRFW$^2$M mechanism ensures individual rationality for All the BSs, individual MUs, and sensing coalitions are individual rational in the offRFW$^2$M.
\end{Prop}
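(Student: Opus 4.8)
\medskip

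The plan is to verify individual rationality separately for BSs and for clients (individual MUs and sensing coalitions), showing that each guarantee follows directly from the construction of offRFW$^2$M as described in Alg.~\ref{Alg:1}. The core idea is that every quantity exchanged in a contract is, by design, screened against the relevant feasibility constraint before it can appear in a temporary (and hence final) match, so no agent is ever forced into a contract that violates its own participation requirements. I would organize the argument around the two bullet points of the statement.

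\medskip

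First I would handle the BS side. The claim has two parts. For part (i), the resource-capacity guarantees $(\ref{equ. PF BS C3})$ and $(\ref{equ. PF BS C4})$: in Step~4 (lines 13--16), each BS $s_j$ selects the accepted clients $\mathbb{Y}^\mathsf{(c)}(s_j)$ and $\mathbb{Y}^\mathsf{(s)}(s_j)$ by solving a two-dimensional 0--1 knapsack problem via DP \emph{subject to} exactly these two constraints, with capacities $(1+O_j^\mathsf{B})B_j$ and $(1+O_j^\mathsf{Pow})P_j$. Since the DP search space is restricted to feasible packings, any set $s_j$ temporarily accepts — and hence the final match $\varphi^\mathsf{(c)}(s_j)=\mathbb{Y}^\mathsf{(c)}(s_j)$, $\varphi^\mathsf{(s)}(s_j)=\mathbb{Y}^\mathsf{(s)}(s_j)$ returned at termination — satisfies $(\ref{equ. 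PF BS C3})$--$(\ref{equ. PF BS C4})$. For part (ii), the risk guarantees $(\ref{equ. PF BS C5})$ and $(\ref{equ. PF BS C6})$: the same DP step additionally enforces these two constraints (in their tractable forms from Appendix~B), so the accepted set also respects the risk thresholds $\rho_1,\rho_2$. I would invoke Proposition~\ref{Prop 7} (convergence in finitely many rounds) to conclude that such a feasible selection exists and is indeed output.

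\medskip

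Next I would handle the client side. For part (i), that each client's valuation covers its payment, i.e.~$(\ref{equ. PF MU C3})$: in Step~5 (lines 17--27), whenever an MU $u_i$ (resp.\ coalition $\bm{c}_k$) raises its bid, the update is $\mathbbm{c}^\mathsf{(c),Pay}_{i,j}\langle\mathcal{X}+1\rangle \leftarrow \min\{\mathbbm{c}^\mathsf{(c),Pay}_{i,j}\langle\mathcal{X}\rangle+\Delta p,\ V^\mathsf{(c)}_{i,j}\}$, so the payment is capped at the valuation by construction; combined with the condition ``$V^\mathsf{(c)}_{i,j}\ge \mathbbm{c}^\mathsf{(c),Pay}_{i,j}$'' that must hold before any increase, the bid never exceeds $V^\mathsf{(c)}_{i,j}$ (and analogously $V^\mathsf{(s)}_{i,j}=V^\mathsf{(s),max}_{k,j}$ for coalitions), yielding $(\ref{equ. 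PF MU C3})$. For part (ii), the service-quality and risk guarantees $(\ref{equ. PF MU C4})$--$(\ref{equ. PF MU C7})$: these are exactly the constraints used when each client builds its preference list (Definitions~\ref{def 5} and~\ref{def 6} restrict the feasible solution sets $\bm{\mathcal{C}^\mathsf{(c),F}}_i$ and $\bm{\mathcal{C}^\mathsf{(s),F}}_k$ to solutions meeting $(\ref{equ. PF MU C8})$, $(\ref{equ. PF MU C9})$, and the QoS constraints), and the bid-increase step in line~20 (resp.\ line~25) is only triggered when $(\ref{equ. PF MU C4})$ and $(\ref{equ. PF MU C6})$ (resp.\ $(\ref{equ. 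PF MU C5})$ and $(\ref{equ. PF MU C7})$) are met; hence every proposal a client ever sends, including the one it ends up matched to, satisfies $(\ref{equ. PF MU C4})$--$(\ref{equ. PF MU C7})$. A client with an empty feasible list simply proposes to no one and is unmatched, which is trivially individually rational (utility $0$).

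\medskip

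The main obstacle is not any deep inequality but rather a careful bookkeeping argument: I must ensure that the feasibility checks performed in the \emph{intermediate} rounds of Alg.~\ref{Alg:1} are genuinely inherited by the \emph{final} returned contracts $\mathbb{C}_{i,j}^\mathsf{(c)}$ and $\mathbb{C}_{k,j}^\mathsf{(s)}$. This requires (a) noting that the final match is literally a copy of the last temporary match (line near the end of Alg.~\ref{Alg:1}), so the BS-side DP constraints carry over verbatim, and (b) arguing that a client's last accepted proposal is drawn from its feasible preference list and that subsequent bid updates only move the payment within $[\,p^\mathsf{min}_{i,j},\,V^\mathsf{(c)}_{i,j}\,]$, preserving $(\ref{equ. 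PF MU C3})$--$(\ref{equ. PF MU C7})$. Monotonicity of the bid sequence and the $\min\{\cdot,V\}$ cap make this tight; I would also lean on Proposition~\ref{Prop 7} so that ``final'' is well-defined. Everything else reduces to quoting the constraint that each step of the algorithm explicitly enforces.
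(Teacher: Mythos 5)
Your proposal is correct and follows essentially the same route as the paper's own proof: both arguments are a constraint-by-constraint audit of Alg.~1, attributing the BS-side capacity and risk guarantees to the DP selection step (line 15) and the client-side value-covers-payment and risk guarantees to the bid-update logic and feasibility-filtered preference lists (lines 6, 17--26). Your additional bookkeeping about the final match inheriting the last temporary match's feasibility and the $\min\{\cdot,V\}$ cap only makes explicit what the paper leaves implicit.
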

\begin{proof}
	We offer the analysis on proving the individual rationality of both BSs and clients.
	
		\textbf{Individual rationality of BSs.} Owing to overbooking, each BS $s_j$ regards $(1+O_j^\mathsf{B})B_j$ and $(1+O_j^\mathsf{Pow})P_j$ as up limit of resources for serving MUs, and the actual number of matched clients of $s_j$ will definitely not exceed its overbooked resource supply (e.g., line 15, Alg. 1). In addition, thanks to the risk analysis, the risk of BS \(s_{j}\) having actual resource demand exceeding supply is controlled within a reasonable range (e.g., ensuring that (\ref{equ. PF BS C5}) and (\ref{equ. PF BS C6}) are satisfied, see line 15, Alg. 1).
	
	\textbf{Individual rationality of clients.} Lines 17-26 of Alg. 1 ensure that the value obtained by each client is at least equal to the payment it makes, thereby satisfying constraint (\ref{equ. PF MU C3}). Furthermore, lines 6, 18 and 23 of Alg. 1 guarantee that risks associated with each client are controlled within acceptable limits, satisfying constraints (\ref{equ. PF MU C4}), (\ref{equ. PF MU C5}), (\ref{equ. PF MU C6}), and (\ref{equ. PF MU C7}).
	 
	As a summary, clients and BSs are individual rationality in our proposed offRFW$^2$M.
\end{proof}

\begin{Prop}
	No blocking pair can exist in the Resource Trading for Communication Services in offRFW$^2$M.
\end{Prop}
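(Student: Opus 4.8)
The plan is to argue by contradiction, adapting the classical deferred-acceptance stability argument to the bid-escalation and knapsack structure of offRFW$^2$M. Suppose the terminal communication matching $\varphi^\mathsf{(c)}$ returned by Alg.~\ref{Alg:1} admits a blocking pair $\left(u_i;\mathbb{S};\mathbb{C}^\prime\right)$. Since the communication utility (\ref{equ. comm utility}) is defined for an MU served by a single BS and every MU is matched to exactly one BS, $\mathbb{S}$ reduces to a singleton $\{s_j\}$. Condition (\ref{equ. 41}) then reads $\mathbb{E}\!\left[u^\mathsf{(c),U}(u_i,\{s_j\},\mathbb{C}^\prime)\right]>\mathbb{E}\!\left[u^\mathsf{(c),U}(u_i,\varphi^\mathsf{(c)}(u_i),\mathbb{C}^\mathsf{(c)}_{i,j})\right]$; and because $\mathbb{C}^\prime$ must obey the individual-rationality constraint (\ref{equ. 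PF MU C3}) (its price $p^\prime$ does not exceed $V^\mathsf{(c)}_{i,j}$) together with (\ref{equ. PF MU C4}), (\ref{equ. PF MU C8})--(\ref{equ. PF MU C9}) and (\ref{equ. PF MU C6}), the pair $(s_j,\mathbb{C}^\prime)$ is a feasible entry of $u_i$'s preference list $\overrightarrow{L^\mathsf{(c)}_i}$ (Definition~\ref{def 5}).

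First I would show that $u_i$ must have proposed to $s_j$ --- on terms at least as favourable to $s_j$ as $\mathbb{C}^\prime$ --- and been rejected. By Steps~3 and~5 of Alg.~\ref{Alg:1}, in every round $u_i$ proposes the top still-viable entry of $\overrightarrow{L^\mathsf{(c)}_i}$ and, whenever rejected, raises the bid on that entry by $\Delta p$ (capped at $V^\mathsf{(c)}_{i,j}$), abandoning the entry only after it has been rejected at bid $V^\mathsf{(c)}_{i,j}$. As $u_i$ terminates holding a solution of strictly smaller expected utility than $(s_j,\mathbb{C}^\prime)$, the entry whose bandwidth and power coincide with those of $\mathbb{C}^\prime$ at $s_j$ must have been pursued and ultimately rejected at every bid up to $V^\mathsf{(c)}_{i,j}\ge p^\prime$. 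Thus there is a round $\mathcal{X}_0$ at which $u_i$ proposed to $s_j$ with the resources of $\mathbb{C}^\prime$ and a price at least $p^\prime$, yet was rejected.

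Next I would exploit the optimality of the BS selection step together with a monotone-improvement lemma. In Step~4 each $s_j$ solves the two-dimensional 0-1 knapsack exactly by DP, so its temporarily accepted set maximizes the BS expected utility (\ref{equ. PF BS}) over all clients currently proposing to it subject to (\ref{equ. PF BS C3})--(\ref{equ. PF BS C6}); hence at round $\mathcal{X}_0$ no feasible move that evicts a subset of $s_j$'s held clients and admits $u_i$ under $\mathbb{C}^\prime$, and no move that merely adds $u_i$, can strictly increase $s_j$'s expected utility --- else DP would have returned such a set. Moreover $s_j$'s proposing pool only gains new options and existing proposers only raise their bids, while any client leaving $s_j$'s pool is one $s_j$ had already declined; therefore $s_j$'s held-set expected utility is non-decreasing across rounds (cf.\ Prop.~\ref{Prop 7}), so the terminal value $\mathbb{E}\!\left[u^\mathsf{(c),S}(s_j,\varphi^\mathsf{(c)}(s_j),\mathbb{C}^\mathsf{(c)}_{i,j})\right]$ is at least what was attainable at round $\mathcal{X}_0$. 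Since a higher price weakly increases the per-client term $\mathbbm{c}^\mathsf{(c),Pay}_{i,j}-\mathbbm{c}^\mathsf{(c),Pow}_{i,j}$ in (\ref{equ. comm BS U}), the round-$\mathcal{X}_0$ non-improvement also holds at price exactly $p^\prime$, which contradicts the BS-side conditions (\ref{equ. 42}) for a Type~1 pair and (\ref{equ. 44}) for a Type~2 pair. Hence no blocking pair can exist.

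The hardest part will be tightening the coupling between the round-$\mathcal{X}_0$ state and the terminal state: once $u_i$ is rejected by $s_j$ it stops proposing to $s_j$, so the clients $s_j$ holds at $\mathcal{X}_0$ and at termination can differ substantially, and one must rule out that $u_i$ becomes attractive to $s_j$ \emph{after} some of those clients leave or raise their prices. Making this rigorous calls for a substitutability-type property of $u^\mathsf{(c),S}$, namely that a client declined when $s_j$ optimizes over a larger proposing pool stays declined from any budget-feasible subset with $u_i$ adjoined. I would secure this by noting that in (\ref{equ. comm BS U}) each accepted MU contributes an essentially additive, non-negatively priced term while consuming a fixed bandwidth/power budget; a residual subtlety --- that the volunteer probability $\mathbb{E}[\mathbbm{v}^\mathsf{(c)}_{i,j}]$ couples weakly across the held set via the overbooking analysis of Appendix~B --- would be absorbed into the same monotonicity estimate, closing the contradiction.
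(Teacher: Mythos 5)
Your proposal follows essentially the same route as the paper's own proof in Appendix C: argue by contradiction, use the fact that a rejected MU escalates its bid on the relevant preference-list entry until it reaches $\min\left\{\mathbb{E}[V^\mathsf{(c)}_{i,j}],\,p_{i,j}^\mathsf{(c),max}\right\}$, and invoke the optimality of the BS's DP-based knapsack selection to contradict the two blocking-pair conditions (\ref{equ. 42}) and (\ref{equ. 44}). The one substantive difference is that the step you flag as the hardest part --- connecting the BS's state in the round where $u_i$ was last rejected to its \emph{terminal} matched set, which needs a monotonicity/substitutability property of the knapsack choice across rounds --- is precisely the step the paper's proof passes over in silence by asserting inequality (\ref{60}) directly for the final matching, so your version is the same argument with its implicit assumption made explicit rather than a genuinely different one.
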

\begin{proof}
	We show there is no blocking pair of either Type 1 or Type 2, as following:
	
	\noindent 
	$\bullet$ \textbf{There is no Type 1 blocking pair related to communication services of offRFW$^2$M.} We offer the proof by considering contradiction.
	
	Under a given matching $ \varphi^\mathsf{(c)} $, MU $ u_i $ and BS $ s_j $ form a Type 1 blocking pair $ \left(u_i;s_j;\mathbb{C}^\prime\right) $.
	If MU $ u_i $ does not sign a long-term contract with BS $ s_j $, when any of the following conditions is met: \textit{(i)} the final payment offered by MU $ u_i $ equals to its expected valuation; and \textit{(ii)} the risk is out of control (e.g., constraint (\ref{equ. PF MU C6})). For analytical simplicity, we use $ p_{i,j}^\mathsf{(c),max} $ to denoted the maximum payment from $u_i$ to $s_j$ under an accepted risk $ R_1^\mathsf{U} $. Thus, the final payment $ \mathbbm{c}^\mathsf{(c),Pay}_{i,j}$ can only refer to $\mathbb{E}[V^\mathsf{(c)}_{i,j}]$ or $p_{i,j}^\mathsf{(c),max} $, shown by (\ref{59}) and (\ref{60}).
	\begin{equation}\label{59}{\small
			\begin{aligned}
				\mathbbm{c}^\mathsf{(c),Pay}_{i,j} = \text{min}\left\{\mathbb{E}[V^\mathsf{(c)}_{i,j}],p_{i,j}^\mathsf{(c),max}\right\},
		\end{aligned}}
	\end{equation}
		\begin{equation}\label{60}
		\begin{aligned}
			&\mathbb{E}\left[u^\mathsf{(c),S}\left(s_j,\left\{\varphi\left( s_j \right)\backslash\widetilde{\varphi^\mathsf{(c)\prime}}\left( s_j \right)\right\} \cup \left\{ u_i \right\},\mathbb{C}^\prime \right)\right ] \\&< \mathbb{E}\left [u^\mathsf{(c),S}\left(s_j,\varphi\left( s_j\right),\mathbb{C}^\mathsf{(c)}_{i,j} \right)\right].\\
		\end{aligned}
	\end{equation} 
	
	If BS $ s_j $ selects MU $ u_i $, we have $ \mathbbm{c}^\mathsf{(c),Pay}_{i,j}\left\langle \mathcal{X}^\mathsf{*} \right\rangle\leq \mathbbm{c}^\mathsf{(c),Pay}_{i,j}\left\langle \mathcal{X} \right\rangle = \text{min}\left\{\mathbb{E}[V^\mathsf{(c)}_{i,j}],p_{i,j}^\mathsf{(c),max}\right\} $ and the following (\ref{81})
	\begin{equation}\label{81}{\small
			\begin{aligned}
				&\mathbb{E}\left[u^\mathsf{(c),S}\left(s_j,\left\{\varphi\left( s_j \right)\backslash\widetilde{\varphi^\mathsf{(c)\prime}}\left( s_j \right)\right\} \cup \left\{ u_i \right\},\mathbb{C}^\prime \right)\right ] \geq\\& \mathbb{E}\left[u^\mathsf{(c),S}\left(s_j,\left\{\varphi\left( s_j \right)\backslash\widetilde{\varphi^\mathsf{(c)\prime\prime}}\left( s_j \right)\right\} \cup \left\{ u_i \right\},\mathbb{C}^\prime \right)\right ],\\
		\end{aligned}}
	\end{equation}
	where $ 
	\widetilde{\varphi^\mathsf{(c)\prime\prime}}\left(s_j\right) \subseteq \widetilde{\varphi^\mathsf{(c)\prime}}\left(s_j\right) $. From (\ref{60}) and (\ref{81}), we can get
	\begin{equation}\label{key}\small{
			\begin{aligned}
				&\mathbb{E}\left [u^\mathsf{(c),S}\left(s_j,\varphi\left( s_j\right),\mathbb{C}^\mathsf{(c)}_{i,j} \right)\right]> \\&\mathbb{E}\left[u^\mathsf{(c),S}\left(s_j,\left\{\varphi\left( s_j \right)\backslash\widetilde{\varphi^\mathsf{(c)\prime\prime}}\left( s_j \right)\right\} \cup \left\{ u_i \right\},\mathbb{C}^\prime \right)\right ],
		\end{aligned}}
	\end{equation}
	which is contrary to (\ref{equ. 42}), thus ensuring the inexistence of Type 1 blocking pairs.
	
	\noindent 
	$\bullet$ \textbf{There is no Type 2 blocking pair related to communication services of offRFW$^2$M.}
	We conduct the proof by considering cases of contradiction. 
	
	Under a given matching $ \varphi^\mathsf{(s)} $, MU $ u_i $ and BS $ s_j $ form a Type 2 blocking pair $ \left(u_i;s_j;\mathbb{C}^\prime\right) $, as shown by (\ref{equ. 44}).
	If MU $ u_i $ is rejected by BS $ s_j $, the final payment of $ u_i $ can be set by $ \mathbbm{c}^\mathsf{(c),Pay}_{i,j} = \text{min}\left\{\mathbb{E}[V^\mathsf{(c)}_{i,j}],p_{i,j}^\mathsf{(c),max}\right\} $, where the only reason of such a rejection is that $ s_j $ has no surplus resources. However, the coexistence of (\ref{equ. 44}) shows that BS $ s_j $ has adequate resource supply to serve MUs, which contradicts our previous assumption. Therefore, we prove that there is no Type 2 blocking pair.
	
	As a summary, no blocking pair can exist during the matching related to communication services in offRFW$^2$M. 
\end{proof}

\begin{Prop}\label{Prop 10}
	No blocking pair can exist in the Resource Trading for Sensing Services in offRFW$^2$M
\end{Prop}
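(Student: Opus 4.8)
The plan is to mirror, essentially verbatim, the proof structure already used for Proposition~\ref{Prop 10}'s communication-service counterpart, since the sensing case is structurally identical once individual MUs are replaced by sensing coalitions and the relevant utility/constraint references are swapped. Concretely, I would show separately that no Type~1 and no Type~2 blocking pair can exist in the matching $\varphi^\mathsf{(s)}$ between BSs and sensing coalitions produced by offRFW$^2$M, in each case arguing by contradiction.

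First I would handle the Type~1 case. Suppose, for contradiction, that under the matching $\varphi^\mathsf{(s)}$ a coalition $\bm{c}_k$ and a BS $s_j$ form a Type~1 blocking pair $(\bm{c}_k; s_j; \mathbb{C}^\prime)$. If $\bm{c}_k$ is not matched with $s_j$ in $\varphi^\mathsf{(s)}$, then by the logic of Alg.~\ref{Alg:1} (lines 22--26) the negotiation must have terminated with $\bm{c}_k$'s payment $\mathbbm{c}^\mathsf{(s),Pay}_{k,j}$ driven up to $\min\{\mathbb{E}[V^\mathsf{(s),max}_{k,j}], p_{k,j}^\mathsf{(s),max}\}$, where $p_{k,j}^\mathsf{(s),max}$ is the largest payment admissible under the coalition's risk constraint (\ref{equ. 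PF MU C7}) and quality constraint (\ref{equ. PF MU C5}). Since $s_j$ nonetheless rejected $\bm{c}_k$ via the DP-based knapsack selection (line 15), it must be that including $\bm{c}_k$ --- even after evicting any subset $\varphi^\mathsf{(s)\prime}(s_j)$ of currently matched coalitions --- does not strictly increase $s_j$'s expected utility; using the optimality of the DP selection over all feasible subsets satisfying (\ref{equ. PF BS C3})--(\ref{equ. PF BS C6}), I would derive
\begin{equation}
\mathbb{E}\!\left[u^\mathsf{(s),S}\!\left(s_j, \varphi^\mathsf{(s)}(s_j), \mathbb{C}^\mathsf{(s)}_{k,j}\right)\right] > \mathbb{E}\!\left[u^\mathsf{(s),S}\!\left(s_j, \{\varphi^\mathsf{(s)}(s_j)\setminus\varphi^\mathsf{(s)\prime}(s_j)\}\cup\{\bm{c}_k\}, \mathbb{C}^\prime\right)\right],
\end{equation}
which directly contradicts the defining inequality (\ref{equ. 42}) of a Type~1 blocking pair. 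Hence no such pair exists.

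Next I would handle the Type~2 case analogously. Suppose $(\bm{c}_k; s_j; \mathbb{C}^\prime)$ is a Type~2 blocking pair, so (\ref{equ. 44}) holds, i.e.\ $s_j$ would strictly gain by serving $\bm{c}_k$ \emph{in addition to} its current matches. If $\bm{c}_k$ is unmatched with $s_j$, the only reason the DP selection in Alg.~\ref{Alg:1} could have excluded it while its bid reached $\min\{\mathbb{E}[V^\mathsf{(s),max}_{k,j}], p_{k,j}^\mathsf{(s),max}\}$ is that $s_j$ lacked sufficient residual bandwidth or power (constraints (\ref{equ. PF BS C3}), (\ref{equ. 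PF BS C4})) or that adding $\bm{c}_k$ would violate a risk bound (\ref{equ. PF BS C5})--(\ref{equ. PF BS C6}); in either situation $s_j$ cannot feasibly serve $\bm{c}_k$ on top of its current load, contradicting the premise of (\ref{equ. 44}) that the augmented match $\varphi^\mathsf{(s)}(s_j)\cup\{\bm{c}_k\}$ is feasible and more profitable. Combining both parts establishes the claim. I expect the main subtlety --- as opposed to a genuine obstacle --- to be the coalition-specific bookkeeping: carefully writing the coalition payment/value quantities $\mathbbm{c}^\mathsf{(s),Pay}_{k,j}$, $\mathbb{E}[V^\mathsf{(s),max}_{k,j}]$, and the per-member division so that the ``final bid equals capped valuation'' argument (the analogue of (\ref{59})) goes through cleanly, and invoking the DP optimality correctly for the two-dimensional knapsack over both bandwidth and power simultaneously rather than one resource at a time.
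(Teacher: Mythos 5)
Your proposal is correct and follows essentially the same route as the paper's own proof: a contradiction argument for each blocking-pair type, using the fact that a rejected coalition's bid has already been driven up to $\min\{\mathbb{E}[V^\mathsf{(s),max}_{k,j}], p_{k,j}^\mathsf{(s),max}\}$, the optimality of the BS's DP-based knapsack selection to refute Type~1 pairs, and the infeasibility of serving $\bm{c}_k$ on top of the current load to refute Type~2 pairs. The only (harmless) difference is that you additionally cite the risk constraints (\ref{equ. PF BS C5})--(\ref{equ. PF BS C6}) as a possible rejection cause in the Type~2 case, which is slightly more careful than the paper's statement that resource exhaustion is the only reason.
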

\begin{proof}
	We show there is no blocking pair of either Type 1 or Type 2, as following:
	
	\noindent 
	$\bullet$ \textbf{There is no Type 1 blocking pair related to sensing services of offRFW$^2$M.} We offer the proof by considering contradiction.
	
	Under a given matching $ \varphi^\mathsf{(s)} $, coalition $ \bm{c}_k $ and BS $ s_j $ form a Type 1 blocking pair $ \left(\bm{c}_k;s_j;\mathbb{C}^\prime\right) $.
	If coalition $ \bm{c}_k $ does not sign a long-term contract with BS $ s_j $, when any of the following conditions is met: \textit{(i)} the final payment offered by coalition $ \bm{c}_k $ equals to its expected valuation; and \textit{(ii)} the risk is out of control (e.g., constraint (\ref{equ. PF MU C7})). For analytical simplicity, we use $ p_{k,j}^\mathsf{(s),max} $ to denoted the maximum payment from $\bm{c}_k$ to $s_j$ under an accepted risk $ R_2^\mathsf{U} $. Thus, the final payment $ \mathbbm{c}^\mathsf{(s),Pay}_{k,j}$ can only refer to $\mathbb{E}[V^\mathsf{(s)}_{k,j}]$ or $p_{k,j}^\mathsf{(s),max} $, shown by (\ref{59a}) and (\ref{60a}).
	\begin{equation}\label{59a}{\small
			\begin{aligned}
				\mathbbm{c}^\mathsf{(s),Pay}_{k,j} = \text{min}\left\{\mathbb{E}[V^\mathsf{(s)}_{k,j}],p_{k,j}^\mathsf{(s),max}\right\},
		\end{aligned}}
	\end{equation}
	\begin{equation}\label{60a}
		\begin{aligned}
			&\mathbb{E}\left[u^\mathsf{(s),S}\left(s_j,\left\{\varphi\left( s_j \right)\backslash\widetilde{\varphi^\mathsf{(s)\prime}}\left( s_j \right)\right\} \cup \left\{ \bm{c}_k \right\},\mathbb{C}^\prime \right)\right ] \\&< \mathbb{E}\left [u^\mathsf{(s),S}\left(s_j,\varphi\left( s_j\right),\mathbb{C}^\mathsf{(s)}_{k,j} \right)\right].\\
		\end{aligned}
	\end{equation} 
	
	If BS $ s_j $ selects coalition $ \bm{c}_k $, we have $ \mathbbm{c}^\mathsf{(s),Pay}_{k,j}\left\langle \mathcal{X}^\mathsf{*} \right\rangle\leq \mathbbm{c}^\mathsf{(s),Pay}_{k,j}\left\langle \mathcal{X} \right\rangle = \text{min}\left\{\mathbb{E}[V^\mathsf{(s)}_{k,j}],p_{k,j}^\mathsf{(s),max}\right\} $ and the following (\ref{81a})
	\begin{equation}\label{81a}{\small
			\begin{aligned}
				&\mathbb{E}\left[u^\mathsf{(s),S}\left(s_j,\left\{\varphi\left( s_j \right)\backslash\widetilde{\varphi^\mathsf{(s)\prime}}\left( s_j \right)\right\} \cup \left\{ \bm{c}_k \right\},\mathbb{C}^\prime \right)\right ] \geq\\& \mathbb{E}\left[u^\mathsf{(s),S}\left(s_j,\left\{\varphi\left( s_j \right)\backslash\widetilde{\varphi^\mathsf{(s)\prime\prime}}\left( s_j \right)\right\} \cup \left\{ \bm{c}_k \right\},\mathbb{C}^\prime \right)\right ],\\
		\end{aligned}}
	\end{equation}
	where $ 
	\widetilde{\varphi^\mathsf{(s)\prime\prime}}\left(s_j\right) \subseteq \widetilde{\varphi^\mathsf{(s)\prime}}\left(s_j\right) $. From (\ref{60a}) and (\ref{81a}), we can get
	\begin{equation}\label{key}\small{
			\begin{aligned}
				&\mathbb{E}\left [u^\mathsf{(s),S}\left(s_j,\varphi\left( s_j\right),\mathbb{C}^\mathsf{(s)}_{k,j} \right)\right]> \\&\mathbb{E}\left[u^\mathsf{(s),S}\left(s_j,\left\{\varphi\left( s_j \right)\backslash\widetilde{\varphi^\mathsf{(s)\prime\prime}}\left( s_j \right)\right\} \cup \left\{ \bm{c}_k \right\},\mathbb{C}^\prime \right)\right ],
		\end{aligned}}
	\end{equation}
	which is contrary to (\ref{equ. 42}), and thus proving the inexistence of Type 1 blocking pairs.
	
	\noindent 
	$\bullet$ \textbf{There is no Type 2 blocking pair related to sensing services of offRFW$^2$M.}
	We conduct the proof by considering cases of contradiction. 
	
	Under a given matching $ \varphi^\mathsf{(s)} $, coalition $ \bm{c}_k $ and BS $ s_j $ form a Type 2 blocking pair $ \left(\bm{c}_k;s_j;\mathbb{C}^\prime\right) $, as shown by (\ref{equ. 44}).
	If coalition $ \bm{c}_k $ is rejected by BS $ s_j $, the final payment of $ \bm{c}_k $ can be set by $ \mathbbm{c}^\mathsf{(s),Pay}_{k,j} = \text{min}\left\{\mathbb{E}[V^\mathsf{(s)}_{k,j}],p_{k,j}^\mathsf{(s),max}\right\} $, where the only reason of such a rejection is that $ s_j $ has no surplus resources. However, the coexistence of (\ref{equ. 44}) shows that BS $ s_j $ has adequate resource supply to serve coalitions, which contradicts our previous assumption. Therefore, we prove that there is no Type 2 blocking pair.
	
	As a summary, no blocking pair can exist during the matching related to sensing services in offRFW$^2$M. 
\end{proof}

\begin{Prop}\label{Prop 11}
	(Fairness, Non-wastefulness, Strong Stability of offRFW$^2$M) offRFW$^2$M is fair, non-wasteful, strong stable.
\end{Prop}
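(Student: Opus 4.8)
The plan is to establish Proposition~\ref{Prop 11} as a corollary of the per-service results already proved in this appendix, together with the definitions introduced alongside the notion of a blocking pair. Recall that, in our setting, a matching is \emph{fair} exactly when it contains no Type~1 blocking pair, \emph{non-wasteful} exactly when it contains no Type~2 blocking pair, and \emph{strongly stable} exactly when it is simultaneously individually rational, fair, and non-wasteful (cf.\ Proposition~\ref{Prop 4}). Hence it suffices to bolt together three ingredients already available: individual rationality of offRFW$^2$M (the preceding Proposition on individual rationality), the non-existence of blocking pairs in the communication matching, and the non-existence of blocking pairs in the sensing matching (Proposition~\ref{Prop 10}).

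First I would argue fairness and non-wastefulness. By the definition of Type~1 and Type~2 blocking pairs, every blocking pair in offRFW$^2$M carries a single service tag $\mathfrak{(X)}\in\{\mathsf{(c)},\mathsf{(s)}\}$ and is tested against the corresponding matching $\varphi^{\mathfrak{(X)}}$ and BS utility $u^{\mathfrak{(X)},\mathsf{S}}$; consequently the family of all conceivable Type~1 (resp.\ Type~2) pairs is exhausted by the two cases $\mathfrak{(X)}=\mathsf{(c)}$ and $\mathfrak{(X)}=\mathsf{(s)}$. The communication no-blocking-pair Proposition rules out both the Type~1 clause~(\ref{equ. 42}) and the Type~2 clause~(\ref{equ. 44}) for any triplet $(u_i;\mathbb{S};\mathbb{C}^\prime)$, and Proposition~\ref{Prop 10} does the same for any $(\bm{c}_k;\mathbb{S};\mathbb{C}^\prime)$. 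Therefore the joint outcome $(\varphi^{\mathsf{(c)}},\varphi^{\mathsf{(s)}})$ returned by Alg.~\ref{Alg:1} admits no Type~1 pair, i.e.\ it is fair, and no Type~2 pair, i.e.\ it is non-wasteful.

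Strong stability then follows by combining these two facts with individual rationality: the output contracts keep each BS within its overbooked supply and its risk budgets, so constraints~(\ref{equ. PF BS C3})--(\ref{equ. PF BS C6}) hold, and they give every client value at least its payment with bounded risks, so~(\ref{equ. PF MU C3})--(\ref{equ. PF MU C7}) hold; hence the matching is individually rational, and being additionally fair and non-wasteful, it is strongly stable by definition. The one delicate point — and the step where I expect to spend the most care — is that the capacity constraints~(\ref{equ. PF BS C3})--(\ref{equ. PF BS C4}) and the BS objective couple the two services, so a priori a BS might favor a \emph{cross-service} deviation (evicting some sensing coalitions to free capacity for a communication MU, or vice versa) that the per-service blocking-pair definition does not literally cover. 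I would dispose of this by invoking convergence (Proposition~\ref{Prop 7}) together with the structure of the client-selection step: at termination, the DP in lines~13--16 of Alg.~\ref{Alg:1} selects, for each $s_j$, the subset of \emph{all} received proposals — communication and sensing jointly — that maximizes $\mathbb{E}[u^{\mathsf{(c)},\mathsf{S}}]+\mathbb{E}[u^{\mathsf{(s)},\mathsf{S}}]$ subject to~(\ref{equ. PF BS C3})--(\ref{equ. PF BS C6}); since no admissible reshuffling of $s_j$'s clientele (within one service or across both) can raise this value, no deviating configuration of any kind can be jointly profitable for a BS and a client, which upgrades the per-service statements to the full joint matching and completes the argument.
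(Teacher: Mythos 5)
Your proof is correct and follows essentially the same route as the paper: the paper's own argument is a one-line assembly of the convergence, individual-rationality, and per-service no-blocking-pair propositions, invoked against the definitional characterizations of fairness (no Type 1 pairs), non-wastefulness (no Type 2 pairs), and strong stability (the conjunction with individual rationality). Your closing remark on cross-service deviations addresses a coupling the paper silently glosses over, and your resolution via the joint DP selection at termination is a reasonable way to close that gap.
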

\begin{proof}
	Since the matching result of Alg. 1 holds Propositions \ref{Prop 7}-\ref{Prop 10}, according to Propositions \ref{Prop 1}-\ref{Prop 4}, our proposed offRFW$^2$M is strongly fairness, non-wastefulness, strong stability.
\end{proof}

\begin{Prop}(Stability of Sensing Coalitions in offRFW$^2$M) The proposed offRFW$^2$M ensures that each sensing coalition $\bm{c}_k$ is stable.\end{Prop}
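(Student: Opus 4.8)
The plan is to verify the two asserted properties of a sensing coalition $\bm{c}_k$ in turn, reusing the individual-rationality guarantees of Proposition~\ref{Prop 1} and the Markov-inequality risk reformulations of Appendix~B. For the first bullet (each member's expected utility exceeds $u^\mathsf{(s)}_\mathsf{\min}$), I would start from the fact that, by Definition~\ref{def 6}, the preference list $\overrightarrow{L^\mathsf{(s)}_k}$ over which $\bm{c}_k$ ever bids contains only contracts satisfying the feasibility and service-quality constraints (\ref{equ. PF MU C8}), (\ref{equ. PF MU C9}), (\ref{equ. PF MU C5}), and that the client-side decision-making steps (Step~5) of Alg.~\ref{Alg:1} never raise a coalition's bid to a value that would violate the client risk constraint (\ref{equ. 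PF MU C7}). Hence the returned contract $\mathbb{C}^\mathsf{(s)}_{k,j}$ obeys $R_2^\mathsf{U}(\bm{c}_k,\varphi^\mathsf{(s)}(\bm{c}_k),\mathbb{C}^\mathsf{(s)}_{k,j})\le\rho_4$, which through the Appendix~B bounding becomes $\big(u^\mathsf{(s)}_\mathsf{\max}-\mathbb{E}[u^\mathsf{(s),U}(\bm{c}_k,\cdot)]\big)/\big(u^\mathsf{(s)}_\mathsf{\max}-u^\mathsf{(s)}_\mathsf{\min}\big)\le\rho_4$, i.e.\ a lower bound on $\mathbb{E}[u^\mathsf{(s),U}(\bm{c}_k,\cdot)]$. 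Since, by the even-split rule of Sec.~\ref{sec:over} and the footnote following (\ref{equ. sensing utility}), every $u_i\in\bm{c}_k$ receives a $1/|\bm{c}_k|$ share of value, payment, compensation and penalty, its expected utility equals $\mathbb{E}[u^\mathsf{(s),U}(\bm{c}_k,\cdot)]/|\bm{c}_k|$; combining the lower bound with $V^\mathsf{(s),max}_{k,j}\ge\mathbbm{c}^\mathsf{(s),Pay}_{k,j}$ (constraint (\ref{equ. PF MU C3}), guaranteed by Proposition~\ref{Prop 1}) then gives the claimed bound against the per-member threshold.

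For the second bullet, I would compare, for a fixed member $u_i$, its expected utility inside $\bm{c}_k$ against the expected utility it would obtain by requesting the same sensing service as a standalone individual. Inside the coalition it enjoys the representative's value $V^\mathsf{(s),max}_{k,j}=\max_{u_\ell\in\bm{c}_k}V^\mathsf{(s)}_{\ell,j}\ge V^\mathsf{(s)}_{i,j}$ and pays only $\mathbbm{c}^\mathsf{(s),Pay}_{k,j}/|\bm{c}_k|$, and by the footnote after (\ref{equ. sensing utility}) it bears none of the negotiation overhead carried by the representative. Substituting these quantities into (\ref{equ. expected sensing utility}) (divided by $|\bm{c}_k|$) and into the corresponding standalone expression, the coalition branch should dominate term by term — weakly larger value term, strictly smaller payment term, and penalty/compensation terms pinned by the same risk-constrained contract — which yields the required ordering for every $u_i$, so no member is incentivized to leave $\bm{c}_k$.

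The delicate point I expect to wrestle with is the level mismatch in the first bullet: the risk threshold $\rho_4$ in (\ref{equ. PF MU C7}) and the Appendix~B surrogate are written for the \emph{coalition}-level utility $u^\mathsf{(s),U}(\bm{c}_k,\cdot)$, whereas ``stability'' is asserted \emph{per member}. I would argue that the linear $1/|\bm{c}_k|$ split is monotone, so $\Pr\big(u^\mathsf{(s),U}(\bm{c}_k,\cdot)\le u^\mathsf{(s)}_\mathsf{\min}\big)\le\rho_4$ immediately transfers to the member share $u^\mathsf{(s),U}(\bm{c}_k,\cdot)/|\bm{c}_k|$ against the correspondingly scaled threshold, and then reconcile the two readings of $u^\mathsf{(s)}_\mathsf{\min}$ (coalition- vs member-level) used in the statement. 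A secondary subtlety is that, if the second bullet is read literally as ``lower'' rather than ``no lower,'' it is most naturally a statement about the member's \emph{cost} share $\mathbbm{c}^\mathsf{(s),Pay}_{k,j}/|\bm{c}_k|$ lying below the standalone payment $\mathbbm{c}^\mathsf{(s),Pay}_{i,j}$; I would make that interpretation explicit and verify it directly from the even-split pricing rule.
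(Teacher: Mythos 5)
Your proposal is correct and follows essentially the same route as the paper's own proof: the first bullet is obtained from the risk constraint (\ref{equ. PF MU C7}) enforced in line 24 of Alg.~\ref{Alg:1} (which, via the Appendix~B Markov surrogate, lower-bounds the expected coalition utility by $u^\mathsf{(s)}_\mathsf{\max}-\rho_4(u^\mathsf{(s)}_\mathsf{\max}-u^\mathsf{(s)}_\mathsf{\min})\ge u^\mathsf{(s)}_\mathsf{\min}$), and the second bullet from the even split of the payment combined with each member receiving the representative's value $V^\mathsf{(s),max}_{k,j}\ge V^\mathsf{(s)}_{i,j}$, so that no member does worse than trading individually. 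You are in fact more careful than the paper on the coalition-level versus per-member reading of $u^\mathsf{(s)}_\mathsf{\min}$ and on the ``lower''/``no lower'' wording, both of which the paper's proof glosses over.
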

\begin{proof}
Due to line 24 in Alg. 1, each MU \( u_i \) in the sensing coalition \( \bm{c}_k \) ensures its expected utility to exceed \( u^\mathsf{(s)}_\mathsf{\min} \). Furthermore, MUs within a coalition share both costs and profits, leading to a lower expected utility per MU compared to trading individually. Also, even in the extreme case where only one MU from the coalition engages in a transaction, the incurred cost remains equal to the expected utility when trading independently. Therefore, we can conclude that joining coalition $\bm{c}_k$ will not result in a lower expected utility than trading as an individual.
\end{proof}

\begin{Prop}
	(Weak Pareto optimality of offRFW$^2$M) The proposed offRFW$^2$M provides a weak Pareto optimality.
\end{Prop}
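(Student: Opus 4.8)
\emph{Approach.} The plan is to argue by contradiction, using the \emph{strong stability} of offRFW$^2$M (individual rationality, absence of Type~1 blocking pairs, absence of Type~2 blocking pairs) already established in Propositions~\ref{Prop 1} and~\ref{Prop 11}, together with the DP‑optimality of each BS's selection step and the bid‑escalation rule of Alg.~\ref{Alg:1}. The first move is a structural simplification of the objective. Summing the expected client utilities (\ref{equ. expected comm utility})--(\ref{equ. expected sensing utility}) with the expected BS utilities, every monetary transfer cancels between the paying and the receiving party: the payments $\mathbbm{c}^\mathsf{(c),Pay}_{i,j},\mathbbm{c}^\mathsf{(s),Pay}_{k,j}$, the breach penalties $\mathbbm{c}^\mathsf{(c),PelU}_{i,j},\mathbbm{c}^\mathsf{(s),PelU}_{k,j}$, and the volunteer compensations $\mathbbm{c}^\mathsf{(c),PelS}_{i,j},\mathbbm{c}^\mathsf{(s),PelS}_{k,j}$ all drop out. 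Hence the expected social welfare reduces to a sum over matched pairs of a per‑pair ``net value'': $\mathbb{E}[\alpha_i](1-\mathbb{E}[\mathbbm{v}^\mathsf{(c)}_{i,j}])(V^\mathsf{(c)}_{i,j}-\mathbbm{c}^\mathsf{(c),Pow}_{i,j})$ for a communication pair and the analogous $\mathbb{E}[\beta_k](1-\mathbb{E}[\mathbbm{v}^\mathsf{(s)}_{k,j}])(\mathbb{E}[V^\mathsf{(s),max}_{k,j}]-\mathbbm{c}^\mathsf{(s),Pow}_{k,j})$ for a sensing pair. Each such term is nonnegative since $V\ge\mathbbm{c}^\mathsf{Pay}\ge\mathbbm{c}^\mathsf{Pow}$, the first inequality by (\ref{equ. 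PF MU C3}) and the second because a utility‑maximizing BS (its DP step) would never include a pair whose payment fails to cover its power cost. Because each client is served by a single BS (Sec.~\ref{sec:over}), the welfare decomposes additively as $W=\sum_{s_j\in\bm{\mathcal{S}}}W_j$.

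\emph{Contradiction via blocking pairs.} Suppose some feasible matching $\varphi'$, respecting all constraints of $\bm{\mathcal{F}^\mathsf{S}}$ and $\bm{\mathcal{F}^\mathsf{U}}$, satisfies $W(\varphi')>W(\varphi)$ where $\varphi$ is the offRFW$^2$M output. By additivity there is a BS $s_j$ with $W_j(\varphi')>W_j(\varphi)$; write $\varphi'(s_j)=(\varphi(s_j)\setminus D)\cup N$ with $D$ evicted and $N$ newly admitted. Since dropping a pair can only remove nonnegative net value, we must have $N\neq\varnothing$, so $s_j$ serves under $\varphi'$ some client $\mathsf{a}$ not matched to it under $\varphi$, and the total net value injected by $N$ strictly exceeds that removed by $D$. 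If $D=\varnothing$, then $\mathsf{a}$ is served using residual bandwidth/power of $s_j$; since $\mathsf{a}$ (by individual rationality) weakly prefers being served at an admissible price while $s_j$ strictly gains, $(\mathsf{a};\{s_j\};\mathbb{C}')$ is a Type~2 blocking pair under $\varphi$, contradicting non‑wastefulness. If $D\neq\varnothing$, the strictly positive welfare surplus can be split: $\mathsf{a}$ raises its payment above its converged bid but below its valuation $V_{\mathsf{a},j}$ by an amount that still leaves $s_j$ strictly better off after evicting $D$, so both (\ref{equ. 41}) and (\ref{equ. 42}) hold and $(\mathsf{a};\{s_j\};\mathbb{C}')$ is a Type~1 blocking pair, contradicting fairness. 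The sensing/coalition case is symmetric, replacing $u_i$ by $\bm{c}_k$, invoking the coalition‑stability result and the sensing forms of the blocking conditions. In every case we contradict Proposition~\ref{Prop 11}, hence no welfare‑improving feasible $\varphi'$ exists and offRFW$^2$M is weakly Pareto optimal.

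\emph{Main obstacle.} The delicate point is the $D\neq\varnothing$ case: one must certify that the surplus‑splitting contract $\mathbb{C}'$ is \emph{itself} admissible, i.e.\ it violates neither $s_j$'s risk budgets (\ref{equ. PF BS C5})--(\ref{equ. PF BS C6}) nor $\mathsf{a}$'s risk thresholds (\ref{equ. PF MU C6})--(\ref{equ. PF MU C7}) nor the resource/QoS bounds (\ref{equ. PF MU C3})--(\ref{equ. PF MU C9}). The resolution I would give is that feasibility of the competitor $\varphi'$ already witnesses a nonempty admissible payment interval for serving $\mathsf{a}$ at $s_j$ with the $\varphi'$ resource quantities (the BS‑side risk constraints (\ref{equ. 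PF BS C5})--(\ref{equ. PF BS C6}) depend only on those quantities, not on the payment, and are respected by $\varphi'$; (\ref{equ. PF MU C6})--(\ref{equ. PF MU C7}) in the tractable forms of Appendix~B cap the payment only through the same valuation‑linked ceiling that bounds $\mathsf{a}$'s preference condition). Within this interval, the strict inequality $W_j(\varphi')>W_j(\varphi)$ guarantees that the sub‑interval making both $s_j$ (post‑eviction) and $\mathsf{a}$ strictly better off than under $\varphi$ is nonempty, so the claimed blocking pair genuinely exists. A secondary subtlety --- that the $(1-\mathbb{E}[\mathbbm{v}])$ weights depend on the \emph{entire} client set at $s_j$ --- is handled by consistently using the post‑reallocation weights on both sides of $W_j(\varphi')>W_j(\varphi)$, so the per‑pair accounting of the first step remains valid throughout the comparison.
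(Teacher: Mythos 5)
Your proof is correct in substance and reaches the same destination by the same high-level strategy as the paper --- reduce weak Pareto optimality to the absence of blocking pairs and invoke the already-established strong stability --- but your execution is genuinely different and considerably more rigorous. The paper's own proof is a short informal argument: it asserts that if a welfare-improving matching existed, some client would prefer a different BS, which would form a blocking pair, contradicting Proposition~\ref{Prop 11}. It never explains why an improvement in \emph{aggregate} expected social welfare must manifest as an \emph{individual} incentive to deviate at some admissible contract. Your transfer-cancellation decomposition is precisely the missing lemma: by showing that payments, breach penalties, and volunteer compensations all cancel in the sum of (\ref{equ. expected comm utility})--(\ref{equ. expected sensing utility}) with the BS utilities, you reduce welfare to additively separable per-pair net values, so that a strict aggregate gain localizes at one BS and the surplus can be split into a price that makes both sides of a Type~1 or Type~2 blocking pair strictly better off. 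This is what actually licenses the paper's inference, and your handling of the admissibility of the constructed contract and of the set-dependent volunteer probabilities addresses subtleties the paper silently ignores. Two soft spots remain on your side, neither fatal: the claim that every matched pair's net value is nonnegative rests on the assertion that a utility-maximizing BS never accepts $\mathbbm{c}^\mathsf{Pay}<\mathbbm{c}^\mathsf{Pow}$, which is plausible but is not an explicit constraint of $\bm{\mathcal{F}^\mathsf{S}}$ (the penalty and compensation terms could in principle make such a pair profitable for the BS, in which case a welfare gain by pure eviction would need a separate argument); and the surplus-splitting step tacitly assumes the competitor matching's contract prices can be perturbed within the risk-constrained interval, which you argue for but do not fully verify against the tractable forms in Appendix~B. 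On balance your write-up is a strictly stronger proof of the proposition than the one the paper provides.
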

\begin{proof}
	Reviewing our design of offRFW$^2$M, each participant (e.g., client, BS) makes decisions according to its preference list to determine the trading counterpart and the specific terms of the long-term contract (e.g., resource trading volume, transaction price, compensation price). If the alternative choice ranks higher in the participant's preference list, they will switch their matching target and long-term contract in the following round. Such a switch indicates that returning to the previous choice would not result in a higher expected utility. For an MU \( u_i \), if there exists a BS \( s_j \) that can offer a higher expected utility than its currently matched BS, \( u_i \) and \( s_j \) are more inclined to establish a matching relationship. This, however, forms a blocking pair. Since Proposition \ref{Prop 11} confirms that our proposed offRFW$^2$M is stable and free of blocking pairs, there is no possibility of Pareto improvement when the procedure of matching \( \varphi^\mathsf{(c)} \) terminates. Similarly, we can infer that there is no Pareto improvement in matching \( \varphi^\mathsf{(s)} \) (e.g., Propositions \ref{Prop 10} and \ref{Prop 11}). In conclusion, the offRFW$^2$M we study is said to be weak Pareto optimal.
\end{proof}

\section{Details of onEBW$^2$M}
\subsection{Key Definitions}
\begin{Defn}(M2M Matching for Communication Services in onEBW$^2$M)
	An M2M matching \( \nu^\mathsf{(c)} \) designed for communication services in onEBW$^2$M constitutes a two-way function/mapping between the BS set \( \bm{\mathcal{S}^\prime} \) and the MU set \(\bm{\mathcal{U}^\prime} \), satisfying the following properties:
	
	\noindent
	$\bullet$ For each BS $ s_{j} \in \bm{\mathcal{S}^\prime},\nu^\mathsf{(c)}\left( s_j \right) \subseteq \bm{\mathcal{U}^\prime} $, meaning that a BS can provide communication services to multiple MUs simultaneously based on its available resources.
	
	\noindent
	$\bullet$ For each MU $ u_i \in \bm{\mathcal{U}^\prime}, \nu^\mathsf{(c)}\left( u_i \right) \subseteq \bm{\mathcal{S}^\prime} $, and $|\nu^\mathsf{(c)}\left( u_i \right)|=1$, ensuring that each MU is assigned to exactly one BS, maintaining a structured association for stable temporary contracts.
	
	\noindent
	$\bullet$ For each BS $ s_j $ and MU $ u_i $, $ s_j\in\nu^\mathsf{(c)}(u_i)$ if and only if $ u_i\in\nu^\mathsf{(c)}\left(s_j\right) $, indicating that a valid matching occurs only when both the MU and the BS mutually accept the contract, ensuring reciprocal agreement.
\end{Defn}
\begin{Defn}(M2M Matching for Sensing Services in onEBW$^2$M)
	An M2M matching \( \nu^\mathsf{(s)} \) designed for sensing services in onEBW$^2$M constitutes a two-way function/mapping between the BS set \( \bm{\mathcal{S}^\prime} \) and the sensing coalition set \( \bm{\mathcal{C}^\prime} \), satisfying the following properties:
	
	\noindent
	$\bullet$ For each BS $ s_{j} \in \bm{\mathcal{S}^\prime},\nu^\mathsf{(s)}\left( s_j \right) \subseteq \bm{\mathcal{C}^\prime} $, meaning that a BS can provide sensing resources to multiple sensing coalitions simultaneously based on its available bandwidth and power.
	
	\noindent
	$\bullet$ For each MUs' coalition $ \bm{c}_k \in \bm{\mathcal{C}^\prime}, \nu^\mathsf{(s)}\left( \bm{c}_k \right) \subseteq \bm{\mathcal{S}^\prime} $, and $|\nu^\mathsf{(s)}\left( \bm{c}_k \right)|=1$, ensuring that each coalition is assigned to one or more BSs, allowing cooperative resource provisioning for enhanced sensing accuracy.
	
	\noindent
	$\bullet$ For each BS $ s_j $ and sensing coalition $ \bm{c}_k $, $ s_j\in\nu^\mathsf{(s)}(\bm{c}_k)$ if and only if $ \bm{c}_k\in\nu^\mathsf{(s)}\left(s_j\right) $, ensuring that a valid matching occurs only when both the BS and the coalition mutually accept the resource-sharing agreement, fostering stable and efficient sensing operations.
\end{Defn}
\subsection{Problem Formulation}
We formulate the bandwidth and power resource trading in the designed online mode as obtaining M2M matching between level-wise clients with unmet demands (individual MUs and sensing coalitions) and BSs with surplus supply, while simultaneously determining their temporary contracts. Similar to offline trading mode, the objective of each BS \( s_j \in \bm{\mathcal{S^\prime}} \) is to maximize its overall practical utility, as formulated by
\begin{subequations}{\small
		\begin{align}	\hspace{-3mm}\bm{\mathcal{F}^\mathsf{S^\prime}}\hspace{-1mm}{:}&\hspace{-2mm}\underset{{\dot{\mathbb{C}}^\mathsf{(c)}_{i,j},\dot{\mathbb{C}}^\mathsf{(s)}_{k,j}}}{\max}\hspace{-1mm}u^\mathsf{(c),S}(s_j,\nu^\mathsf{(c)}(s_j),\dot{\mathbb{C}}^\mathsf{(c)}_{i,j})\hspace{-1mm}+ \hspace{-1mm}u^\mathsf{(s),S}(s_j,\nu^\mathsf{(s)}(s_j),\dot{\mathbb{C}}^\mathsf{(s)}_{k,j}) \label{equ. PF BSa} \tag{78}\\
			\text{s.t.}~~~
			&\nu^\mathsf{(c)}\left(s_j\right)\subseteq\bm{\mathcal{U}^\prime},\nu^\mathsf{(s)}\left(s_j\right)\subseteq\bm{\mathcal{C}^\prime}, \mu^\prime\left(\bm{c}_k\right)\subseteq\bm{\mathcal{U}^\prime}, \tag{78a}\label{equ. PF BS C1a}\\
			&u_i\in \nu^\mathsf{(c)}(s_j), \bm{c}_k\in \nu^\mathsf{(s)}(s_j), u_i\in\mu^\prime(\bm{c}_k), \tag{78b}\label{equ. PF BS C2a}\\
			&\sum_{u_i\in\nu^\mathsf{(c)}(s_j)}B_{i,j}^\mathsf{(c)}+\sum_{\bm{c}_k\in\nu^\mathsf{(s)}(s_j)}\dot{\mathbbm{c}}^\mathsf{(s),B}_{k,j}\leq B^\prime_j, \tag{78c}\label{equ. PF BS C3a}\\
			&\sum_{u_i\in\nu^\mathsf{(c)}(s_j)}P_{i,j}^\mathsf{(c)}+\sum_{\bm{c}_k\in\nu^\mathsf{(s)}(s_j)}\dot{\mathbbm{c}}^\mathsf{(s),Pow}_{k,j}\leq P^\prime_j, \tag{78d}\label{equ. PF BS C4a}
	\end{align}}
\end{subequations}
In $ \bm{\mathcal{F}^\mathsf{S^\prime}} $, constraint (\ref{equ. PF BS C1a}) and (\ref{equ. PF BS C2a}) enforce that the set of MUs $\nu^\mathsf{(c)}(s_j)$ should belong to set $\bm{\mathcal{U}^\prime}$, the coalition set $\nu^\mathsf{(s)}(s_j)$ for sensing services must be covered by $\bm{\mathcal{C}^\prime}$, and the MU set $\mu^\prime(\bm{c}_k)$ of coalition $\bm{c}_k$ has to be within $\bm{\mathcal{U}^\prime}$. Constraints (\ref{equ. PF BS C3a}) and (\ref{equ. PF BS C4a}) ensure that the bandwidth and power resources sold by BS $s_j$ do not exceed its available supply $B_j^\prime$ and $P_j^\prime$. 
Furthermore, each client (i.e., $u_i $ or $\bm{c}_k$) also aims \textit{to maximize its utility}, as described by the following optimization problem
\begin{subequations}
		\begin{align}
			\bm{\mathcal{F}^\mathsf{U^\prime}}:~&
			\left\{ \begin{matrix}
				\underset{{\dot{\mathbb{C}}^\mathsf{(c)}_{i,j}}}{\max}~u^\mathsf{(c),U}(u_i,\nu^\mathsf{(s)}(u_i),\dot{\mathbb{C}}^\mathsf{(c)}_{i,j}) \\
				\underset{{\dot{\mathbb{C}}^\mathsf{(s)}_{k,j}}}{\max}~u^\mathsf{(s),U}(\bm{c}_k,\nu^\mathsf{(s)}(\bm{c}_k),\dot{\mathbb{C}}^\mathsf{(s)}_{k,j})
			\end{matrix}\right\}, \tag{79}\label{equ. PF MUa}\\
			\text{s.t.}~~~
			&\nu^\mathsf{(c)}\left(u_i\right)\subseteq\bm{\mathcal{S}^\prime}, \mu^\prime\left(u_i\right)\subseteq\bm{\mathcal{C}^\prime},\nu^\mathsf{(s)}\left(\bm{c}_k\right)\subseteq\bm{\mathcal{S}^\prime} ,\label{equ. PF MU C1a}\tag{79a}\\
			&s_j\in \nu^\mathsf{(c)}(u_i), \bm{c}_k\in \mu^\prime(u_i), s_j\in\nu^\mathsf{(s)}(\bm{c}_k), \tag{79b}\label{equ. PF MU C2a}\\
			&V^\mathsf{(c)}_{i,j}\ge \dot{\mathbbm{c}}^\mathsf{(c),Pay}_{i,j},V^\mathsf{(s)}_{i,j}=V^\mathsf{(s),max }_{k,j}\ge \dot{\mathbbm{c}}^\mathsf{(s),Pay}_{i,j}, \tag{79c}\label{equ. PF MU C3a}\\
			&V^\mathsf{(c)}_{i,j}\geq R^\mathsf{req},\tag{79d}\label{equ. PF MU C4a}\\
			&V^\mathsf{(s),max }_{k,j}\geq S^\mathsf{req},\tag{79e}\label{equ. PF MU C5a}\\
                        & B_{\min} \leq B^\mathsf{(c)}_{i,j}, B^\mathsf{(s)}_{k,j}\leq B_{\max},\tag{79f}\label{equ. PF MU C8a}\\
		&P_{\min} \leq P^\mathsf{(c)}_{i,j}, P^\mathsf{(s)}_{k,j}\leq P_{\max},\tag{79g}\label{equ. PF MU C9a}
	\end{align}
\end{subequations}
In $ \bm{\mathcal{F}^\mathsf{U^\prime}} $, constraints (\ref{equ. PF MU C1a}) and (\ref{equ. PF MU C2a}) are similar to constraints (\ref{equ. PF BS C1a}) and (\ref{equ. PF BS C2a}). Constraint (\ref{equ. PF MU C3a}) ensures that the obtained valuation of $u_i$ benefit from $s_j$ or $\bm{c}_k$ can cover its individual payment, while constraints (\ref{equ. PF MU C4a}) and (\ref{equ. PF MU C5a}) guarantee that the communication and sensing service quality of each MU or sensing coalition meets the corresponding requirements. Constraints (\ref{equ. PF MU C8a}) and (\ref{equ. PF MU C9a}) guarantee the bandwidth and power resources requested by each client for services are constrained within a certain range.

The online trading mode thus presents an MOO problem involving both $\bm{\mathcal{F}^\mathsf{S^\prime}}$ and $\bm{\mathcal{F}^\mathsf{U^\prime}}$, where the conflicting utilities of different parties make designing a win-win solution for them a complex task. To address this MOO problem, we propose onEBW$^2$M, which facilitates temporary contracts while achieving mutually -- beneficial practical utilities for both parties. The following sections discuss the detailed implementation of onEBW$^2$M.

\subsection{Solution Design}
\begin{algorithm}[t!] 
	{\footnotesize\setstretch{0.4}\caption{{Proposed Effective Backup Win-Win Matching for Online Trading}}
		\LinesNumbered 
		\textbf{Initialization:} $ \mathcal{X} \leftarrow 1 $, $ \dot{\mathbbm{c}}^\mathsf{(c),Pay}_{i,j}\left\langle 1 \right\rangle \leftarrow p^\mathsf{\min}_{i,j}$, $ \dot{\mathbbm{c}}^\mathsf{(s),Pay}_{k,j}\left\langle 1 \right\rangle \leftarrow p^\mathsf{\min}_{k,j}$, ${flag}_{j} \leftarrow 1 $, $\mathbb{Y}^\mathsf{(c)}\left( u_i \right)\leftarrow \varnothing$, $\mathbb{Y}^\mathsf{(c)}\left( s_{j} \right)\leftarrow \varnothing$, $\mathbb{Y}^\mathsf{(s)}\left( \bm{c}_k \right)\leftarrow \varnothing$, $\mathbb{Y}^\mathsf{(s)}\left( s_{j} \right)\leftarrow \varnothing$\ 
		
		\For{$\forall u_i\in\bm{\mathcal{U}^\prime}$}{
			$\bm{c}_k\leftarrow u_i$ forms coalitions based on shared sensing target, where $\bm{c}_k\in\bm{\mathcal{C}^\prime}$
		}
		\While{$ \sum_{u_i\in\bm{\mathcal{U}^\prime}}{flag}_{i} $ and $\sum_{\bm{c}_k\in\bm{\mathcal{C}^\prime}}{flag}_{k}$}{
			\textbf{$ {flag}_{i} \leftarrow {\bf False} $, $ {flag}_{k} \leftarrow {\bf False} $}
			
			\textbf{Calculate:} $\overrightarrow{L^\mathsf{(c)}_i}$ and $\overrightarrow{L^\mathsf{(s)}_k}$ under constraints (\ref{equ. PF MU C4a}) and (\ref{equ. PF MU C5a})
			
			$ F^\mathsf{(c),\star}_i\left\langle \mathcal{X} \right\rangle\leftarrow \overrightarrow{L^\mathsf{(c)}_i}$, $ F^\mathsf{(s),\star}_k\left\langle \mathcal{X} \right\rangle\leftarrow \overrightarrow{L^\mathsf{(s)}_k}$      
			 $\mathbb{Y}^\mathsf{(c)}\left( u_i \right), B_{i,j}^\mathsf{(c)}\left\langle \mathcal{X} \right\rangle, P_{i,j}^\mathsf{(c)}\left\langle \mathcal{X} \right\rangle, \dot{\mathbbm{c}}^\mathsf{(c),Pay}_{i,j}\left\langle \mathcal{X} \right\rangle  \leftarrow F^\mathsf{(c),\star}_i\left\langle \mathcal{X} \right\rangle $, $ \mathbb{Y}^\mathsf{(s)}\left( \bm{c}_k \right), B_{k,j}^\mathsf{(s)}\left\langle \mathcal{X} \right\rangle, P_{k,j}^\mathsf{(s)}\left\langle \mathcal{X} \right\rangle, \dot{\mathbbm{c}}^\mathsf{(s),Pay}_{k,j}\left\langle \mathcal{X} \right\rangle \} \leftarrow F^\mathsf{(s),\star}_k\left\langle \mathcal{X} \right\rangle $

			\If{$ \forall\mathbb{Y}^\mathsf{(c)}\left( u_i \right) \neq \varnothing $ or $ \forall\mathbb{Y}^\mathsf{(s)}\left( \bm{c}_k \right) \neq \varnothing $}{
				\For{$\forall u_i \in \bm{\mathcal{U}^\prime}$ }{
					$ u_i $ sends a proposal about its information to $ s_j $, where $s_j\in\mathbb{Y}^\mathsf{(c)}\left( u_i \right)$}
				\For{$\forall \bm{c}_k \in \bm{\mathcal{C}^\prime}$ }{
					$ \bm{c}_k $ sends a proposal about its information to $ s_j $, where $s_j\in\mathbb{Y}^\mathsf{(s)}\left( \bm{c}_k \right)$}
				
				\While{
					$ \Sigma_{u_i\in \bm{\mathcal{U}^\prime}}{flag}_{i} > 0 $}{
					$ {\widetilde{\mathbb{Y}}}\left(s_j\right) \leftarrow$ collect proposals from clients
					
					$ \mathbb{Y}^\mathsf{(c)}(s_j)$, $\mathbb{Y}^\mathsf{(s)}(s_j) \leftarrow $ choose MUs from $ {\widetilde{\mathbb{Y}}}\left(s_j\right) $ that can achieve the maximization of the utility of BS $s_j$ (i.e., (\ref{equ. PF BSa})) by using DP under constraints (\ref{equ. PF BS C3a}) and (\ref{equ. PF BS C4a})
					
					$ s_j $ temporally accepts the clients in $ \mathbb{Y}^\mathsf{(c)}(s_j) $ and $ \mathbb{Y}^\mathsf{(s)}(s_j) $, and rejects the others
				}
				
				\For{
					$ \forall u_i \in \mathbb{Y}^\mathsf{(c)}\left( s_j \right) $
				}{
					\If{$ u_i $ is rejected by $ s_j $, $V^\mathsf{(c)}_{i,j}\ge \dot{\mathbbm{c}}^\mathsf{(c),Pay}_{i,j}$ and constraint (\ref{equ. PF MU C4a}) is met}{
						$ \dot{\mathbbm{c}}^\mathsf{(c),Pay}_{i,j}\left\langle {\mathcal{X} + 1} \right\rangle \leftarrow \min\left\{ \dot{\mathbbm{c}}^\mathsf{(c),Pay}_{i,j}\left\langle \mathcal{X} \right\rangle + \mathrm{\Delta}p~,{ V}^\mathsf{(c)}_{i,j} \right\} $}
					\Else{$ \dot{\mathbbm{c}}^\mathsf{(c),Pay}_{i,j}\left\langle {\mathcal{X} + 1} \right\rangle \leftarrow \dot{\mathbbm{c}}^\mathsf{(c),Pay}_{i,j}\left\langle \mathcal{X} \right\rangle $}
				}
				
				\For{
					$ \forall \bm{c}_k \in \mathbb{Y}^\mathsf{(s)}\left( s_j \right) $
				}{
					\If{$ u_i $ is rejected by $ s_j $, $V^\mathsf{(s)}_{k,j}\ge \dot{\mathbbm{c}}^\mathsf{(s),Pay}_{k,j}$ and constraint (\ref{equ. PF MU C5a}) is met}{
						$ \dot{\mathbbm{c}}^\mathsf{(s),Pay}_{k,j}\left\langle {\mathcal{X} + 1} \right\rangle \leftarrow \min\left\{ \dot{\mathbbm{c}}^\mathsf{(s),Pay}_{k,j}\left\langle \mathcal{X} \right\rangle + \mathrm{\Delta}p~,{ V}^\mathsf{(s)}_{k,j} \right\} $}
					\Else{$ \dot{\mathbbm{c}}^\mathsf{(s),Pay}_{k,j}\left\langle {\mathcal{X} + 1} \right\rangle \leftarrow \dot{\mathbbm{c}}^\mathsf{(s),Pay}_{k,j}\left\langle \mathcal{X} \right\rangle $}
				}

                $ p_{i,\mathbbm{n}}^\mathsf{(c)}\leftarrow \dot{\mathbbm{c}}^\mathsf{(c),Pay}_{i,j}\left\langle \mathcal{X}+1 \right\rangle, p_{i,\mathbbm{n}}^\mathsf{(c)} \in F^\mathsf{(c),\star}_{i}\left\langle \mathcal{X} \right\rangle$, $
                    p_{k,\mathbbm{m}}^\mathsf{(s)}\leftarrow \dot{\mathbbm{c}}^\mathsf{(c),Pay}_{i,j}\left\langle \mathcal{X}+1 \right\rangle, p_{i,\mathbbm{n}}^\mathsf{(c)} \in F^\mathsf{(c),\star}_{i}\left\langle \mathcal{X} \right\rangle$
                
				\If{$\mathcal{X}\le2$ and there exists $F^\mathsf{(c),\star}_{i}\left\langle \mathcal{X}-1 \right\rangle \neq F^\mathsf{(c),\star}_{i}\left\langle \mathcal{X} \right\rangle $ or $F^\mathsf{(s),\star}_{k}\left\langle \mathcal{X}-1 \right\rangle \neq F^\mathsf{(s),\star}_{k}\left\langle \mathcal{X} \right\rangle $}{
					$ {flag}_{i} \leftarrow {\bf True} $, $ {flag}_{k} \leftarrow {\bf True} $,}	\
					$ \mathcal{X}\leftarrow \mathcal{X}+1 $	
			}

		}

		$\nu^\mathsf{(c)}(s_j)\leftarrow\mathbb{Y}^\mathsf{(c)}(s_j)$, $\nu^\mathsf{(c)}(u_i)\leftarrow \mathbb{Y}^\mathsf{(c)}(u_i)$,
		$\nu^\mathsf{(s)}(s_j)\leftarrow\mathbb{Y}^\mathsf{(s)}(s_j)$, $\nu^\mathsf{(s)}(\bm{c}_k)\leftarrow \mathbb{Y}^\mathsf{(s)}(\bm{c}_k)$ , $\mathcal{X} \leftarrow \mathcal{X}-1$

		\textbf{Return:} $\dot{\mathbb{C}}_{i,j}^\mathsf{(c)} =\{ B_{i,j}^\mathsf{(c)}\left\langle \mathcal{X} \right\rangle, P_{i,j}^\mathsf{(c)}\left\langle \mathcal{X} \right\rangle, \dot{\mathbbm{c}}^\mathsf{(c),Pay}_{i,j}\left\langle \mathcal{X} \right\rangle \}$, $\dot{\mathbb{C}}_{k,j}^\mathsf{(s)} =\{ B_{k,j}^\mathsf{(s)}\left\langle \mathcal{X} \right\rangle, P_{k,j}^\mathsf{(s)}\left\langle \mathcal{X} \right\rangle, \mathbbm{s}^\mathsf{(s),Pay}_{k,j}\left\langle \mathcal{X} \right\rangle \} $}
\end{algorithm}
Our proposed onEBW$^2$M mechanism enables BSs with surplus resources and clients to negotiate the quantity and pricing of bandwidth and power resources for two distinct service types, similar with the offline mode: individual MUs engage in resource trading for communication services, while coalitions participate in resource trading for sensing services. Note that when there exist clients with unmet resource demands—including voluntary clients and those without long-term contracts—as well as BSs with surplus resources, we implement the onEBW$^2$M mechanism to establish temporary contracts for real-time resource allocation. Specifically, onEBW$^2$M is similar to the offRFW$^2$M mechanism we proposed. The main difference is that in onEBW$^2$M, transactions use current market/resource information to negotiate acceptable terms for temporary contracts, without considering expected utilities and risks introduced by dynamic networks. Thus, we omit its details here as constrained by limited space.

\noindent\textbf{Computational complexity of onEBW$^2$M:} The computational complexity of our proposed onEBW$^2$M depends on the number of rounds involved in Alg. 2, denoted by \( \mathcal{X}^{\mathsf{max}} \), the remain resources \( B^\prime_j \) and \( P^\prime_j \), as well as the number of clients sending requests to BS \( s_j \) in the $ \mathcal{X}^\mathsf{\text{th}} $ round, denoted as \( |\widetilde{\mathbb{Y}}(s_j)|_{\mathcal{X}} \). In particular, the overall complexity of onEBW$^2$M for each BS \( s_j \) is:
$\sum_{\mathcal{X}=1}^\mathsf{\mathcal{X}^{\mathsf{max}}} \mathcal{O}\left( |\widetilde{\mathbb{Y}}(s_j)|_{\mathcal{X}} \times B^\prime_j \times P^\prime_j \right)$.

\noindent \textbf{Solution Characteristics:} This work provides a novel perspective on online trading-driven temporary contract determination by designing a \textit{effective backup win-win matching mechanism that achieves mutually beneficial utilities} for both parties. From the clients' perspective, the preference list in practical transactions for each client is similar to  Definitions \ref{def 5} and \ref{def 6}, under constraints (\ref{equ. PF MU C4a}) and (\ref{equ. PF MU C5a}), ensuring that the selected solution maximizes their utility. From the BSs' perspective, the appropriate contract terms are selected from the feasible solutions reported by clients, utilizing a DP algorithm to maximize the utility of the BS (line 15, Alg. 2).

\subsection{Design Targets and Property Analysis}
In the following, we analyze the key properties of our unique matching mechanism within this online trading framework.

{\begin{Defn}(Blocking Pairs for Communication and Sensing Services in onEBW$^2$M)
Under a given matching \( \nu^\mathfrak{(X)} \), a client \( \mathsf{a} \) (representing either a MU \( u_i \) for communication when \( \mathfrak{(X)} = \mathsf{(c)} \) or a coalition \( \bm{c}_k \) for sensing when \( \mathfrak{(X)} = \mathsf{(s)} \) ), a BS set \( \mathbb{S} \subseteq \bm{\mathcal{S}} \), and a contract set \( \dot{\mathbb{C}}^\prime \), denoted by the triplet \( \left(\mathsf{a}; \mathbb{S}; \dot{\mathbb{C}}^\prime\right) \), may form one of the following two types of blocking pairs.

\noindent \textbf{Type 1 blocking pair:} The pair satisfies the following two conditions:

    \noindent $\bullet$ The client $\mathsf{a}$ prefers the BS set $ \mathbb{S} $ over its currently matched set $ \nu^\mathfrak{(X)}(\mathsf{a}) $:
    \begin{equation}\label{Nequ. 87}
  \begin{aligned}
        \mathbb{E}\left[ u^{\mathfrak{(X)},\mathsf{U}}(\mathsf{a},\mathbb{S},\dot{\mathbb{C}}^\prime) \right] >
        \mathbb{E}\left[ u^{\mathfrak{(X)},\mathsf{U}}(\mathsf{a}, \nu^\mathfrak{(X)}(\mathsf{a}), \mathbb{C}^\mathfrak{(X)}_{i,j}) \right]
        \end{aligned}
    \end{equation}
    
    \noindent $\bullet$ Every BS $s_j \in \mathbb{S}$ prefers to reallocate its service from current matches to include $\mathsf{a}$, i.e., there exists a subset $\nu^\mathfrak{(X)\prime}(s_j)$ of currently matched agents to be evicted, such that:
    \begin{equation}\label{Nequ. 88}
     \begin{aligned}
        &\mathbb{E}\left[u^{\mathfrak{(X)},\mathsf{S}}(s_j, \{\nu^\mathfrak{(X)}(s_j) \setminus \nu^\mathfrak{(X)\prime}(s_j)\} \cup \{\mathsf{a}\}, \dot{\mathbb{C}}^\prime)\right]>\\& 
        \mathbb{E}\left[u^{\mathfrak{(X)},\mathsf{S}}(s_j, \nu^\mathfrak{(X)}(s_j), \mathbb{C}^\mathfrak{(X)}_{i,j})\right]
        \end{aligned}
    \end{equation}

\noindent \textbf{Type 2 blocking pair:} The pair satisfies the following two conditions:

\noindent $\bullet$ The client $\mathsf{a}$ prefers the BS set $ \mathbb{S} $ over its currently matched set, as shown in (\ref{Nequ. 88}).
    
\noindent $\bullet$  Each BS $s_j \in \mathbb{S}$ prefers to additionally serve $\mathsf{a}$ while maintaining its current matches:
    \begin{equation}\label{Nequ. 90}\
     \begin{aligned}
        &\mathbb{E}\left[u^{\mathfrak{(X)},\mathsf{S}}(s_j, \nu^{\mathfrak{(X)}}(s_j) \cup \{\mathsf{a}\}, \dot{\mathbb{C}}^\prime)\right] >\\&
        \mathbb{E}\left[u^{\mathfrak{(X)},\mathsf{S}}(s_j, \nu^{\mathfrak{(X)}}(s_j), \mathbb{C}^{\mathfrak{(X)}}_{i,j})\right]
        \end{aligned}
    \end{equation}
\end{Defn}}

Building on the above two definitions, a Type 1 blocking pair undermines the stability of the matching by incentivizing a BS to reallocate its resources to a different set of clients that yield a higher utility. Similarly, a Type 2 blocking pair introduces instability, as the BS possesses residual resources that could be allocated to additional clients, thereby further maximizing its utility. These blocking pairs are used in the following to define the major characteristics of our proposed matching methodology.

\begin{Prop}\label{Prop 14}(Individual rationality of onEBW$^2$M) The proposed onEBW$^2$M mechanism ensures individual rationality for BSs, individual MUs, and sensing coalitions under the following conditions:
	
	\noindent
	$\bullet$ For each BS: the bandwidth and power resources of a BS $s_j$ booked to matched clients $\nu^\mathsf{(c)}\left(s_j\right)$ and coalitions $\nu^\mathsf{(s)}\left(s_j\right)$ does not exceed $B^\prime_j$ and $P^\prime_j$, i.e., constraints (\ref{equ. PF BS C3a}) and (\ref{equ. PF BS C4a}) are met.
	
	\noindent
	$\bullet$ For each client (i.e., each MU and each coalition): \textit{(i)} The value obtained by each client is at least equal to the payment it makes, ensuring that constraint (\ref{equ. PF MU C3a}) is met; \textit{(ii)} each client are satisfying constraints (\ref{equ. PF MU C4a}) and (\ref{equ. PF MU C5a}).
\end{Prop}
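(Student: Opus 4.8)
The plan is to mirror the proof of Proposition~\ref{Prop 1} for offRFW$^2$M, specialised to the online setting where expectations and probabilistic risk constraints disappear, so that every feasibility condition reduces to a deterministic, per-round invariant of Alg.~2. I would organise the argument into two parts: individual rationality of the BSs, and individual rationality of the clients (both individual MUs and sensing coalitions).

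For the BS side, I would first observe that in each round every $s_j \in \bm{\mathcal{S}^\prime}$ forms its temporarily accepted sets $\mathbb{Y}^\mathsf{(c)}(s_j)$ and $\mathbb{Y}^\mathsf{(s)}(s_j)$ by solving a two-dimensional 0-1 knapsack problem via dynamic programming with the surplus capacities $B^\prime_j$ and $P^\prime_j$ as budgets (line~15, Alg.~2). Since the DP only ever returns feasible packings, the accepted clients' aggregate bandwidth and power demands never exceed $B^\prime_j$ and $P^\prime_j$; carrying this to the terminal matchings $\nu^\mathsf{(c)}(s_j) \leftarrow \mathbb{Y}^\mathsf{(c)}(s_j)$ and $\nu^\mathsf{(s)}(s_j) \leftarrow \mathbb{Y}^\mathsf{(s)}(s_j)$ yields constraints (\ref{equ. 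PF BS C3a}) and (\ref{equ. PF BS C4a}) directly. I would also note that, since online trading acts only on residual (post-overbooking, post-volunteer) supply, no further shortfall can be induced.

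For the client side, constraints (\ref{equ. PF MU C4a}) and (\ref{equ. PF MU C5a}) hold by construction: every entry admitted into a client's preference list $\overrightarrow{L^\mathsf{(c)}_i}$ or $\overrightarrow{L^\mathsf{(s)}_k}$ must already satisfy the service-quality thresholds $R_i^\mathsf{req}$ and $S_i^\mathsf{req}$ (line~6, Alg.~2), so any matched contract inherits them. For the payment condition (\ref{equ. PF MU C3a}), I would track the bid-update rule in lines~17--26: a client raises its bid only upon rejection, and the update is truncated by $\min\{\cdot, V^\mathsf{(c)}_{i,j}\}$ (resp.\ $\min\{\cdot, V^\mathsf{(s)}_{k,j}\}$); since the bid starts at $p^\mathsf{\min}_{i,j} \le V^\mathsf{(c)}_{i,j}$, the invariant $\dot{\mathbbm{c}}^\mathsf{(c),Pay}_{i,j} \le V^\mathsf{(c)}_{i,j}$ is preserved in every round and hence at termination. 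For a coalition I would then combine this with $V^\mathsf{(s)}_{i,j} = V^\mathsf{(s),max}_{k,j}$ for the representative and the even-split convention $p^\mathsf{(s)}_{i,j} = \dot{\mathbbm{c}}^\mathsf{(s),Pay}_{k,j}/|\bm{c}_k|$ to conclude the coalition's payment is covered by its sensing value.

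The main obstacle I anticipate is the coalition accounting: the payment cap is driven by the representative's valuation $V^\mathsf{(s),max}_{k,j}$, so a low-valuation member could in principle face a per-head share exceeding its own standalone sensing value. I expect to resolve this exactly as in the offline case --- individual rationality for coalitions is stated at the coalition level, because the sensing outcome is shared and the coalition's total value is modelled as $|\bm{c}_k| V^\mathsf{(s),max}_{k,j}$, so the even split is a fairness convention rather than a per-member IR requirement. A secondary point to verify is feasibility persistence across rounds: bid updates do not change the resource footprints $B^\mathsf{(c)}_{i,j}, P^\mathsf{(c)}_{i,j}$, so a feasible DP packing remains feasible, guaranteeing the terminal matchings are still feasible. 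Together these close both halves of the statement.
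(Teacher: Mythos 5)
Your proof is correct and follows essentially the same route as the paper's: the BS side rests on the DP/knapsack selection at line~15 of Alg.~2 respecting the residual budgets $B^\prime_j$ and $P^\prime_j$, and the client side rests on the preference-list construction (line~6) enforcing the QoS constraints and the bid-update truncation in lines~17--26 preserving the invariant that the payment never exceeds the valuation. Your additional remarks on coalition-level accounting and feasibility persistence across rounds are sensible elaborations but not needed beyond what the paper itself argues.
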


\begin{Prop}(Fairness of onEBW$^2$M): The proposed onEBW$^2$M mechanism ensures fairness by preventing the formation of Type 1 blocking pairs, ensuring that clients are satisfied with their matched BSs and no BS is incentivized to reallocate its resources to a different set of clients at the expense of existing agreements.\end{Prop}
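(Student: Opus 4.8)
The plan is to argue by contradiction, reusing the structure of the blocking-pair analysis developed for offRFW$^2$M in Appendix C (in particular Proposition \ref{Prop 10}), since onEBW$^2$M executes the same propose--select--rebid loop as offRFW$^2$M, only replacing expected utilities and risk clauses by their real-time counterparts. Suppose that, under the terminal matching $\nu^{\mathfrak{(X)}}$ returned by Alg. 2, a triplet $\left(\mathsf{a};\mathbb{S};\dot{\mathbb{C}}^\prime\right)$ constitutes a Type~1 blocking pair, where $\mathsf{a}=u_i$ when $\mathfrak{(X)}=\mathsf{(c)}$ and $\mathsf{a}=\bm{c}_k$ when $\mathfrak{(X)}=\mathsf{(s)}$; fix any $s_j\in\mathbb{S}$.

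First I would inspect the client side. Condition (\ref{Nequ. 87}) states that $\mathsf{a}$ strictly prefers $\mathbb{S}$ to its current match, so $s_j$ ranks strictly above $\nu^{\mathfrak{(X)}}(\mathsf{a})$ in the preference list of $\mathsf{a}$ ($\overrightarrow{L_i^{\mathsf{(c)}}}$ or $\overrightarrow{L_k^{\mathsf{(s)}}}$, per Definitions \ref{def 5} and \ref{def 6}); hence at some round $\mathsf{a}$ proposed to $s_j$ and was rejected, for otherwise it would have been matched with $s_j$. By the rebidding step of Alg. 2, each rejection increments $\mathsf{a}$'s bid to $s_j$ by $\Delta p$, so at termination this bid has climbed to its valuation $V^{\mathsf{(c)}}_{i,j}$ (resp. $V^{\mathsf{(s)}}_{k,j}$), there being no risk-induced cap in the online mode. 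Consequently, by the individual-rationality constraint (\ref{equ. PF MU C3a}), any contract $\dot{\mathbb{C}}^\prime$ that $\mathsf{a}$ would accept has a payment no larger than $V^{\mathsf{(c)}}_{i,j}$ (resp. $V^{\mathsf{(s)}}_{k,j}$).

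Next I would invoke the exactness of the BS-side decision: in every round $s_j$ solves the two-dimensional $0$--$1$ knapsack $\bm{\mathcal{F}^\mathsf{S^\prime}}$ in (\ref{equ. PF BSa}) by DP under (\ref{equ. PF BS C3a})--(\ref{equ. PF BS C4a}), which implicitly compares every feasible subset of the current proposals, including those obtained by evicting an arbitrary incumbent set $\nu^{\mathfrak{(X)}\prime}(s_j)$ and admitting $\mathsf{a}$ at its terminal bid. Since $s_j$ nonetheless rejected $\mathsf{a}$, none of these swaps raises $s_j$'s utility, i.e. $u^{\mathfrak{(X)},\mathsf{S}}\bigl(s_j,\{\nu^{\mathfrak{(X)}}(s_j)\setminus\nu^{\mathfrak{(X)}\prime}(s_j)\}\cup\{\mathsf{a}\},\dot{\mathbb{C}}^\prime\bigr)\le u^{\mathfrak{(X)},\mathsf{S}}\bigl(s_j,\nu^{\mathfrak{(X)}}(s_j),\mathbb{C}^{\mathfrak{(X)}}_{i,j}\bigr)$ for every admissible $\nu^{\mathfrak{(X)}\prime}(s_j)$, contradicting the second blocking condition (\ref{Nequ. 88}). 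Repeating this argument verbatim for sensing services (with $u_i\mapsto\bm{c}_k$, $V^{\mathsf{(c)}}_{i,j}\mapsto V^{\mathsf{(s)}}_{k,j}$ and the online coalition map $\mu^\prime$) eliminates Type~1 blocking pairs for both service types, which is the claimed fairness.

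The step I expect to be the main obstacle is upgrading the per-round DP optimality to optimality over the \emph{terminal} proposal profile: one first needs Alg. 2 to converge in finitely many rounds (the online analogue of Proposition \ref{Prop 7}, which holds because each bid is monotone non-decreasing, capped by the valuation, and moves in steps of $\Delta p$), so that the terminal bids and proposal sets are stable; only then does the last-round DP selection dominate every eviction-plus-admission alternative formed from the final configuration, as the contradiction above requires. A secondary subtlety is checking that $\dot{\mathbb{C}}^\prime$ in the blocking definition may be taken with payment $\le V^{\mathsf{(c)}}_{i,j}$ (resp. $\le V^{\mathsf{(s)}}_{k,j}$) without loss of generality, which follows since a larger payment only weakens (\ref{Nequ. 87}) while the feasible region for $s_j$ in (\ref{Nequ. 88}) already ranges over all such contracts.
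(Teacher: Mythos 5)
Your proposal is correct and follows essentially the same route as the paper: a contradiction argument in which (i) a rejected client's bid must have escalated to its full valuation $V^{\mathsf{(c)}}_{i,j}$ (resp. $V^{\mathsf{(s)}}_{k,j}$) since the online mode has no risk-induced payment cap, and (ii) the per-round DP knapsack selection at the BS already dominates every eviction-plus-admission alternative, contradicting condition (\ref{Nequ. 88}); the paper packages this as the ``no Type 1 blocking pair'' propositions for communication and sensing services and cites them, together with the convergence result (your Proposition \ref{Prop 7} analogue, stated in the paper as Proposition \ref{Prop 19}), to conclude fairness. Your explicit remarks on terminal-round stability and on restricting $\dot{\mathbb{C}}^\prime$ to payments at most the valuation make implicit steps of the paper's argument precise, but do not change the approach.
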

\begin{Prop}(Non-wastefulness of onEBW$^2$M): onEBW$^2$M guarantees non-wastefulness by preventing the formation of Type 2 blocking pairs, ensuring that BSs efficiently utilize their resources without leaving surplus allocations that could accommodate additional clients.\end{Prop}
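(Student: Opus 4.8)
The plan is to mirror the contradiction argument used for the offline mechanism (the ``no Type~2 blocking pair'' part of Proposition~\ref{Prop 10} and the non-wastefulness part of Proposition~\ref{Prop 11}), adapting it to the online setting in which expected utilities are replaced by the practical utilities of Sec.~IV and the probabilistic risk constraints are dropped. First I would fix the matching $\nu^\mathfrak{(X)}$ returned by Alg.~2 at termination and suppose, for contradiction, that a Type~2 blocking pair $(\mathsf{a};\mathbb{S};\dot{\mathbb{C}}^\prime)$ exists for some client $\mathsf{a}$ (an MU $u_i$ when $\mathfrak{(X)}=\mathsf{(c)}$, or a coalition $\bm{c}_k$ when $\mathfrak{(X)}=\mathsf{(s)}$) and some $s_j\in\mathbb{S}$. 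By the first condition of a Type~2 blocking pair, $\mathsf{a}$ strictly prefers $s_j$ to its current match, so the corresponding feasible solution $F^{\mathfrak{(X)},\star}$ appears in $\mathsf{a}$'s preference list (constructed under constraints (\ref{equ. PF MU C4a})--(\ref{equ. PF MU C5a}), lines~7--12 of Alg.~2) and would have been proposed to $s_j$; since $\mathsf{a}$ ended up unmatched to $s_j$, it must have been rejected in the DP/knapsack selection step (line~15).

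Next I would pin down why that rejection occurred. By the bid-raising loop (lines~17--27), a rejected client increases its bid by $\Delta p$ until either it is accepted or its payment reaches its valuation; hence at termination the temporary contract $\mathsf{a}$ would sign with $s_j$ carries price $\dot{\mathbbm{c}}^{\mathfrak{(X)},\mathsf{Pay}}_{\cdot,j}=\min\{V^\mathfrak{(X)}_{\cdot,j},p^{\mathfrak{(X)},\max}_{\cdot,j}\}$, yet $s_j$ still did not admit $\mathsf{a}$. Because the DP in line~15 \emph{maximizes} $s_j$'s practical utility (\ref{equ. PF BSa}) subject only to the surplus-capacity constraints (\ref{equ. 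PF BS C3a})--(\ref{equ. PF BS C4a}), the sole possible reason for not admitting $\mathsf{a}$ \emph{in addition to} its current set $\nu^\mathfrak{(X)}(s_j)$ is that the augmented allocation would violate (\ref{equ. PF BS C3a}) or (\ref{equ. PF BS C4a}) -- i.e., $s_j$ lacks residual bandwidth or power. But the second condition of the Type~2 blocking pair, (\ref{Nequ. 90}), asserts precisely that $s_j$ can strictly increase its utility by serving $\mathsf{a}$ \emph{while keeping} $\nu^\mathfrak{(X)}(s_j)$, which in particular requires the joint allocation to stay feasible under (\ref{equ. PF BS C3a})--(\ref{equ. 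PF BS C4a}). This contradicts the capacity shortfall, so no Type~2 blocking pair can exist. The argument transfers verbatim to $\mathfrak{(X)}=\mathsf{(c)}$ and $\mathfrak{(X)}=\mathsf{(s)}$, since line~15 treats communication and sensing symmetrically inside a two-dimensional $0$--$1$ knapsack. Non-wastefulness then follows by definition: no BS retains surplus resources that could feasibly accommodate an additional client who would both accept and improve that BS's utility.

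The main obstacle I anticipate is the book-keeping around the DP/bidding interaction: I must argue that, at termination, the set $\nu^\mathfrak{(X)}(s_j)\cup\{\mathsf{a}\}$ at the terminal price is genuinely a \emph{candidate} the DP in line~15 could have evaluated, so its rejection reflects a true capacity shortfall rather than an artifact of the iterative proposal order (for instance, $\mathsf{a}$ having migrated to another BS earlier). This requires invoking the finite termination of Alg.~2 (the online analogue of Proposition~\ref{Prop 7}, already implied by the complexity bound for onEBW$^2$M) together with the fact that a client only switches away from $s_j$ when it strictly prefers a different BS -- which would contradict the blocking-pair premise that $\mathsf{a}$ prefers $s_j$. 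Once this consistency between the final matching and the DP's feasible-candidate set is established, the contradiction is immediate, and the remaining pieces (the price floor/ceiling characterization and the symmetry across service types) are routine adaptations of the offline proof requiring no new ideas.
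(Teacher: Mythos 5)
Your proposal is correct and follows essentially the same route as the paper: the paper's Appendix-D proofs of the "no Type 2 blocking pair" claims for onEBW$^2$M argue exactly by contradiction that a client rejected at its terminal (valuation-capped) bid can only have been rejected because $s_j$ lacks surplus resources, which contradicts condition (\ref{Nequ. 90}); non-wastefulness then follows as in Proposition \ref{Prop 24}. The only cosmetic difference is that you write the terminal price as $\min\{V^\mathfrak{(X)}_{\cdot,j},p^{\mathfrak{(X)},\max}_{\cdot,j}\}$, whereas in the online mode (risk constraints dropped) the paper uses simply $V^\mathfrak{(X)}_{\cdot,j}$ — this does not affect the argument.
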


\begin{Prop}(Strong Stability of onEBW$^2$M) \label{Prop 17}The proposed onEBW$^2$M mechanism achieves strong stability by ensuring that the matching remains individually rational, fair, and non-wasteful.
\end{Prop}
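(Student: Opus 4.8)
The plan is to mirror the proof strategy used for the strong stability of offRFW$^2$M (Proposition~\ref{Prop 4} and its detailed treatment in Proposition~\ref{Prop 11}), exploiting the fact that onEBW$^2$M (Alg.~2) is structurally a specialization of offRFW$^2$M (Alg.~1) in which expected utilities and risk constraints are replaced by their realized, current-transaction counterparts. Recall that in our framework a matching is \emph{strongly stable} exactly when it is simultaneously individually rational, fair (free of Type~1 blocking pairs), and non-wasteful (free of Type~2 blocking pairs). Hence the proof of the final statement reduces to assembling three ingredients for the output of Alg.~2 — individual rationality, fairness, and non-wastefulness — and then invoking this definition.

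First I would record a short convergence lemma for Alg.~2, analogous to Proposition~\ref{Prop 7}: each client's bid $\dot{\mathbbm{c}}^\mathsf{(c),Pay}_{i,j}$ (resp.\ $\dot{\mathbbm{c}}^\mathsf{(s),Pay}_{k,j}$) is monotonically non-decreasing, is upper-bounded by the client's valuation $V^\mathsf{(c)}_{i,j}$ (resp.\ $V^\mathsf{(s)}_{k,j}$), and increases by $\Delta p$ per round, so the process terminates after finitely many rounds. Next, I would establish individual rationality via Proposition~\ref{Prop 14}: the DP-based selection at each BS (line~15, Alg.~2) enforces the surplus-resource constraints (\ref{equ. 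PF BS C3a})--(\ref{equ. PF BS C4a}), so no BS is overcommitted, while the bid-update rules (lines~17--26, Alg.~2) together with the feasibility filtering of the preference lists under (\ref{equ. PF MU C4a})--(\ref{equ. PF MU C5a}) guarantee $V^\mathsf{(c)}_{i,j}\ge\dot{\mathbbm{c}}^\mathsf{(c),Pay}_{i,j}$ and $V^\mathsf{(s)}_{k,j}\ge\dot{\mathbbm{c}}^\mathsf{(s),Pay}_{k,j}$, i.e., (\ref{equ. PF MU C3a}) holds.

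Then I would rule out the two types of blocking pairs, handling the communication case $\mathfrak{(X)}=\mathsf{(c)}$ and the sensing case $\mathfrak{(X)}=\mathsf{(s)}$ in parallel (as in Propositions~\ref{Prop 10}--\ref{Prop 11}). For Type~1 (fairness), suppose for contradiction that a triplet $(\mathsf{a};\mathbb{S};\dot{\mathbb{C}}^\prime)$ blocks $\nu^{\mathfrak{(X)}}$; since $\mathsf{a}$ was eventually rejected by every $s_j\in\mathbb{S}$, its final bid toward $s_j$ must equal $\min\{V^{\mathfrak{(X)}}_{\mathsf{a},j}, p^{\mathfrak{(X)},\max}_{\mathsf{a},j}\}$, and the DP at $s_j$ has already evaluated admitting $\mathsf{a}$ at this bid against every admissible eviction subset $\nu^{\mathfrak{(X)}\prime}(s_j)$ and found it not utility-improving for $s_j$ in (\ref{equ. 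PF BSa}) — contradicting the BS-side condition~(\ref{Nequ. 88}). For Type~2 (non-wastefulness), a client is rejected only when $s_j$ has no residual bandwidth/power to accommodate it, which directly contradicts condition~(\ref{Nequ. 90}). Having established individual rationality, fairness (no Type~1 blocking pair), and non-wastefulness (no Type~2 blocking pair), strong stability of onEBW$^2$M (Proposition~\ref{Prop 17}) follows by definition.

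The main obstacle — though a comparatively mild one — lies in the Type~1 argument: one must make the contradiction airtight by checking that the DP at each BS genuinely searches over \emph{all} feasible eviction subsets (not merely singletons) when certifying non-improvement, and that the contract terms $\dot{\mathbb{C}}^\prime$ postulated in the blocking pair are themselves feasible under (\ref{equ. PF BS C3a})--(\ref{equ. PF BS C4a}) and (\ref{equ. PF MU C8a})--(\ref{equ. PF MU C9a}). The absence of the expectation operators and risk thresholds that complicate offRFW$^2$M means no additional probabilistic bounding (e.g., Markov-inequality reformulations) is needed here, so this step is strictly simpler than its offline analogue.
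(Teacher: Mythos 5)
Your proposal is correct and follows essentially the same route as the paper: the paper also proves convergence of Alg.~2, individual rationality, and the nonexistence of Type~1 and Type~2 blocking pairs for both communication and sensing matchings, and then concludes strong stability by definition. The only minor discrepancy is that in the Type~1 argument you cap the final bid at $\min\{V^{\mathfrak{(X)}}_{\mathsf{a},j}, p^{\mathfrak{(X)},\max}_{\mathsf{a},j}\}$, whereas the paper's online-mode proof uses the cap $V^{\mathfrak{(X)}}_{\mathsf{a},j}$ alone since the risk thresholds are absent in onEBW$^2$M — a point you yourself note at the end, and which does not affect the contradiction.
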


\begin{Prop}(Stability of Sensing Coalitions in onEBW$^2$M)
	In onEBW$^2$M, each sensing coalition $\bm{c}_k$ is stable when the following conditions are satisfied: 
	
	\noindent $\bullet$ For each MU $u_i$ in sensing coalition $\bm{c}_k$, its utility onEBW$^2$M above $u^\mathsf{(s)}_\mathsf{\min}$; 
	
	\noindent $\bullet$ For each MU $u_i$ in sensing coalition $\bm{c}_k$, the utility obtained by joining the coalition \( \bm{c}_k \) is greater than the utility when trading as an individual.
\end{Prop}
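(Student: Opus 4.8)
The plan is to mirror the coalition-stability argument already used for offRFW$^2$M (its Proposition in Appendix~C), adapted to the online setting with two simplifications: in onEBW$^2$M the relevant quantities are the \emph{realized} utilities of (\ref{eq:49_n2}) and (\ref{eq:50_n2}) rather than expectations, and there are no probabilistic risk constraints to propagate. Two claims must be established: (i) every MU $u_i\in\bm{c}_k$ attains a per-member sensing utility at least $u^\mathsf{(s)}_\mathsf{\min}$; and (ii) each $u_i\in\bm{c}_k$ is at least as well off inside $\bm{c}_k$ as it would be trading alone for the sensing service, so that no member has an incentive to defect and the coalition is stable.

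For claim (i), I would first invoke the equal-splitting convention for sensing coalitions: the representative (the member with the largest sensing value) negotiates on behalf of $\bm{c}_k$, the common sensing outcome is shared, so each member's realized value is $V^\mathsf{(s),max}_{k,j}$ and each member's payment is $\dot{\mathbbm{c}}^\mathsf{(s),Pay}_{k,j}/|\bm{c}_k|$; hence the per-member utility is $V^\mathsf{(s),max}_{k,j}-\dot{\mathbbm{c}}^\mathsf{(s),Pay}_{k,j}/|\bm{c}_k|$. By Definition~\ref{def 6}, the preference list $\overrightarrow{L^\mathsf{(s)}_k}$ built in Alg.~2 retains only contract solutions that are feasible under (\ref{equ. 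PF MU C5a}), (\ref{equ. PF MU C8a}), (\ref{equ. PF MU C9a}) and acceptable to the coalition, i.e.\ with per-member utility no smaller than $u^\mathsf{(s)}_\mathsf{\min}$; combined with the individual rationality of onEBW$^2$M (Proposition~\ref{Prop 14}, which enforces (\ref{equ. PF MU C3a})), this shows that whenever $\bm{c}_k$ is matched it is matched to the head of a non-empty preference list and therefore realizes per-member utility $\ge u^\mathsf{(s)}_\mathsf{\min}$.

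For claim (ii), I would compare the coalition outcome with the best outcome $u_i$ could secure by leaving $\bm{c}_k$ and trading individually for sensing. On the value side, $V^\mathsf{(s)}_{i,j}\le V^\mathsf{(s),max}_{k,j}=\max_{u_{i'}\in\bm{c}_k}V^\mathsf{(s)}_{i',j}$ by construction, so the shared value enjoyed in the coalition is never below the solo value. On the cost side, the key observation is the resource-sharing economics: the coalition procures only one representative sensing task's worth of bandwidth and power and splits the price $|\bm{c}_k|$ ways, whereas a solo MU must purchase and pay for its own full task; hence for every feasible solo contract there is a weakly dominating coalition contract (same resources, price rescaled by $1/|\bm{c}_k|$), so after the price-update loop of Alg.~2 the head of $\overrightarrow{L^\mathsf{(s)}_k}$ weakly dominates every solo alternative. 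Chaining the two inequalities yields that the per-member coalition utility is at least the solo utility --- strictly larger when $|\bm{c}_k|\ge 2$ and equal in the degenerate case $|\bm{c}_k|=1$, where the coalition collapses to the individual --- so no member gains by defecting.

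I expect the main obstacle to be making the cost-comparison step rigorous: $\dot{\mathbbm{c}}^\mathsf{(s),Pay}_{k,j}$ is an endogenous output of the competitive bargaining in Alg.~2 (it depends on which BSs accept the coalition's proposals and on competitors' bids), so one cannot simply assert it is a factor $|\bm{c}_k|$ smaller than a solo price. The cleanest route I foresee is the weak-domination argument above --- rather than comparing realized prices, show that the coalition's feasible set, viewed per member, contains up to the equal-split rescaling a copy of every feasible solo solution with payment no larger, so the preference-list head for the coalition cannot be worse than the preference-list head for a solo MU; stability then follows from the non-existence of blocking pairs in Proposition~\ref{Prop 17}. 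I would also explicitly dispatch the edge case $|\bm{c}_k|=1$, for which both claims hold with equality and the statement is tight.
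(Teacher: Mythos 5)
Your proof follows essentially the same route as the paper's: the per-member value equals the coalition maximum $V^{\mathsf{(s),max}}_{k,j}\ge V^{\mathsf{(s)}}_{i,j}$ while the payment is split $|\bm{c}_k|$ ways, and the algorithm's feasibility/acceptance checks (line 24 of Alg.~2) guarantee the $u^{\mathsf{(s)}}_{\min}$ threshold. The paper's own proof is a three-sentence version of exactly this argument; your weak-domination treatment of the endogenously negotiated price $\dot{\mathbbm{c}}^{\mathsf{(s),Pay}}_{k,j}$ is more careful than what the paper actually writes down, which simply asserts the cost-sharing conclusion without addressing that the final price is an output of the bargaining.
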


Further, for the MOO problem collectively defined by $ \bm{\mathcal{F}^\mathsf{U^\prime}} $ and $ \bm{\mathcal{F}^\mathsf{S^\prime}} $, a Pareto improvement occurs when the social welfare (i.e., the summation of utilities of clients and BSs in the considered market)\cite{RW Matching3} can be increased with another feasible matching result. A matching is thus weakly Pareto optimal when no further Pareto improvement is possible, which is a desired property.


\begin{Prop}(Weak Pareto optimality of onEBW$^2$M) The proposed onEBW$^2$M mechanism ensures weak Pareto optimality by preventing further Pareto improvements, ensuring that no alternative matching configuration can increase the social welfare of the system.
\end{Prop}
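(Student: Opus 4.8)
The plan is to mirror the argument already used for offRFW$^2$M (its weak-Pareto-optimality proof in Appendix C), adapting it to the simpler online setting where expected utilities and risk constraints are replaced by realized utilities and the resource feasibility constraints \eqref{equ. PF BS C3a}, \eqref{equ. PF BS C4a}, \eqref{equ. PF MU C4a}, \eqref{equ. PF MU C5a}. First I would invoke the structural properties already established for onEBW$^2$M --- namely individual rationality (Proposition~\ref{Prop 14}), fairness, non-wastefulness, and strong stability (Proposition~\ref{Prop 17}) --- which collectively guarantee the absence of Type~1 and Type~2 blocking pairs under the output matching $\nu^\mathsf{(c)}$ and $\nu^\mathsf{(s)}$. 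The core of the argument is then a proof by contradiction: suppose there exists an alternative feasible matching (with some alternative set of temporary contracts) whose total social welfare --- the sum of client utilities \eqref{eq:49_n}, \eqref{eq:49_n2} and BS utilities \eqref{eq:50_n}, \eqref{eq:50_n2} --- strictly exceeds that produced by onEBW$^2$M.

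Next I would show that any such welfare-improving reallocation must, somewhere in the market, reassign at least one client to a BS it is not currently matched with (otherwise the only freedom left is adjusting contract terms, but each client's bid already climbs to $\min\{V_{i,j},p^{\max}\}$ by the price-update rule in Alg.~2, lines 17--26, so no further gain is extractable). The pair consisting of that client and its new BS then satisfies condition~\eqref{Nequ. 87}: the client strictly prefers the new assignment. I would then argue that the BS side of this pair also strictly benefits --- either because the BS evicts a lower-paying client to admit the new one (matching the Type~1 condition~\eqref{Nequ. 88}), or because the BS still holds residual bandwidth/power and can admit the client outright (matching the Type~2 condition~\eqref{Nequ. 90}). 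In either case a blocking pair exists, contradicting strong stability (Proposition~\ref{Prop 17}). Hence no welfare-improving matching exists, and onEBW$^2$M is weakly Pareto optimal.

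The main obstacle I anticipate is the step establishing that the BS in the hypothesized blocking pair is \emph{also} better off --- the welfare gain in the alternative matching is an aggregate quantity, and it is conceivable in principle that a client gains a lot while its new BS loses a little, with the slack made up elsewhere in the system. To close this gap cleanly I would rely on the DP-based client-selection step in Alg.~2 (line~15), which guarantees each BS already selects a utility-maximizing feasible subset of the clients that proposed to it; combined with the observation that in a decentralized per-round best-response process any profitable unilateral deviation would have been taken, one concludes the relevant BS cannot be made worse off by a reallocation its own optimization did not already prefer. This lets me localize the aggregate improvement to a single client--BS pair that strictly benefits on both sides, completing the contradiction. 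The remainder is a routine transcription of the offRFW$^2$M argument with expectations removed and risk constraints dropped.
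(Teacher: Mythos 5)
Your proposal takes essentially the same route as the paper's own proof: assume an alternative matching with strictly higher social welfare exists, localize the improvement to a client that strictly prefers some other BS, argue that this client and that BS form a Type 1 or Type 2 blocking pair, and derive a contradiction with the strong stability of onEBW$^2$M. The aggregate-to-local gap you flag (the client might gain while its new BS loses slightly, with the slack recovered elsewhere) is genuine, but the paper's proof glosses over exactly the same step with the phrase that the client and BS ``are more inclined to establish a matching relationship,'' so your attempt to close it via the DP-based selection and the bid-saturation rule is, if anything, more careful than the published argument.
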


We next examine the aforementioned property of onEBW$^2$M, as outlined below:

\begin{Prop}\label{Prop 19}
	(Convergence of a set of matching in onEBW$^2$M) Alg. 2 converges within finite rounds.
\end{Prop}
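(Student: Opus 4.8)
\textbf{Proof proposal for Proposition \ref{Prop 19} (Convergence of onEBW$^2$M).}

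The plan is to mirror the argument used for Proposition \ref{Prop 7} (convergence of offRFW$^2$M), since Algorithm 2 is structurally a stripped-down version of Algorithm 1 that drops the expected-utility and risk machinery but retains the same iterative propose/accept/bid-increment skeleton. First I would identify the single quantity that changes monotonically across rounds: for each client (individual MU $u_i\in\bm{\mathcal{U}^\prime}$ or coalition $\bm{c}_k\in\bm{\mathcal{C}^\prime}$), the bid associated with its currently preferred solution, $\dot{\mathbbm{c}}^\mathsf{(c),Pay}_{i,j}\langle\mathcal{X}\rangle$ or $\dot{\mathbbm{c}}^\mathsf{(s),Pay}_{k,j}\langle\mathcal{X}\rangle$. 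By the update rules in lines 17--26 of Alg. 2, a bid is either left unchanged or incremented by $\Delta p$, and in the latter case the $\min\{\cdot,\cdot\}$ cap ensures the bid never exceeds the client's valuation $V^\mathsf{(c)}_{i,j}$ (resp. $V^\mathsf{(s)}_{k,j}$). Hence each client's bid trajectory is non-decreasing and bounded above, so after finitely many increments it reaches a terminal value — either the client is accepted by some BS or its bid equals its valuation (equivalently, constraints (\ref{equ. PF MU C4a})/(\ref{equ. PF MU C5a}) become binding).

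Next I would argue that the ``inner'' BS-side selection (line 15) terminates in each round: each $s_j\in\bm{\mathcal{S}^\prime}$ solves a two-dimensional 0-1 knapsack instance over the finite set $\widetilde{\mathbb{Y}}(s_j)$ via dynamic programming under the surplus-resource constraints (\ref{equ. PF BS C3a}) and (\ref{equ. PF BS C4a}); since $B^\prime_j$ and $P^\prime_j$ are finite and the candidate set is finite, the DP halts, so the \texttt{while} loop at line 13 exits once no further rejections trigger bid updates. Then I would tie the outer loop (line 4) to the termination condition: the $flag_i$/$flag_k$ indicators become {\bf False} precisely when every client's preferred solution $F^\mathsf{(c),\star}_i\langle\mathcal{X}\rangle$ (resp. $F^\mathsf{(s),\star}_k\langle\mathcal{X}\rangle$) stays fixed across two consecutive rounds. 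Because the preference lists $\overrightarrow{L^\mathsf{(c)}_i}$, $\overrightarrow{L^\mathsf{(s)}_k}$ are finite, and a client only re-sorts or switches its top choice when some bid has strictly increased, the total number of rounds in which any flag is {\bf True} is bounded by the total number of $\Delta p$-increments across all clients, which is finite.

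Putting these together: let $\mathcal{X}^\mathsf{max}$ be bounded above by $\sum_{u_i\in\bm{\mathcal{U}^\prime}}\lceil (V^\mathsf{(c)}_{i,j}-p^\mathsf{min}_{i,j})/\Delta p\rceil + \sum_{\bm{c}_k\in\bm{\mathcal{C}^\prime}}\lceil (V^\mathsf{(s)}_{k,j}-p^\mathsf{min}_{k,j})/\Delta p\rceil$ (plus a constant for the two-round stabilization check), which is finite since all the valuations are finite and $\Delta p>0$; hence Alg. 2 converges within finite rounds. I would keep the write-up to a couple of sentences, explicitly invoking Proposition \ref{Prop 7} and noting that removing the risk and expectation terms only shrinks the per-round work (the knapsack now has fewer constraints) without affecting the monotone-bounded-bid argument. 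The only mild subtlety — and the step I would be most careful about — is confirming that a client genuinely makes \emph{progress} in every round it remains active: I need to rule out a cycle where a client oscillates between two BSs in its preference list without ever raising its bid. This is handled by observing that a client switches its preferred solution only after a rejection, and every rejection that does not terminate the client's bidding triggers a strict $\Delta p$ increment (lines 18--19, 23--24), so no infinite rejection-without-increment loop is possible; I would state this explicitly to close the gap.
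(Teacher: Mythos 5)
Your proposal is correct and follows essentially the same route as the paper's own proof: both rest on the observation that each client's bid is monotonically non-decreasing in $\Delta p$ increments, capped at its valuation, so every client is either accepted or reaches its maximum payment after finitely many rounds, with the per-round BS-side DP (two-dimensional 0-1 knapsack under the surplus constraints) terminating trivially. Your version merely spells out the explicit round bound and the no-oscillation argument that the paper leaves implicit.
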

\begin{proof}
	As the onEBW$^2$M refers to a set of M2M matching (matching between BSs and individual MUs, as well as matching between BSs and coalitions), we utilize the DP algorithm to transform the problem into a two-dimensional 0-1 knapsack problem \cite{RW Matching3}. After a finite number of rounds, each client's payment can either be accepted or reach its maximum value while considering constraints (\ref{equ. PF BS C3a}) and (\ref{equ. PF BS C4a}) (e.g., lines 17-26, Alg. 2), which ensuring the convergence.
\end{proof}

\begin{Prop}
	(Individual rationality of onEBW$^2$M) The proposed onEBW$^2$M mechanism ensures individual rationality for All the BSs, individual MUs, and sensing coalitions are individual rational in the onEBW$^2$M.
\end{Prop}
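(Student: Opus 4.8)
The plan is to mirror the structure of the offline analysis (Propositions \ref{Prop 7}--\ref{Prop 11}) but strip away the probabilistic machinery, since onEBW$^2$M operates on current rather than expected quantities. First I would establish individual rationality of the BSs: by construction, each BS $s_j\in\bm{\mathcal{S}^\prime}$ treats its leftover supplies $B^\prime_j$ and $P^\prime_j$ as hard caps when solving the two-dimensional $0$--$1$ knapsack via DP (line 15, Alg.~2), so any feasible temporary allocation $\nu^\mathsf{(c)}(s_j)$ and $\nu^\mathsf{(s)}(s_j)$ automatically satisfies (\ref{equ. PF BS C3a}) and (\ref{equ. PF BS C4a}); unlike the offline case, there are no risk constraints to verify here because online trading uses realized participation, not forecasts.

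Second, I would prove individual rationality on the client side. Each client constructs its preference list only over feasible solutions meeting (\ref{equ. PF MU C4a}) and (\ref{equ. PF MU C5a}) (line 6, Alg.~2), and the bid-escalation step (lines 17--26) never lets the payment exceed the valuation: the update is $\dot{\mathbbm{c}}^\mathsf{(c),Pay}_{i,j}\langle\mathcal{X}{+}1\rangle\leftarrow\min\{\dot{\mathbbm{c}}^\mathsf{(c),Pay}_{i,j}\langle\mathcal{X}\rangle+\Delta p,\;V^\mathsf{(c)}_{i,j}\}$ and symmetrically for sensing, so $V^\mathsf{(c)}_{i,j}\ge\dot{\mathbbm{c}}^\mathsf{(c),Pay}_{i,j}$ and $V^\mathsf{(s)}_{i,j}\ge\dot{\mathbbm{c}}^\mathsf{(s),Pay}_{k,j}$ hold at termination, i.e.\ (\ref{equ. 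PF MU C3a}) is met. Combining the two sides, every matched pair is voluntarily accepted by both parties and no participant ends up worse off than not trading (a client only proposes to BSs on its feasible list, and a BS only accepts clients that raise its utility in the knapsack objective (\ref{equ. PF BSa})), so the zero-utility outside option is respected. I would also invoke Proposition \ref{Prop 19} to guarantee the algorithm terminates in finitely many rounds, so these terminal conditions are actually reached.

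The main obstacle I anticipate is purely expository rather than mathematical: the excerpt has (perhaps by a typo) stated both a Proposition~\ref{Prop 14} "Individual rationality of onEBW$^2$M" with a precise constraint-based statement \emph{and} this final unnumbered-looking duplicate; the proof must be phrased so it discharges the concrete claims --- (\ref{equ. PF BS C3a}), (\ref{equ. PF BS C4a}) for BSs and (\ref{equ. PF MU C3a})--(\ref{equ. PF MU C5a}) for clients --- by pointing to the exact lines of Alg.~2, exactly as the offline counterpart proof does. A secondary subtlety is the coalition case: I would note that because members of $\bm{c}_k$ share the payment $\dot{\mathbbm{c}}^\mathsf{(s),Pay}_{k,j}/|\bm{c}_k|$ and the sensing value $V^\mathsf{(s),max}_{k,j}$ equally, per-member rationality follows immediately from coalition-level rationality, so no separate argument is needed. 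Assembling these pieces, the conclusion is that clients and BSs are all individually rational in onEBW$^2$M, which is what Proposition~\ref{Prop 14} (and its restatement) asserts.
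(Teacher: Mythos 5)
Your proposal is correct and follows essentially the same route as the paper's own proof: individual rationality of BSs via the hard caps $B^\prime_j,P^\prime_j$ enforced in the DP step (line 15, Alg.~2), and individual rationality of clients via the feasibility-filtered preference lists and the capped bid-escalation in lines 6 and 17--26, which guarantee (\ref{equ. PF MU C3a})--(\ref{equ. PF MU C5a}). Your added remarks (invoking convergence so the terminal conditions are reached, and deriving per-member rationality from the equal-split rule) go slightly beyond what the paper writes but do not change the argument.
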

\begin{proof}
	We offer the analysis on proving the individual rationality of both BSs and clients.
	
	\textbf{Individual rationality of BSs.} each BS $s_j$ regards $B^\prime_j$ and $P^\prime_j$ as up limit of resources for serving MUs, and the actual number of matched clients of $s_j$ will definitely not exceed its remain resource supply (e.g., line 15, Alg. 2).
	
	\textbf{Individual rationality of clients.} Lines 17-26 of Alg. 2 ensure that the value obtained by each client is at least equal to the payment it makes, thereby satisfying constraint (\ref{equ. PF MU C3a}). Furthermore, lines 6, 18 and 23 of Alg. 2 guarantee that each client are satisfying constraints (\ref{equ. PF MU C4a}) and (\ref{equ. PF MU C5a}),.
	
	As a summary, clients and BSs are individual rationality in our proposed onEBW$^2$M.
\end{proof}

\begin{Prop}
	No blocking pair can exist in the Resource Trading for Communication Services in onEBW$^2$M.
\end{Prop}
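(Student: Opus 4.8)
The plan is to mirror, essentially verbatim, the two-case contradiction argument already used in Proposition~9 (the communication-services blocking-pair proof for offRFW$^2$M), since onEBW$^2$M is structurally identical to offRFW$^2$M except that it operates on the residual sets $\bm{\mathcal{U}^\prime}$, $\bm{\mathcal{C}^\prime}$, $\bm{\mathcal{S}^\prime}$ with residual capacities $B_j^\prime$, $P_j^\prime$ and uses practical (non-expected) utilities. First I would suppose, for contradiction, that under the matching $\nu^\mathsf{(c)}$ returned by Alg.~2 there exists a Type~1 blocking pair $(u_i;\mathbb{S};\dot{\mathbb{C}}^\prime)$. The key observation is that when $u_i$ is \emph{not} matched with some $s_j\in\mathbb{S}$, the bidding loop of Alg.~2 (lines 17--26) has driven $u_i$'s final payment to $\dot{\mathbbm{c}}^\mathsf{(c),Pay}_{i,j}=\min\{V^\mathsf{(c)}_{i,j},\,p_{i,j}^\mathsf{(c),\max}\}$, i.e. to the largest value $u_i$ can offer subject to $V^\mathsf{(c)}_{i,j}\ge\dot{\mathbbm{c}}^\mathsf{(c),Pay}_{i,j}$ and constraint (\ref{equ. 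PF MU C4a}). Hence if $s_j$ still rejected $u_i$ (keeping its current matches $\nu^\mathsf{(c)}(s_j)$), it must be because the DP step at line~15 found that admitting $u_i$ --- even after evicting any evictable subset $\nu^\mathsf{(c)\prime}(s_j)$ --- does not raise $s_j$'s utility, which directly contradicts the defining inequality (\ref{Nequ. 88}) of the Type~1 blocking pair. For the Type~2 case I would argue analogously: the only reason Alg.~2 leaves $u_i$ unmatched while $u_i$ offers a feasible, individually rational contract is that $s_j$ has no surplus bandwidth or power (constraints (\ref{equ. PF BS C3a})--(\ref{equ. PF BS C4a}) are tight), yet the Type~2 condition (\ref{Nequ. 90}) asserts $s_j$ can profitably serve $u_i$ \emph{in addition} to its current load, which requires surplus resources --- a contradiction.

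In carrying this out, the steps in order are: (i) invoke convergence of Alg.~2 (Proposition~\ref{Prop 19}) so that the terminal matching $\nu^\mathsf{(c)}$ is well-defined; (ii) recall from the bidding update (lines 19--24 of Alg.~2) the closed form $\dot{\mathbbm{c}}^\mathsf{(c),Pay}_{i,j}=\min\{V^\mathsf{(c)}_{i,j},p_{i,j}^\mathsf{(c),\max}\}$ at termination for any rejected MU; (iii) for Type~1, use the optimality of the DP-based knapsack selection at line~15 to conclude that no eviction-and-admit reshuffle strictly improves $s_j$'s utility, contradicting (\ref{Nequ. 88}); (iv) for Type~2, use the capacity constraints (\ref{equ. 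PF BS C3a})--(\ref{equ. PF BS C4a}) that the DP selection enforces to conclude a rejection implies exhausted capacity, contradicting (\ref{Nequ. 90}); (v) conclude that neither blocking-pair type can arise. Note that because onEBW$^2$M drops the expectation operator and the risk constraints, the argument is in fact slightly simpler than the offline version --- one replaces $\mathbb{E}[u^{(\cdot)}]$ by the realized $u^{(\cdot)}$ and omits the references to constraints (\ref{equ. PF MU C6})--(\ref{equ. PF MU C7}); otherwise the logical skeleton is unchanged.

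The main obstacle, such as it is, is not analytical depth but ensuring that the optimality claim for the DP step at line~15 is stated correctly: the DP solves the two-dimensional $0$-$1$ knapsack \emph{exactly} over the set of proposals $\widetilde{\mathbb{Y}}(s_j)$ currently on the table, so ``$s_j$ would strictly prefer to evict $\nu^\mathsf{(c)\prime}(s_j)$ and admit $u_i$'' can only fail to be realized if $u_i$'s proposal was never in $\widetilde{\mathbb{Y}}(s_j)$ --- but it was, since the blocking-pair definition presumes $u_i$ prefers $s_j$ and hence would have proposed to $s_j$ in some round with the contract terms $\dot{\mathbb{C}}^\prime$ (or better). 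One must therefore briefly argue that the bid $u_i$ would need in $\dot{\mathbb{C}}^\prime$ to make (\ref{Nequ. 88}) hold is itself feasible and individually rational for $u_i$ (so that $u_i$ actually did submit it), which follows from $V^\mathsf{(c)}_{i,j}\ge\dot{\mathbbm{c}}^\mathsf{(c),Pay}_{i,j}$; once this is in place the DP optimality closes the Type~1 case cleanly, and the capacity argument closes Type~2.
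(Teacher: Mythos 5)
Your proposal follows essentially the same contradiction argument as the paper's own proof: for Type~1, the rejected MU's bid has already been driven to its ceiling and the DP selection at line~15 is optimal, contradicting (\ref{Nequ. 88}); for Type~2, rejection at the maximal bid can only mean exhausted capacity, contradicting (\ref{Nequ. 90}). The only cosmetic discrepancy is that you carry over the offline cap $\min\{V^\mathsf{(c)}_{i,j},p_{i,j}^\mathsf{(c),\max}\}$, whereas the paper's online version (having dropped the risk constraints) sets the terminal payment simply to $V^\mathsf{(c)}_{i,j}$ --- a simplification you yourself note, so the argument is sound.
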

\begin{proof}
	We show there is no blocking pair of either Type 1 or Type 2, as following:
	
	\noindent 
	$\bullet$ \textbf{There is no Type 1 blocking pair related to communication services of onEBW$^2$M.} We offer the proof by considering contradiction.
	
	Under a given matching $ \nu^\mathsf{(c)} $, MU $ u_i $ and BS $ s_j $ form a Type 1 blocking pair $ \left(u_i;s_j;\dot{\mathbb{C}}^\prime\right) $.
	If MU $ u_i $ does not trading with BS $ s_j $, the payment of MU $ u_i $ during the last round can only be its valuation $V^\mathsf{(c)}_{i,j}$, as shown by (\ref{59A}) and (\ref{60A}).
	\begin{equation}\label{59A}{\small
			\begin{aligned}
				\dot{\mathbbm{c}}^\mathsf{(c),Pay}_{i,j} = V^\mathsf{(c)}_{i,j},
		\end{aligned}}
	\end{equation}
	\begin{equation}\label{60A}
		\begin{aligned}
			&u^\mathsf{(c),S}\left(s_j,\left\{\nu\left( s_j \right)\backslash\widetilde{\nu^\mathsf{(c)\prime}}\left( s_j \right)\right\} \cup \left\{ u_i \right\},\dot{\mathbb{C}}^\prime \right) \\&< u^\mathsf{(c),S}\left(s_j,\nu\left( s_j\right),\dot{\mathbb{C}}^\mathsf{(c)}_{i,j} \right).\\
		\end{aligned}
	\end{equation} 
	
	If BS $ s_j $ selects MU $ u_i $, we have $ \dot{\mathbbm{c}}^\mathsf{(c),Pay}_{i,j}\left\langle \mathcal{X}^\mathsf{*} \right\rangle\leq \dot{\mathbbm{c}}^\mathsf{(c),Pay}_{i,j}\left\langle \mathcal{X} \right\rangle =V^\mathsf{(c)}_{i,j} $ and the following (\ref{81aa})
	\begin{equation}\label{81aa}{\small
			\begin{aligned}
				&u^\mathsf{(c),S}\left(s_j,\left\{\nu\left( s_j \right)\backslash\widetilde{\nu^\mathsf{(c)\prime}}\left( s_j \right)\right\} \cup \left\{ u_i \right\},\dot{\mathbb{C}}^\prime \right) \geq\\& u^\mathsf{(c),S}\left(s_j,\left\{\nu\left( s_j \right)\backslash\widetilde{\nu^\mathsf{(c)\prime\prime}}\left( s_j \right)\right\} \cup \left\{ u_i \right\},\dot{\mathbb{C}}^\prime \right),\\
		\end{aligned}}
	\end{equation}
	where $ 
	\widetilde{\nu^\mathsf{(c)\prime\prime}}\left(s_j\right) \subseteq \widetilde{\nu^\mathsf{(c)\prime}}\left(s_j\right) $. From (\ref{60a}) and (\ref{81aa}), we can get
	\begin{equation}\label{key}\small{
			\begin{aligned}
				&u^\mathsf{(c),S}\left(s_j,\nu\left( s_j\right),\dot{\mathbb{C}}^\mathsf{(c)}_{i,j} \right)> \\&u^\mathsf{(c),S}\left(s_j,\left\{\nu\left( s_j \right)\backslash\widetilde{\nu^\mathsf{(c)\prime\prime}}\left( s_j \right)\right\} \cup \left\{ u_i \right\},\dot{\mathbb{C}}^\prime \right),
		\end{aligned}}
	\end{equation}
	which is contrary to (\ref{Nequ. 88}),thus ensuring the inexistence of Type 1 blocking pairs.
	
	\noindent 
	$\bullet$ \textbf{There is no Type 2 blocking pair related to communication services of onEBW$^2$M.}
	We conduct the proof by considering cases of contradiction. 
	
	Under a given matching $ \nu^\mathsf{(s)} $, MU $ u_i $ and BS $ s_j $ form a Type 2 blocking pair $ \left(u_i;s_j;\dot{\mathbb{C}}^\prime\right) $, as shown by (\ref{Nequ. 90}).
	If MU $ u_i $ is rejected by BS $ s_j $, the final payment of $ u_i $ can be set by $ \dot{\mathbbm{c}}^\mathsf{(c),Pay}_{i,j} = V^\mathsf{(c)}_{i,j} $, where the only reason of such a rejection is that $ s_j $ has no surplus resources. However, the coexistence of (\ref{Nequ. 90}) shows that BS $ s_j $ has adequate resource supply to serve MUs, which contradicts our previous assumption. Therefore, we prove that there is no Type 2 blocking pair.
	
	As a summary, no blocking pair can exist during the matching related to communication services in onEBW$^2$M. 
\end{proof}

\begin{Prop}\label{Prop 22}
	No blocking pair can exist in the Resource Trading for Sensing Services in onEBW$^2$M.
\end{Prop}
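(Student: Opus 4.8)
The plan is to replicate, at the level of sensing coalitions, the two-case contradiction argument already used for Proposition~21 (communication services in onEBW$^2$M), substituting the coalition $\bm{c}_k$ for the individual MU $u_i$, the sensing utilities $u^\mathsf{(s),U}$ and $u^\mathsf{(s),S}$ for their communication counterparts, the temporary sensing contract $\dot{\mathbb{C}}^\mathsf{(s)}_{k,j}$ for $\dot{\mathbb{C}}^\mathsf{(c)}_{i,j}$, and the sensing matching $\nu^\mathsf{(s)}$ for $\nu^\mathsf{(c)}$. First I would fix the matching $\nu^\mathsf{(s)}$ returned by Alg.~2 and suppose, toward a contradiction, that a coalition $\bm{c}_k$ and a BS $s_j$ form a \textbf{Type~1 blocking pair} $(\bm{c}_k;s_j;\dot{\mathbb{C}}^\prime)$, i.e.\ (\ref{Nequ. 87}) and (\ref{Nequ. 88}) both hold. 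Since $\bm{c}_k$ is unmatched with $s_j$, the bidding loop (lines~17--26 of Alg.~2) forces $\bm{c}_k$'s last offered price to equal its valuation, $\dot{\mathbbm{c}}^\mathsf{(s),Pay}_{k,j}=V^\mathsf{(s)}_{k,j}=V^\mathsf{(s),max}_{k,j}$, because otherwise $\bm{c}_k$ would have kept raising its bid by $\Delta p$ while (\ref{equ. PF MU C3a}) and (\ref{equ. PF MU C5a}) still held. This yields the rejection inequality analogous to (\ref{60A}): for any subset $\widetilde{\nu^\mathsf{(s)\prime}}(s_j)$ of currently matched coalitions, evicting it to admit $\bm{c}_k$ is strictly worse for $s_j$ than keeping $\nu^\mathsf{(s)}(s_j)$.

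Next I would invoke the monotonicity of the DP-based selection at line~15 (the sensing analogue of (\ref{81aa})): the best feasible bundle containing $\bm{c}_k$ has utility at least that of any bundle obtained by evicting a \emph{sub-collection} $\widetilde{\nu^\mathsf{(s)\prime\prime}}(s_j)\subseteq\widetilde{\nu^\mathsf{(s)\prime}}(s_j)$. Chaining this with the rejection inequality gives $u^\mathsf{(s),S}(s_j,\nu(s_j),\dot{\mathbb{C}}^\mathsf{(s)}_{k,j}) > u^\mathsf{(s),S}\big(s_j,\{\nu(s_j)\setminus\widetilde{\nu^\mathsf{(s)\prime\prime}}(s_j)\}\cup\{\bm{c}_k\},\dot{\mathbb{C}}^\prime\big)$, contradicting (\ref{Nequ. 88}). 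Hence no Type~1 blocking pair can arise for sensing services.

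For the \textbf{Type~2 blocking pair} case I would argue by contradiction as well: suppose $(\bm{c}_k;s_j;\dot{\mathbb{C}}^\prime)$ satisfies (\ref{Nequ. 87}) together with (\ref{Nequ. 90}). Being unmatched, $\bm{c}_k$ was rejected at line~16; but in onEBW$^2$M a coalition that has bid up to $\dot{\mathbbm{c}}^\mathsf{(s),Pay}_{k,j}=V^\mathsf{(s)}_{k,j}$ can only be rejected because $s_j$ has no residual bandwidth or power left under (\ref{equ. PF BS C3a})--(\ref{equ. PF BS C4a}). Yet (\ref{Nequ. 90}) asserts that $s_j$ could add $\bm{c}_k$ to its current matches and strictly increase its utility, which presupposes idle resources at $s_j$ --- a contradiction. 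Combining both cases proves the proposition.

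I expect the \textbf{main obstacle} to be the bookkeeping around the coalition abstraction rather than any new idea: one must justify that the ``final bid equals valuation upon rejection'' fact survives the per-member pricing $\dot{\mathbbm{c}}^\mathsf{(s),Pay}_{k,j}/|\bm{c}_k|$ and the representative-MU mechanism, and that the knapsack monotonicity used for (\ref{81aa}) transfers verbatim to the sensing DP at line~15 of Alg.~2. Once these points are pinned down, the contradiction steps are structurally identical to those in the proof of Proposition~21.
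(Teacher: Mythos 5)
Your proposal follows essentially the same route as the paper's proof: the same two-case contradiction, with the final bid pinned to the valuation $V^\mathsf{(s)}_{k,j}$ upon rejection, the DP monotonicity inequality over evicted sub-collections for Type 1, and the "rejection implies no surplus resources" argument contradicting the Type 2 condition. The paper likewise treats the sensing case as a verbatim transcription of the communication case, so your plan matches its proof in both structure and substance.
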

\begin{proof}
	We show there is no blocking pair of either Type 1 or Type 2, as following:
	
	\noindent 
	$\bullet$ \textbf{There is no Type 1 blocking pair related to sensing services of onEBW$^2$M.} We offer the proof by considering contradiction.
	
	Under a given matching $ \nu^\mathsf{(s)} $, coalition $ \bm{c}_k $ and BS $ s_j $ form a Type 1 blocking pair $ \left(\bm{c}_k;s_j;\dot{\mathbb{C}}^\prime\right) $.
	If $ \bm{c}_k $ does not trading with BS $ s_j $, the payment of MU $ u_i $ during the last round can only be its valuation $V^\mathsf{(s)}_{i,j}$, as shown by (\ref{59A1}) and (\ref{60A1}).
	\begin{equation}\label{59A1}{\small
			\begin{aligned}
				\dot{\mathbbm{c}}^\mathsf{(s),Pay}_{k,j} = V^\mathsf{(s)}_{k,j},
		\end{aligned}}
	\end{equation}
	\begin{equation}\label{60A1}
		\begin{aligned}
			&u^\mathsf{(s),S}\left(s_j,\left\{\nu\left( s_j \right)\backslash\widetilde{\nu^\mathsf{(s)\prime}}\left( s_j \right)\right\} \cup \left\{ \bm{c}_k \right\},\dot{\mathbb{C}}^\prime \right) \\&<u^\mathsf{(s),S}\left(s_j,\nu\left( s_j\right),\dot{\mathbb{C}}^\mathsf{(s)}_{k,j} \right).\\
		\end{aligned}
	\end{equation} 
	
	If BS $ s_j $ selects coalition $ \bm{c}_k $, we have $ \dot{\mathbbm{c}}^\mathsf{(s),Pay}_{k,j}\left\langle \mathcal{X}^\mathsf{*} \right\rangle\leq \dot{\mathbbm{c}}^\mathsf{(s),Pay}_{k,j}\left\langle \mathcal{X} \right\rangle =V^\mathsf{(s)}_{k,j} $ and the following (\ref{81aaa})
	\begin{equation}\label{81aaa}{\small
			\begin{aligned}
				&u^\mathsf{(s),S}\left(s_j,\left\{\nu\left( s_j \right)\backslash\widetilde{\nu^\mathsf{(s)\prime}}\left( s_j \right)\right\} \cup \left\{ \bm{c}_k \right\},\dot{\mathbb{C}}^\prime \right) \geq\\& u^\mathsf{(s),S}\left(s_j,\left\{\nu\left( s_j \right)\backslash\widetilde{\nu^\mathsf{(s)\prime\prime}}\left( s_j \right)\right\} \cup \left\{ \bm{c}_k \right\},\dot{\mathbb{C}}^\prime \right),\\
		\end{aligned}}
	\end{equation}
	where $ 
	\widetilde{\nu^\mathsf{(s)\prime\prime}}\left(s_j\right) \subseteq \widetilde{\nu^\mathsf{(s)\prime}}\left(s_j\right) $. From (\ref{60A1}) and (\ref{81aaa}), we can get
	\begin{equation}\label{key}\small{
			\begin{aligned}
				&u^\mathsf{(s),S}\left(s_j,\nu\left( s_j\right),\dot{\mathbb{C}}^\mathsf{(s)}_{k,j} \right)> \\&u^\mathsf{(s),S}\left(s_j,\left\{\nu\left( s_j \right)\backslash\widetilde{\nu^\mathsf{(s)\prime\prime}}\left( s_j \right)\right\} \cup \left\{ \bm{c}_k \right\},\dot{\mathbb{C}}^\prime \right),
		\end{aligned}}
	\end{equation}
	which is contrary to (\ref{Nequ. 88}), and thus proving the inexistence of Type 1 blocking pairs.
	
	\noindent 
	$\bullet$ \textbf{There is no Type 2 blocking pair related to sensing services of onEBW$^2$M.}
	We conduct the proof by considering cases of contradiction. 
	
	Under a given matching $ \nu^\mathsf{(s)} $, coalition $ \bm{c}_k $ and BS $ s_j $ form a Type 2 blocking pair $ \left(\bm{c}_k;s_j;\dot{\mathbb{C}}^\prime\right) $, as shown by (\ref{Nequ. 90}).
	If coalition $ \bm{c}_k $ is rejected by BS $ s_j $, the final payment of $ \bm{c}_k $ can be set by $ \dot{\mathbbm{c}}^\mathsf{(s),Pay}_{k,j} =V^\mathsf{(s)}_{k,j}$, where the only reason of such a rejection is that $ s_j $ has no surplus resources. However, the coexistence of (\ref{Nequ. 90}) shows that BS $ s_j $ has adequate resource supply to serve coalitions, which contradicts our previous assumption. Therefore, we prove that there is no Type 2 blocking pair.
	
	As a summary, no blocking pair can exist during the matching related to sensing services in onEBW$^2$M. 
\end{proof}

\begin{Prop}\label{Prop 24}
	(Fairness, Non-wastefulness, Strong Stability of onEBW$^2$M) onEBW$^2$M is fair, non-wasteful, and strongly stable.
\end{Prop}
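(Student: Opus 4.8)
The plan is to mirror the proof of Proposition~\ref{Prop 11} for the offline mechanism: Proposition~\ref{Prop 24} is a direct corollary of the component properties already established for onEBW$^2$M, so the work lies in assembling them in the correct order rather than in any fresh calculation. First I would invoke Proposition~\ref{Prop 19} to guarantee that Alg.~2 terminates after finitely many rounds, so that the output matchings $\nu^\mathsf{(c)}$, $\nu^\mathsf{(s)}$ together with the temporary contracts $\dot{\mathbb{C}}^\mathsf{(c)}_{i,j}$, $\dot{\mathbb{C}}^\mathsf{(s)}_{k,j}$ are well defined. Next I would appeal to Proposition~\ref{Prop 14} (individual rationality of onEBW$^2$M): lines~15 and~17--26 of Alg.~2 ensure that each BS $s_j$ never books beyond $B_j^\prime$ and $P_j^\prime$, i.e., constraints~(\ref{equ. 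PF BS C3a})--(\ref{equ. PF BS C4a}) hold, and that each client's valuation covers its payment while its service-quality constraints~(\ref{equ. PF MU C3a})--(\ref{equ. PF MU C5a}) are satisfied.

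With individual rationality in hand, I would combine it with the absence of blocking pairs. The no-blocking-pair result for communication services in onEBW$^2$M (proved immediately above) rules out both Type~1 and Type~2 blocking pairs in $\nu^\mathsf{(c)}$, and Proposition~\ref{Prop 22} does the same for $\nu^\mathsf{(s)}$. Since the exclusion of Type~1 blocking pairs is exactly the definition of fairness and the exclusion of Type~2 blocking pairs is exactly the definition of non-wastefulness, these two claims follow at once. Strong stability is then, by definition (cf. Proposition~\ref{Prop 17}), the simultaneous satisfaction of individual rationality, fairness, and non-wastefulness, so it follows immediately from the three facts just assembled. I would also remark, for completeness, that the contradiction arguments carried over from offRFW$^2$M remain valid because onEBW$^2$M differs only in negotiating against the current market/resource state rather than expected values and risks; the maximal-payment versus residual-resource dichotomy still applies with $V^\mathsf{(c)}_{i,j}$ and $V^\mathsf{(s)}_{k,j}$ in place of their expectations, and the risk-related clauses simply drop out.

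The main obstacle is essentially bookkeeping rather than analysis: I must make sure the three target properties are phrased in terms of the \emph{online} blocking-pair notions for $\nu^\mathsf{(\cdot)}$ and temporary contracts $\dot{\mathbb{C}}^\mathsf{(\cdot)}$ (Definition on blocking pairs in onEBW$^2$M), and that the two-dimensional $0$--$1$ knapsack / DP step in line~15 of Alg.~2 enforces (\ref{equ. PF BS C3a})--(\ref{equ. PF BS C4a}) exactly, so that no hidden feasibility slack could let a Type~2 pair survive. Once this alignment is checked, there is no genuinely hard step, and the proof reduces to citing Propositions~\ref{Prop 14},~\ref{Prop 19},~\ref{Prop 22} (and the unlabeled fairness/non-wastefulness/no-blocking-pair results for onEBW$^2$M) and concluding via Proposition~\ref{Prop 17}.
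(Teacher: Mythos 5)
Your proposal is correct and follows essentially the same route as the paper: the paper's proof of Proposition~\ref{Prop 24} likewise just observes that Algorithm~2 satisfies Propositions~\ref{Prop 19}--\ref{Prop 22} (convergence, individual rationality, and the absence of Type~1 and Type~2 blocking pairs for both services) and then concludes fairness, non-wastefulness, and strong stability by appealing to Propositions~\ref{Prop 14}--\ref{Prop 17}. Your additional bookkeeping remarks about the online blocking-pair notions and the DP feasibility check are consistent with, if slightly more explicit than, the paper's one-line argument.
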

\begin{proof}
	Since the matching result of Alg. 2 holds Propositions \ref{Prop 19}-\ref{Prop 22}, according to Propositions \ref{Prop 14}-\ref{Prop 17}, our proposed onEBW$^2$M is strongly fairness, non-wastefulness, strong stability.
\end{proof}

\begin{Prop}(Stability of Sensing Coalitions in onEBW$^2$M) The proposed onEBW$^2$M ensures that each sensing coalition $\bm{c}_k$ is stable.\end{Prop}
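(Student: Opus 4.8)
The plan is to mirror the argument behind the earlier Proposition on coalition stability in offRFW$^2$M, replacing expected utilities and risk constraints with the practical utilities of (\ref{eq:49_n2}) and the feasibility constraints (\ref{equ. PF MU C4a}) and (\ref{equ. PF MU C5a}) that govern onEBW$^2$M. First I would verify the intra-coalition participation condition: every member $u_i\in\bm{c}_k$ attains a utility above the tolerable threshold $u^\mathsf{(s)}_\mathsf{\min}$. This follows because the preference list $\overrightarrow{L^\mathsf{(s)}_k}$ built in Alg. 2 only contains contract tuples that are feasible under (\ref{equ. PF MU C5a}), and because the coalition bid-update loop of Alg. 2 (the online analogue of line 24 in Alg. 1) never raises the per-member payment $\dot{\mathbbm{c}}^\mathsf{(s),Pay}_{k,j}/|\bm{c}_k|$ above the per-member valuation $V^\mathsf{(s),max}_{k,j}$; hence the realized per-member share of the utility in (\ref{eq:49_n2}) is nonnegative and meets the threshold, as also guaranteed by the individual-rationality of onEBW$^2$M (Proposition \ref{Prop 14}).

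Next I would establish the no-worse-than-solo condition: each $u_i\in\bm{c}_k$ obtains at least as much utility inside the coalition as it would by trading individually. The key facts are the coalition value $|\bm{c}_k|V^\mathsf{(s),max}_{k,j}$ and the uniform payment split $p^\mathsf{(s)}_{i,j}=\dot{\mathbbm{c}}^\mathsf{(s),Pay}_{k,j}/|\bm{c}_k|$: a member enjoys the full value $V^\mathsf{(s),max}_{k,j}=\max_{u_i\in\bm{c}_k}V^\mathsf{(s)}_{i,j}\ge V^\mathsf{(s)}_{i,j}$ while bearing only a $1/|\bm{c}_k|$ fraction of the price it would need to secure that value alone. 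Comparing term by term against the solo utility in (\ref{eq:49_n2}) — a (weakly) larger value minus a (weakly) smaller payment — yields the inequality. I would also handle the degenerate realization in which only one member participates, where the per-member cost coincides with the solo cost and the inequality still holds (with equality). Combining the two conditions gives stability of $\bm{c}_k$ in the sense of the conditional Proposition on sensing-coalition stability in onEBW$^2$M, and since Alg. 2 always returns a matching satisfying them, onEBW$^2$M ensures stability of every sensing coalition.

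The main obstacle I foresee is the online-specific edge case: unlike offRFW$^2$M, onEBW$^2$M reasons with realized rather than expected participation, so I must argue that the denominator $|\bm{c}_k|$ fixed in the temporary contract still protects every member when the realized coalition is smaller, and that the representative-MU selection (the member attaining $V^\mathsf{(s),max}_{k,j}$) remains well defined under every realization of the participation set. I expect this to require only a short case split on that set, together with the termination of Alg. 2 (Proposition \ref{Prop 19}), and no heavy computation.
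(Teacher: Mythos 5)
Your proposal is correct and follows essentially the same route as the paper's own proof: it establishes the threshold condition via the algorithm's feasibility/bid-update logic (line 24 of Alg.~2) and the no-worse-than-solo condition via the shared payment $\dot{\mathbbm{c}}^\mathsf{(s),Pay}_{k,j}/|\bm{c}_k|$ against the full coalition value $V^\mathsf{(s),max}_{k,j}\ge V^\mathsf{(s)}_{i,j}$, including the degenerate single-participant case. Your version is merely more explicit than the paper's (which states the cost/profit-sharing argument in one sentence), so no substantive difference to report.
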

\begin{proof}
	Due to line 24 in Alg. 2, each MU \( u_i \) in the sensing coalition \( \bm{c}_k \) ensures its utility to exceed \( u^\mathsf{(s)}_\mathsf{\min} \). Furthermore, MUs within a coalition share both costs and profits, leading to a lower utility per MU compared to trading individually. Therefore, we can conclude that joining coalition $\bm{ c}_k$ will not result in a lower utility than trading as an individual.
\end{proof}

\begin{Prop}
	(Weak Pareto optimality of onEBW$^2$M) The proposed onEBW$^2$M provides a weak Pareto optimality.
\end{Prop}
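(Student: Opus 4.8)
The plan is to mirror the proof of weak Pareto optimality for offRFW$^2$M, replacing expected utilities by the realized (practical) utilities in (\ref{eq:49_n})--(\ref{eq:50_n2}) and the offline matchings $\varphi^\mathsf{(c)},\varphi^\mathsf{(s)}$ by their online counterparts $\nu^\mathsf{(c)},\nu^\mathsf{(s)}$. First I would recall the structural fact underlying Alg.~2: every client proposes greedily down its preference list (the online analogues of Definitions~\ref{def 5} and \ref{def 6}, now under constraints (\ref{equ. PF MU C4a})--(\ref{equ. PF MU C5a})), and every BS $s_j$ uses the DP step to retain a feasible subset of the received proposals that maximizes $\bm{\mathcal{F}^\mathsf{S^\prime}}$ subject to its residual-capacity constraints (\ref{equ. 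PF BS C3a})--(\ref{equ. PF BS C4a}). Consequently, at termination no client can strictly raise its own utility by switching to a higher-ranked alternative solution, and no BS can strictly raise its own utility by swapping its served set for another feasible one.

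Second, I would argue by contradiction. Suppose some other feasible matching $\nu^\prime$ strictly increases the social welfare, i.e., the sum of the practical utilities in (\ref{eq:49_n})--(\ref{eq:50_n2}). Since individual rationality (Proposition~\ref{Prop 14}) ensures participation never hurts, a strictly larger total must be realized through at least one client $\mathsf{a}$ being served more favorably by some BS $s_j$ under $\nu^\prime$ than under the output matching. Then $s_j$ either evicts a lower-utility subset of its current clients to admit $\mathsf{a}$ --- a Type~1 blocking pair satisfying (\ref{Nequ. 88}) --- or it has enough residual resources to admit $\mathsf{a}$ outright --- a Type~2 blocking pair satisfying (\ref{Nequ. 90}). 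Either case contradicts Proposition~\ref{Prop 22} together with its communication-service counterpart, which show that onEBW$^2$M admits no blocking pair of either type; equivalently, it contradicts the strong stability established in Proposition~\ref{Prop 24}. Hence no welfare-improving $\nu^\prime$ exists and onEBW$^2$M is weakly Pareto optimal.

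The step I expect to be the main obstacle is the decomposition that links a \emph{global} reallocation $\nu^\prime$ --- which may simultaneously re-route several clients across several BSs and re-assign resources among several coalitions --- to a \emph{single} local blocking pair, since the blocking-pair definition involves only one client and one BS set. As in the offRFW$^2$M argument, this is handled by selecting a participant whose utility strictly rises under $\nu^\prime$ and tracing the resources now serving it back to the BS that provides them; the one extra remark needed for coalitions is that payment, compensation, and value are split equally (a linear rule), so a strict gain at the coalition level is equivalent to a strict gain for its representative MU, making the coalition-level inequalities (\ref{Nequ. 88})--(\ref{Nequ. 90}) exactly the objects to invoke. Once this observation is in place, the contradiction closes essentially verbatim as in the proof for offRFW$^2$M.
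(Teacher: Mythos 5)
Your proposal takes essentially the same route as the paper's own proof: both argue that any matching yielding strictly higher social welfare would require some client--BS pair to form a Type~1 or Type~2 blocking pair, which contradicts the no-blocking-pair results (Propositions~\ref{Prop 22} and \ref{Prop 24}), and both reduce the sensing case to the communication case by the same stability argument. Your version is in fact slightly more explicit than the paper about the delicate step of passing from a \emph{global} welfare-improving reallocation to a \emph{single} local blocking pair (a step the paper glosses over entirely), so no correction is needed.
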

\begin{proof}
Reviewing our design of onEBW$^2$M, each participant (e.g., client, BS) makes decisions according to its preference list to determine the trading counterpart and the specific terms of the temporary contract. If the alternative choice ranks higher in the participant's preference list, they will switch their matching target and contract in the following round. Such a switch indicates that returning to the previous choice would not result in a higher utility. For an MU \( u_i \), if there exists a BS \( s_j \) that can offer a higher utility than its currently matched BS, \( u_i \) and \( s_j \) are more inclined to establish a matching relationship. This, however, forms a blocking pair. Since Proposition \ref{Prop 24} confirms that our proposed onEBW$^2$M is stable and free of blocking pairs, there is no possibility of Pareto improvement when the procedure of matching \( \nu^\mathsf{(c)} \) terminates. Similarly, we can infer that there is no Pareto improvement in matching \( \nu^\mathsf{(s)} \) (e.g., Propositions \ref{Prop 22} and \ref{Prop 24}). In conclusion, the onEBW$^2$M game we study is said to be weak Pareto optimal.
\end{proof}

\end{document}